\newcommand{\algorithmfootnote}[2][\footnotesize]{%
	\let\old@algocf@finish\@algocf@finish
	\def\@algocf@finish{\old@algocf@finish
		\leavevmode\rlap{\begin{minipage}{\linewidth}
				#1#2
			\end{minipage}}%
		}%
	}
	\newcommand{\topbottombraced}[2]{%
  \raise.5ex\vtop{
    \vbox{%
      \hbox{$\left\{\begin{tabular}{@{}l@{}}#1\end{tabular}\right\}$}
      \vskip1pt
    }
    \vbox{%
      \vskip1pt
      \hbox{$\left\{\begin{tabular}{@{}l@{}}#2\end{tabular}\right\}$}
    }
  }%
}
\let\oldnl\nl
\newcommand{\nonl}{\renewcommand{\nl}{\let\nl\oldnl}}
\useunder{\uline}{\ul}{}
\patchcmd\TPTdoTablenotes{%
	\TPTnoteSettings
}{%
	\TPTnoteSettings
	\setlength{\itemsep}{2ex}%
}{}{\errmessage{Patching \noexpand\TPTdoTablenotes failed}}
\patchcmd\TPTdoTablenotes{%
	\TPTnoteSettings
}{%
\TPTnoteSettings
\setlength{\itemsep}{2ex}%
}{}{\errmessage{Patching \noexpand\TPTdoTablenotes failed}}
\patchcmd{\NAT@test}{\else \NAT@nm}{\else \NAT@nmfmt{\NAT@nm}}{}{}
\DeclareRobustCommand\citepos
	\let\NAT@nmfmt\NAT@posfmt
\let\NAT@ctype\z@\NAT@partrue
\let\NAT@orig@nmfmt\NAT@nmfmt
\def\NAT@posfmt#1{\NAT@orig@nmfmt{#1's}}
\def\sym#1{\ifmmode^{#1}\else\(^{#1}\)\fi}
\newcommand{\pdv}[2]{\frac{\partial #1}{\partial #2}}
\newcommand{\ubar}[1]{\underaccent{\bar}{#1}}
\declaretheorem[name=Proposition]{prop}
\declaretheorem[name=Definition]{defn}
\renewcommand\footnotelayout{
  \advance\leftskip 0.4in
  \advance\rightskip 0.4in
 } 
\begin{document}
\setlength{\droptitle}{-7em}   
\title{The Economics of Orbit Use: \\ Open Access, External Costs, and Runaway Debris Growth\thanks{We are grateful to Dan Kaffine, Jon Hughes, Martin Boileau, Miles Kimball, Alessandro Peri, Matt Burgess, Sami Dakhlia, S{\'e}bastien Rouillon, Martin Abel, David Munro, Derek Lemoine, and many seminar participants for helpful comments and feedback. We are especially grateful to Aditya Jain for excellent research assistance. Funding for this research was generously provided by Center for Advancement of Teaching and Research in Social Science and the Reuben A Zeubrow Fellowship in Economics at CU Boulder. All errors are our own.}}

\author{
        Akhil Rao\\ \small Middlebury College\thanks{Department of Economics, Warner Hall, 303 College Street, Middlebury College, 05753; akhilr@middlebury.edu}
        \and 
        Giacomo Rondina 
        \\ \small UC San Diego\thanks{Department of Economics, 9500 Gilman Dr, La Jolla, CA 92093; grondina@ucsd.edu}
        }

\maketitle
\vspace{-0.75cm}
\abstract{ 
We present a dynamic physico-economic model of Earth orbit use with endogenous satellite collision risk to study conditions under which debris-producing collisions between orbiting bodies result in debris growth that may render Earth's orbits unusable, an outcome known as Kessler Syndrome.  We characterize the dynamics of objects in orbit under open access as well as when external costs---the impact of an additional satellite launch on the collision risk faced by all satellites---are internalized, and we show that Kessler Syndrome can emerge in both cases. Finally, we show that once the economic incentives of satellite launching are modeled, for Kessler Syndrome to emerge, autocatalytic debris growth is essential. In our main calibration, Kessler Syndrome can emerge anytime between the year 2040 and the year 2184, with the precise date being very sensitive to the calibration of autocatalytic debris growth parameters.} 

JEL codes: Q20, Q54, Q57, D62 \par
Keywords: open-access commons, satellites, space debris, dynamic externality, tipping point

\newpage
\section{Introduction}

Satellite services are increasingly important in the modern world. As humans launch more satellites, the risk of collisions between orbiting objects increases. Such collisions can destroy satellites and produce orbital debris, further increasing the risk of future collisions. The worst-case scenario is runaway debris growth, known as Kessler Syndrome, wherein the production of debris due to collisions between orbiting bodies becomes self-sustaining and irreversible. Kessler Syndrome will render valuable regions of orbital space unusable for decades, centuries, or longer. To make matters worse, international treaty law places orbital space under an open-access regime: anyone can place a satellite in any orbit they choose. 

How will open access affect orbital debris accumulation, collision risk, and the occurrence of Kessler Syndrome? What are the external costs that create a coordination problem under open access? Can Kessler Syndrome result even if external costs are fully internalized?  To address these questions, we study a dynamic model that combines the economic intertemporal decision problem of satellite operators with the physics of orbiting objects. 

We isolate a key economic condition that determines the optimal satellite launch rate by equating the cost of launching to its return discounted by an ``effective discount rate,'' which reflects the time-varying collision risk. As satellites and debris jointly evolve over time, so does the collision risk and the resulting launch rate, establishing a feedback effect where the economic incentives shape the physical dynamics. Under open access, satellite operators do not internalize the full impact of their launching decisions on the collision risk, leading to an inefficiently high launch rate. We show that the resulting external cost can be represented as an upward adjustment to the effective discount rate that a social planner would implement in choosing the size of an optimal satellite fleet.

Our analysis uncovers three main findings. First, under open access, we show that Kessler Syndrome is an important concern only when debris growth is self-reinforcing (also known as ``autocatalytic growth''), that is when debris-to-debris collisions are a substantial contributor to the growth of objects in orbit. Absent such contribution, the intertemporal considerations in the optimal launching decisions---the core feature of the economic side of our model---typically lead to a non-degenerative path of satellites and debris. Second, we show that Kessler Syndrome can result even when the external costs, represented by the impact of private launching decisions on collision risk and debris growth, are fully internalized. Intuitively, the short-run returns of a larger satellite fleet might dominate the long-run returns of a smaller fleet, which is once again the consequence of the intertemporal tradeoff introduced by the economic model. Third, we show how the timing of Kessler Syndrome under open access varies with the rate of autocatalytic debris growth and the growth rate of satellite returns. According to our calculations, under empirically plausible parameterizations, Kessler Syndrome can occur as early as the year 2040 and as late as the year 2184. The large range underscores the sensitivity of the system to autocatalytic debris growth dynamics once the economic incentives to launch are taken into consideration, which is the central result of our theoretical analysis.

\noindent \textsc{Institutional Background.} \; To motivate the institutional features of our model, we describe some institutional and physical features of orbit use. The Outer Space Treaty of 1967 (OST) is the primary international legal framework governing orbit use. It prevents establishment of orbital property rights, rendering orbital space a fixed facility subject to unrestricted access.\footnote{Specifically, Article 2 of the OST forbids national appropriation or claims of sovereignty over outer space, prohibiting national authorities from establishing orbital property rights \citep{gorove1969}.} Lacking the ability to legally exclude each other from orbital ``slots,'' satellites tend to cluster near valuable slots and risk colliding with each other. Open access thus induces an externality similar to congestion in a high-seas fishery \citep{gordon1954}.\footnote{\citet{ostrom1999} provides some insight into why decentralized orbit management may face challenges. Orbit users are a diverse and international group, ranging from national militaries and intelligence agencies to corporations, universities, and wealthy individuals. Physically excluding potential users from orbital regions without creating additional debris is difficult, and the relevant conflict resolution mechanisms are unclear. National regulatory regimes must also contend with ``launch leakage,'' which has happened at least once already \citep{wapoSats}. \citet{weedenchow2012} discuss some of these issues in more detail.} 

Satellites also produce debris over their lifecycle. Launches leave spent rocket stages and separation bolts in orbit, satellite operations leave paint chips and tools behind, and satellites which are not deorbited or shifted to disposal orbits at the end of their life become debris themselves.
Satellites struck by debris can shatter into hundreds of hazardous debris fragments.\footnote{Objects in orbit move at velocities higher than 5 km/s, so even debris as small as 10 cm in diameter can be hazardous to active satellites. Debris as low as 900 km above the Earth's surface can take centuries to deorbit naturally, while debris at 36,000 km can take even longer \citep{weedenIAC}. Currently, there are more than 2,000 operating satellites in orbit, up to 600,000 pieces of debris large enough to cause satellite loss, and millions of smaller particles that can degrade satellite performance \citep{ailoretal}. } Compounding the problem, collisions between debris objects can generate even more hazardous debris.
Debris accumulation can cause a cascading series of collisions between orbital objects, creating a growing debris field which can render orbital regions unusable and impassable for decades or centuries. This phenomenon is known as ``Kessler Syndrome'' \citep{kessler1}.\footnote{Kessler Syndrome can cause large economic losses, both directly from damage to active satellites and indirectly from limiting access to space \citep{bradleywein2009, schaubetal2015}. Existing estimates of debris growth indicate the risk of Kessler Syndrome is highest in low-Earth orbit (LEO), where it threatens current and future imaging and telecommunications satellites and can reduce access to higher orbits \citep{kessler2, lewis2020understanding}. In the worst-case scenario Kessler Syndrome could completely block human access to space, marking an eventual end to services like GPS and satellite imaging. Long-run disruption of satellite services will make it harder to measure economic activity, reduce weather-related uncertainty, measure environmental degradation and respond to natural disasters, monitor environmental policy compliance, and meet conservation goals \citep{nistGPS, donaldson2016view, sullivan2018using, baragwanath2019detecting, jain2020benefits, cooke2020market, stroming2020quantifying, bernknopf2021earth}.}

\noindent \textsc{Connection to the Literature.} \; The interaction between the physical dynamics and the intertemporal economic incentives to launching are underexplored in the literature on orbit use.\footnote{\cite{wienzierl2018} highlights several issues in the development of a space economy, from space debris to holdup problems and market design, which economists are uniquely positioned to address. Our focus is solely on the \emph{environmental} problems in orbit associated with existing institutions.} We contribute to the literature on decentralized and optimal responses to interacting dynamic externalities by showing how profit-maximizing responses to congestion can limit pollution production, and physico-economic conditions under which profit maximization leads to runaway pollutant accumulation.\footnote{While open access problems have been well-studied in terrestrial settings such as fisheries, forests, climate, oil fields, traffic, and invasive species management, open access to orbital resources is not as well understood. Though results from these other settings provide some helpful intuition, open access and the orbital mechanics governing collision risk and debris production create unique physico-economic feedback loops.} Though prior economic literature on orbit use has identified and quantified the externalities driving collisional rent dissipation \citep{aac2015, rouillon2019, raoetal2020, bealetal2020}, the channels through which the externality operates have not been explored in detail. Further, the economics of Kessler Syndrome remain understudied, with prior literature either explicitly neglecting it (e.g. \citet{rouillon2019, raoetal2020}) or analyzing it under conditions which rule out its occurrence in equilibrium (e.g. \citet{aac2018}).\footnote{Specifically, \citet{aac2018} consider a model environment in which open access will never cause Kessler Syndrome, since orbits become economically unprofitable before they become physically unusable (``economic Kessler Syndrome''). The differences between our results are due to definitions of Kessler Syndrome and physical generality in the model setup. \citet{aac2018} consider a definition of Kessler Syndrome where satellites are destroyed with probability one (``unusable orbits'') and disallow collisions between debris objects. We define Kessler Syndrome as states from which the debris stock diverges to infinity (``runaway debris growth'') and allow collisions between debris objects. The latter feature is critical for understanding orbit-use dynamics, as collisions between debris are becoming increasingly likely and are expected to dominate the long-run dynamics of the orbital environment \citep{wapoCollision2020, lewis2020understanding}. Our definition encompasses the one in \cite{aac2018} while allowing Kessler Syndrome to occur over time due to dynamic feedbacks---analogous to crossing a critical dispensation threshold in a fishery with Allee effects.} Our framework unifies existing research on orbit use, showing model features necessary to obtain different outcomes and enabling calibration and quantitative analysis. We also identify an important difference between resource-use dynamics when the relevant capital stock is provided by nature (e.g. fish in fisheries) and when it is provided by humans (e.g. satellites in orbits): while increases in the discount rate make equilibrium and optimal resource collapse more likely for natural capital, they have the opposite effect for artificial capital.

The rest of the paper is organized as follows. In section \ref{sec:simple_econ} we present a simple model that develops the essential economic intuition for orbit use. In section \ref{sec:physics_only} we present a physical model which captures the key elements of orbits as a resource and shows how Kessler Syndrome can occur. Section \ref{sec:full-econ-model} develops the dynamic economic model of open-access and optimal orbit use in the general physical setting, decomposing the channels through which the externality operates, and illustrating how Kessler Syndrome shapes the dynamics of economic orbit use. We connect our results to the literature on natural resource exploitation and conclude in section \ref{sec:discussion}. Proofs, derivations, and extensions are shown in the Appendix.

\section{A simple model of orbit use}
\label{sec:simple_econ}

We begin by presenting a stylized model of satellite launch and operation with the goal of isolating the interplay between the economic and physical mechanisms in orbital use. Time is finite and divided into three periods, $t=0,s,l$. In period $0$ launches take place; in period $s$ and $l$ the satellites operate and produce revenue. Between periods, satellites face the risk of collision, which crucially depends on the total number of satellites in orbit. 

\paragraph{Orbit:} Consider a spherical shell (``orbit'') around the Earth be divided into a finite number of slots, each slot able to hold one satellite. Once in orbit, due to unpredictable perturbations, satellites can collide with each other or with orbiting debris. Upon collision, the satellites become inoperable but remain in orbit in the form of debris. 

Let $X\in\mathbb{R}$ denote the objects in orbit at a given period, whether operating satellites or debris.\footnote{In our analysis, we treat each individual satellite launch as an infinitesimal increment to $X$. None of our results depend critically on this, while their mathematical characterization is substantially simplified. } The probability of a satellite surviving into the next period depends negatively on the total objects in orbit and is denoted by $q(X)\in[0,1]$, with $q'(X)\leq0$. We assume that there exists a $\bar{X}<\infty$ such that $q(X)=0$ for all $X\geq\bar{X}$. Intuitively, $\bar{X}$ corresponds to the number of orbiting objects at which the orbit becomes unusable because any satellite would not survive after launch with probability 1. 

We assume that launches can take place only in period $0$, and the stock of objects in orbit at the end of period $0$ always corresponds to the satellite launched at time 0, which we denote by $S\in\mathbb{R}$.\footnote{The assumption on launches is made for simplicity. In the full dynamic model, launches can happen in any period. } The probability of a satellite operating in the period $s$ corresponds to its probability of not colliding with any objects, so $q(S) S$ are the satellites that operate in the short run. When a satellite collision occurs, either with another satellite or debris, fragments are produced so that more than one individually-orbiting objects result. We assume that fragments in excess of $1$ form at the rate $\sigma > 0$.\footnote{For simplicity, in this setting we do not explicitly model debris-debris collisions or natural decay of debris from orbit, though we incorporate both features separately in our full model.} The satellites that collide between periods 0 and $s$ are $(1-q(S))S$, and they generate $(1+\sigma) (1 - q(S))S$ units of debris. The number of orbiting objects at the end of period $s$, which we denote by $g(S)$, is thus the sum of the satellites operating in period $s$ and the debris formed from the collisions. Summing the two and rearranging one obtains
\begin{align}
    g(S) \equiv S + \sigma (1 - q(S))S.\label{eqn:LOMS}
\end{align}
The formation of fragments from collision, i.e. $\sigma>0$, implies that the total objects in orbit increase between the end of period $0$ and the end of period $s$, that is $g(S)>S$. Assuming that collision probabilities are independently distributed, the probability that a satellite operates in period $l$, evaluated at the time of its launch, is thus $q(S)q(g(S))$.

\paragraph{Economic returns:} The launch of a satellite requires a unitary cost, $F>0$, to be incurred at time 0. The satellite, conditional on being operational, generates a constant per-period return, $\pi>0$, in periods $t=s,l$, discounted at the per-period rate $r$. The model can be generalized to allow for the return $\pi$ to be a function of $S$, and we do so in Appendix \ref{appendix:downward-demand} for the simple model, and in the simulation exercise in Section \ref{sec:simulation} for the full dynamic model. While the dependence of $\pi$ on $S$ has implications for the quantitative simulation, the qualitative features of our results do not critically depend on it.

Let $V(S)$ denote the expected present value of the lifetime net return from a satellite when $S$ satellites are launched. This can be written as
\begin{align}
    V(S) &= -F + \frac{\pi q(S)}{1+r} +  \frac{\pi q(S)q\left(g(S)\right)}{(1+r)^2}.\label{eqn:EXP}
\end{align}
The structure of the discounted expected return at time 0 in equation \eqref{eqn:EXP} is intuitive: the expected return in period $s$ is $\pi q(S)$, while in the long run it is $\pi q(S)q(g(S))$, both appropriately discounted. The marginal expected profits of an additional satellite are decreasing in the total number of satellites launched. Intuitively, an additional satellite launched increases the number of objects in orbit in both the short-run and the long-run, hence decreasing the survival probability in both periods. Additionally, equation \eqref{eqn:LOMS} suggests that the long-run negative impact of an additional satellite on the survival probability is increasing in $\sigma$.

\paragraph{Equilibrium with open access:} We consider first the equilibrium choice of satellite launches, $S$, under open access. It is assumed that slots cannot be assigned property rights---satellite operators own the satellites they launch, but not the slots they occupy. Lacking slot rights, operators cannot exclude each other from taking unoccupied slots despite the collision risks posed. 

Under open access, launches take place until the expected profits are equal to zero, i.e. $V(S)=0$. Using equation \eqref{eqn:EXP} with equation \eqref{eqn:LOMS}, the open-access equilibrium launches, $\hat{S}$, correspond to the level at which the private marginal cost of a launch is equal to the expected marginal revenue:
\begin{equation}
   V(\hat{S})=0 \implies \underbrace{F}_{\text{\shortstack{Private cost\\of a satellite\\(PC)}}} = ~~~~~
   \underbrace{\pi \frac{q(\hat{S})}{1+r}\left( 1 + \frac{q(g(\hat{S}) )}{1+r}\right)}_{\text{\shortstack{Expected revenue\\from a satellite\\(ER)}}}. \label{eqn:OPA}
\end{equation}
Since $q(0)=1$ and $g(0)=0$, an open-access equilibrium with positive launches, i.e. $\hat{S}>0$, occurs whenever the cost of launching is smaller than the expected return of the first satellite launched, i.e. $F \leq \pi \frac{1}{1+r}\left( 1 + \frac{1}{1+r}\right)$. We assume this condition holds. Note that in choosing whether to launch, an individual agent takes $S$ as given, and thus does not account for the impact that their additional launch will have on the probability of survival of the entire satellite fleet. 

Figure \ref{fig:simple-model-mbmc} shows the diagram representation of the economic problem of a satellite operator. The $PC$ curve represents the private cost of launching, which is constant in our model and equal to $F$. The $ER$ curve represents the expected lifetime return of one satellite, which corresponds to the right-hand side of \eqref{eqn:OPA}. The expected return is decreasing in the number of satellites in orbit, $S$, due to the lower survival probability when $S$ is larger. The difference in the $ER$ curves between panels A and B reflects the higher $\sigma$ ($1.25$ vs $20$): for any level $S$, the increase in debris formation increases the objects in orbit, and thus reduces the survival probability between the short and the long run, shifting the $ER$ curve downward. The point $OA$ indicates in both panels the equilibrium $\hat{S}$ that satisfies equation \eqref{eqn:OPA}. As expected, the open-access equilibrium launches are lower when $\sigma$ is larger.

\begin{figure}[t] 
	\centering
	\includegraphics[width=0.8\textwidth]{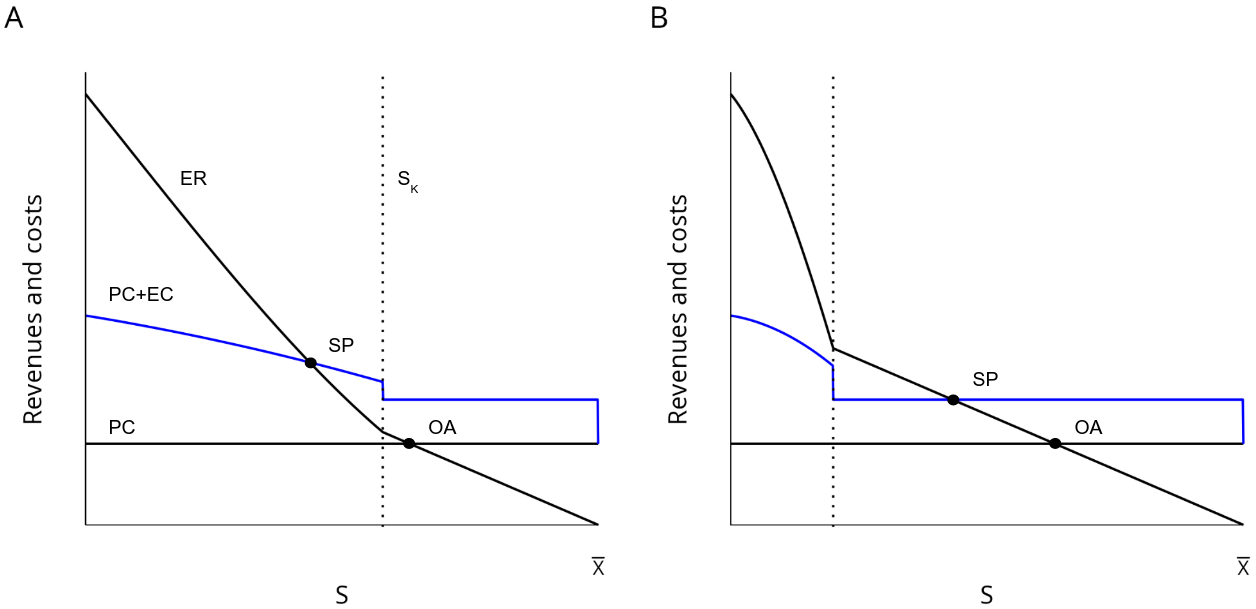} 
	\caption{\footnotesize  Equilibrium diagram for the simple model. Curves are labeled as in equations \ref{eqn:OPA} and \ref{eqn:PLN}. $ER$ shows the expected revenue of one additional satellite as a function of $S$, $PC$ shows the private cost of one additional satellite, and the dashed curve $PC + EC$ shows the social cost of one additional satellite (i.e. the sum of private and external costs). The survival probability is specified as $q(X) = 1- X/\bar{X}$. The parameters are set to $\pi = 1, r = 0.05, F = 0.35, \bar{X} = 5$, with $\sigma = 1.25$ in panel A and $\sigma = 20$ in panel B. The dotted vertical lines show the Kessler threshold. Under the chosen parameterization, open access causes Kessler Syndrome in both panels A and B, while the social planner only causes Kessler Syndrome in panel B. 
	}
	\label{fig:simple-model-mbmc}
\end{figure}

\paragraph{Social planner:} Now consider the case when the orbit is operated by a single agent, which we name the social planner. We assume that the objective of the planner is choose a launch level $S^*$ so as to maximize the expected net present value of the satellite fleet, $S V(S)$. This requires the first-order condition
\begin{align}
    S^*:V(S^*) = \underbrace{-\frac{\pi}{1+r}\left[q'(S^*)\left(1+\frac{q(g(S^*))}{1+r}\right) + q(S^*)\frac{q'(g(S^*))g'(S^*)}{1+r}\right]}_{\text{External cost of a satellite (EC)}}.\label{eqn:PLN}
\end{align}
Compared to the open-access condition, $V(\hat{S})=0$, equation \eqref{eqn:PLN} shows that the planner internalizes the effect of an additional satellite (represented here by the infinitesimal change $dS$) by considering its impact on the collision probabilities faced by the rest of the fleet, which is the external cost term $EC$ in the equation. The external cost has two parts: the first represents the negative impact on the survival probability in the short-run (recall $q'(S)<0$), the second represents the negative impact on the survival probability between the short-run and the long-run, captured by the term $q'(g(S))g'(S)$. In particular, the planner correctly anticipates that a collision in the short run generates debris (represented by the factor $g'(S)>0$), which impacts the probability of survival between the short and the long run, when $\sigma>0$. The simple model offers an important insight: the external cost of a satellite consists of its impact on survival probabilities from that point onward, \textit{taking into account the physical dynamics of objects in orbit,} summarized here by the law of motion $g(S)$. 
Since $V(S)$ is decreasing in $S$, and the term in square brackets in equation \eqref{eqn:PLN} is always negative, it follows that $S^*<\hat{S}$. The diagram in Figure \ref{fig:simple-model-mbmc} shows in blue lines the total social cost of an additional satellite, the $PC+EC$ curve. The curve is always above the $PC$ curve because $EC>0$ at all times, and it is downwards sloping because the external cost of a satellite declines as the orbit saturates.\footnote{The discountinuity of the $PC+EC$ curve is the consequence the specification we have chosen for $q(X)$, which is not differentiable at $\bar{X}$.} The planner's solution is denoted by the intersection $SP$ which corresponds to a lower number of satellites compared to the open access case. By internalizing the externality imposed by each satellite on the probability of collision, the planner chooses a smaller fleet size than open access. 

\paragraph{Kessler Syndrome:}
A central question in our analysis is how the consideration of economic incentives interacts with the physics of orbiting objects in determining the level of orbit congestion. Specifically, we focus on a particularly undesirable congestion outcome known as Kessler Syndrome. Kessler Syndrome is defined as a situation in which the current level of objects in orbit implies that the orbit will eventually become unusable in finite time. This level is sometimes referred to as the ``runaway growth threshold.'' In the context of our model, it corresponds to a level of satellite launches that keep the orbit usable in the short run, i.e. $S<\bar{X}$, while also making the orbit unusable in the long run, i.e. $g(S)\ge\bar{X}$. The idea can be formalized by considering a threshold of objects in orbit in the short run, $S_K$, such that for any level above the threshold, there is a zero probability of satellite survival in the long run. Using equation \eqref{eqn:LOMS}, the threshold is thus defined by
\begin{equation}
S_K:g(S_K)= \bar{X}.\label{eqn:KESS}
\end{equation}
An important question that we consider is whether it is possible for Kessler Syndrome to occur in the open-access equilibrium or in the social planner's allocation. Towards answering that question, we first note that in our simple model it must be that $S_K<\bar{X}=g(S_K)$ for Kessler Syndrome to be relevant at all. That is, the critical Kessler threshold must be below the level of objects in orbit that make the orbit immediately unusable. If not, Kessler Syndrome could never occur as an equilibrium or optimum outcome: since satellites do not earn returns until the short run, rendering the orbit unusable in that period guarantees that the fixed cost of launching will not be recovered. A further implication of this argument is that $S_K<g(S_K)$, which requires a positive net debris generation rate $\sigma>0$. Intuitively, a runaway growth dynamic can happen only when a collision results in more debris objects than the number of objects involved in the collision, i.e. $\sigma > 0$. 
To see this, suppose that we specify, $q(X) = 1- X/\bar{X}$, then $g(S) = S\left(1+ \sigma S/\bar{X} \right)$, and using \eqref{eqn:KESS} one can show that $S_K = \left(\frac{\sqrt{1+4\sigma}-1}{2\sigma}\right)\bar{X} .$ Note that $S_K<\bar{X}$ for $\sigma>0$, and $\lim_{\sigma\rightarrow 0} S_K = \bar{X}$, so that Kessler Syndrome would only be possible when $\sigma>0$. A more general version of this condition applies in the more-realistic setting of our full model, where debris result from collisions and also decay from orbit over time. 

The Kessler Syndrome threshold $S_K$ depends on the physics of orbiting objects, summarized here by $q(X)$ and $\sigma$. However, whether the threshold is crossed as a consequence of launching choices depends on the economics of orbit use. The following Proposition traces the connection between the physics and the economics that is at the center of our analysis.

\begin{restatable}[Kessler Syndrome]{prop}{simpleKessler}
\label{prop:simple-kessler}
Let the dynamic model of objects in orbit be characterized by equation \eqref{eqn:LOMS} with $\sigma > 0$, and the Kessler threshold $S_K$ be defined as in equation \eqref{eqn:KESS}, then
\begin{enumerate}
\item under the open-access equilibrium, Kessler Syndrome occurs if
\begin{equation}
    \frac{\pi}{1+r}q(S_K) \geq F; \label{eqn:KS_OA}
\end{equation}
\item under the social planner allocation, Kessler Syndrome occurs if
\begin{equation}
    \frac{\pi}{1+r}\left[q(S_K) + S_K q'(S_K)\right] \geq F. \label{eqn:KS_PL}
\end{equation}
\end{enumerate}
\end{restatable}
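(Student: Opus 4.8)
The plan is to observe that, in any allocation, ``Kessler Syndrome occurs'' is equivalent to the realized launch level crossing the runaway threshold, i.e. $S \geq S_K$: from \eqref{eqn:LOMS} the map $g$ is strictly increasing (indeed $g'(S)\geq 1$), so $g(S)\geq\bar{X}$ holds exactly when $S\geq S_K$. The single fact that drives both parts is that at the threshold the long-run survival factor collapses, $g(S_K)=\bar{X}\Rightarrow q(g(S_K))=0$, and more generally $q(g(S))=0$ for every $S\in[S_K,\bar{X})$. I would first record the two consequences of this for \eqref{eqn:EXP}: $V(S_K) = -F + \pi q(S_K)/(1+r)$, and on the entire Kessler branch $V(S) = -F + \pi q(S)/(1+r)$ with the discounted long-run term identically zero. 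Since $\sigma>0$ gives $S_K<\bar{X}$, we have $q(S_K)>0$, so both conditions are non-vacuous.

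For part (1), open access, the argument is then immediate. Launches satisfy $V(\hat{S})=0$ and $V$ is strictly decreasing in $S$, so $\hat{S}\geq S_K$ if and only if $V(S_K)\geq V(\hat{S})=0$. Substituting the collapsed expression for $V(S_K)$ turns this into $\pi q(S_K)/(1+r)\geq F$, which is exactly \eqref{eqn:KS_OA}. One also notes $\hat{S}<\bar{X}$, since $V(\hat{S})=0$ forces $q(\hat{S})>0$; thus $\hat{S}\in[S_K,\bar{X})$ is genuine Kessler Syndrome---usable in the short run, lost in the long run.

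For part (2), the planner maximizes the fleet value $W(S)\equiv S\,V(S)$, and Kessler Syndrome means the maximizer satisfies $S^*\geq S_K$. The natural test is the sign of the marginal fleet value as the fleet reaches the threshold from the Kessler branch, where $V(S)=-F+\pi q(S)/(1+r)$; differentiating gives
\[
W'(S_K^{+}) = -F + \frac{\pi}{1+r}\big[\,q(S_K) + S_K\, q'(S_K)\,\big],
\]
so condition \eqref{eqn:KS_PL} is precisely $W'(S_K^{+})\geq 0$. The economic reading is that if the marginal social value of an additional satellite is still nonnegative once the fleet has grown to the runaway threshold, the planner optimally launches past it. Because $q'(S_K)<0$, the bracket is smaller than $q(S_K)$, so \eqref{eqn:KS_PL} is strictly harder to satisfy than \eqref{eqn:KS_OA}: the planner is less prone to Kessler Syndrome than open access, consistent with the planner inducing it only for sufficiently large $\sigma$.

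The step I expect to be the main obstacle is turning the planner's condition into a genuinely \emph{sufficient} condition for a global optimum beyond $S_K$. Unlike $V$ in the open-access case, $W$ is not globally concave: at $S_K$ it has an upward (convex) kink, because the left derivative picks up the extra negative term $S_K\,\frac{\pi}{(1+r)^2}\,q(S_K)\,q'(\bar{X}^{-})\,g'(S_K)<0$ from differentiating the product $q(S)q(g(S))$ on the non-Kessler branch, so $W'(S_K^{-})<W'(S_K^{+})$. Hence $W'(S_K^{+})\geq 0$ does not by itself exclude a competing local maximum strictly below $S_K$, where the long-run returns are still active. To close the argument I would show that $W$ is single-peaked on the Kessler branch $[S_K,\bar{X})$---which holds whenever $S\,q(S)$ is concave, as in the linear specification $q(X)=1-X/\bar{X}$ used throughout---so that $W'(S_K^{+})\geq 0$ delivers an interior Kessler-branch optimum with value at least $W(S_K)$, and then verify that no non-Kessler allocation dominates it. This last comparison is the delicate part, and is where a regularity condition on $q$ (or the explicit calibrated functional form) must be invoked.
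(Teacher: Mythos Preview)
Your approach is essentially the paper's: collapse the long-run term via $q(g(S))=0$ on the Kessler branch, then for open access use strict monotonicity of $V$ so that $\hat{S}\geq S_K\iff V(S_K)\geq 0$, and for the planner read off the first-order condition on the Kessler branch. Your open-access argument matches the paper's exactly.

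Where you go further than the paper is in worrying about whether \eqref{eqn:KS_PL} is genuinely \emph{sufficient} for the planner's global optimum to lie at or beyond $S_K$. That worry is legitimate, and in fact the paper's own proof does not address it. The paper argues only that if the planner causes Kessler Syndrome then the first-order condition on the Kessler branch reduces to $F=\frac{\pi}{1+r}[q(S^*)+S^*q'(S^*)]$, and that this equation ``can be satisfied'' for some $S^*\geq S_K$ if and only if $F\leq\frac{\pi}{1+r}[q(S_K)+S_Kq'(S_K)]$ (invoking monotonicity, though strictly what is needed is that $q(S)+Sq'(S)$ is decreasing, i.e.\ concavity of $Sq(S)$, which holds under the linear $q$ used throughout). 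The paper does not verify that the resulting Kessler-branch critical point dominates any interior maximum on $[0,S_K)$; it treats existence of a first-order solution on the Kessler branch as equivalent to the optimum being there. Your observation about the upward kink in $W$ at $S_K$ is correct and shows exactly why this step is nontrivial. So your ``main obstacle'' is real, but you should know that the paper's proof simply does not engage with it; your proposed closure via single-peakedness under the calibrated $q$ is already more than what the paper supplies.
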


\noindent The proof is shown in the Appendix. Condition \eqref{eqn:KS_OA} states that under open access, equilibrium launches at time $0$ exceed $S_k$ and trigger the Kessler Syndrome in between the short and the long run when the expected short-run return is higher than the cost of launching. At the individual level, it is optimal to launch because expected profits are positive, even if just accruing in the short-run. Condition \eqref{eqn:KS_PL} states that for the social planner, the expected return of the additional satellite launched at $0$ has to be corrected for the external cost imposed on the probability of survival of all the existing satellites, which is represented by the negative term $S_K q'(S_K)$. Proposition \ref{prop:simple-kessler} thus shows that Kessler Syndrome is possible under both open access and the social planner, but under more stringent conditions for the latter. The diagram in Figure \ref{fig:simple-model-mbmc} shows the Kessler threshold $S_K$ as the dotted vertical line in both panels. In panel A, the open-access equilibrium is above the threshold, while the social planner solution is not. In panel B, both open access and social planner are above the threshold. Comparing the two panels shows how the physics and the economics interact: $S_K$ is lower in panel B because of the impact of a larger $\sigma $ on the debris growth, but, at the same time, the expected revenue and the expected social cost curves are lower, resulting in both outcomes to be above the threshold.

Using again the functional form $q(X) = 1- X/\bar{X}$, the condition for Kessler Syndrome under open access is $F\leq\frac{\pi}{1+r}\frac{1+ 2\sigma-\sqrt{1+4\sigma}}{2\sigma}$, while the condition under the social planner is $F\leq\frac{\pi}{1+r}\frac{1+ \sigma-\sqrt{1+4\sigma}}{\sigma}$. Both conditions are more likely to be satisfied if more debris result from a collision (larger $\sigma$), the launch cost is lower (smaller $F$), or if the present value of profits is larger (larger $\frac{\pi}{1+r}$). 

In summary, the simple model offers two important insights with respect to the Kessler Syndrome. First, Kessler Syndrome requires autocatalytic debris growth---represented by $\sigma>0$---that operates in addition to the choice of how many objects to send in orbit. The reason is that launching intensity endogenously adjusts to the evolution of the collision risk. This is a key distinctive feature of our model compared to physical models with an exogenous launching rate. Second, Kessler Syndrome can happen both under open access and the social planner, which means that the full internalization of the external cost of a satellite is not enough to prevent the orbit from becoming unusable. We next develop a fully dynamic model with a more sophisticated physical and economic structure that allows us to expand on the insights offered by the simple model and formulate quantitative predictions.

\section{Physics of orbit use}
\label{sec:physics_only}

In this section we develop a more realistic physical model of orbit use. Consider a spherical shell around the Earth, say the region between 600-650 km above mean sea level.\footnote{This ``shell of interest'' approach is frequently used in debris modeling, e.g. \cite{rossi1998} and \cite{bradleywein2009}, though higher fidelity models use large numbers of small regions to track individual objects, e.g \cite{liouetal2004}, \cite{lioujohnson2008_soi}, and \cite{lioujohnson2009_2}. Paths which span shells are possible and useful for some applications, but highly elliptical orbits (such as Molniya orbits) are the exception rather than the rule. A Molniya orbit has a low perigee over the Southern Hemisphere and a high apogee over the Northern Hemisphere. Molniya orbits require less power to cover regions in the Northern Hemisphere (e.g. former Soviet Union countries) than geosynchronous orbits, due to the low incidence angles of rays from the Northern Hemisphere to geosynchronous positions. } There are $S_t$ satellites and $D_t$ debris fragments in the shell in period $t$, and each satellite collides with another object (satellites or debris) with probability $L(S_t,D_t)$.\footnote{It is more convenient to parameterize the physical model in terms of the probability of a collision, $L$, rather than the probability of no-collision, $q$, as we have done in the simple model in Section \ref{sec:simple_econ}. It is of course the case that $q=1-L$, and we will use this relationship later in the paper. }
The laws of motion for satellites and debris are specified as:
\begin{align}
\label{eqn:satLoM}
S_{t+1} &= S_t(1-L(S_t,D_t)) + X_t \\
\label{eqn:debLoM}
D_{t+1} &= D_t(1-\delta) + G(S_t,D_t) + mX_t .
\end{align}
The number of active satellites in orbit in period $t+1$, $S_{t+1}$, is equal to the number of launches in the previous period, $X_t$, plus the number of satellites which survived the previous period, $S_t(1-L(S_t,D_t))$. Since we assume active satellites are identical, $L(S_t,D_t)$ represents the proportion of orbiting active satellites which are destroyed in collisions. We assume that the collision probability is twice continuously differentiable, nonnegative, increasing in each argument, and bounded below by 0 and above by 1.\footnote{Satellite operators try to avoid collisions by maneuvering their satellites when possible; the collision probability in this model should be thought of as the probability of collisions which could not be avoided, with easily avoided collisions optimized away. Collisions which could have been avoided but were not due to human error are included in $L$. Implicitly we are assuming operators are imperfect cost-minimizers when maneuvering satellites. Even when operators can maneuver their satellites and are aware of impending collisions, coordination can be challenging and plagued by technical glitches \citep{guardianSpaceXESA}.} 

The amount of debris in orbit in $t+1$, $D_{t+1}$, is the amount from the previous period minus the proportion of debris that decay at rate $\delta>0$, which gives $D_t(1-\delta)$, plus the number of new fragments created in collisions, $G(S_t,D_t)$, plus the amount of debris in the shell created by new launches, $mX_t$. Note that the function $G(S_t,D_t)$ plays the role of the debris formation parameter $\sigma$ in the simple model.\footnote{Satellites eventually cease being productive and their orbits eventually decay without fuel expenditure to maintain their path, features we abstract from in the main model. This abstraction makes some results a little clearer but does not qualitatively affect them. \cite{rouillon2019} includes these features in the main model and derives qualitatively similar results, albeit in a continuous-time setting. We include limited lifetimes, time-varying payoffs and costs, and state-dependent satellite payoffs in our calibrated simulations in Section \ref{sec:simulation}.}  We assume the new-fragment function $G$ is twice continuously differentiable, nonnegative, strictly increasing in each argument, zero when there are no objects in orbit ($G(0,0)=0$), and unbounded above ($\lim_{S \to \infty} G(S,D) = \lim_{D \to \infty} G(S,D) = \infty$).\footnote{While unbounded debris growth is unphysical without an unbounded influx of mass, this is a standard modeling choice in the debris modeling literature focused on the next few decades and centuries. We provide an argument for this choice over these timescales based on the kinetic energy-object size scaling law and empirical data on fragment sizes following collisions between large intact objects in Appendix \ref{appendix:debris_unbounded}.} We also assume the effect of the first satellite or debris fragment on new fragment formation is negligible, i.e. $G_S(0,D) = G_D(S,0) = 0$ (letting subscripts denote partial derivatives). 

For simulations and figures, we use the following functional forms:
\begin{align}
    L(S,D) &= 1 - e^{-\alpha_{SS}S - \alpha_{SD} D} \label{eqn:general-collision-function-example} \\
    G(S,D) &= \beta_{SS} (1 - e^{-\alpha_{SS} S}) S + \beta_{SD} (1 - e^{-\alpha_{SD} D}) S + \beta_{DD} (1 - e^{-\alpha_{DD} D}) D,
    \label{eqn:general-new-debris-function-example}
\end{align}
where $\alpha_{SS}, \alpha_{SD}, \alpha_{DD}, \beta_{SS}, \beta_{SD}, \beta_{DD}$ are all positive physical parameters.  We derive these forms and give some physical intuition about the parameters in Appendix \ref{appendix:kinetic_gas_model}, and describe physical model calibration in Appendix \ref{appendix:physical_model_calibration}. Calibrations and scales for figures illustrating qualitative results are chosen to emphasize relevant features.\footnote{For figures illustrating qualitative results, $S$ is measured in terms of the number of satellites necessary for the collision probability in a zero-debris environment to be $L(S,0) = 1 - e^{-\alpha_{SS}S}$. Similarly, $D$ is measured in terms of the number of debris fragments in orbit required for $G(0, D) = \beta_{DD} (1 - e^{-\alpha_{DD} D}) D$ new fragments to form. For example, suppose $\alpha_{SS} = 10^{-4}$ and 1 unit of satellites produces a collision probability of 0.01. Then each unit of satellites corresponds to roughly 100 satellites. Similarly, suppose $\beta_{DD} = 0.9$, $\alpha_{DD} = 10^{-5}$, and 1 unit of debris produces 10 new fragments. Then each unit of debris corresponds to roughly 1000 fragments. The launch rate is measured in units of satellites, and inherits its scale from $S$. The values of $\alpha_{SS}, \alpha_{DD}, \beta_{DD}$ used for these rescalings are obtained from the calibration to actual data described in Appendix \ref{appendix:physical_model_calibration}. We scale figures illustrating qualitative results so that $L(1,0) = \frac{\pi}{F} - r$ and so that $G(0,1) = 0.1$.} The functional forms do not affect our qualitative results.

\section{Economics of orbit use}
\label{sec:full-econ-model}
We now present an economic model whose elements match the physical setting described by equations \eqref{eqn:satLoM} and \eqref{eqn:debLoM}. The key focus is in modeling the economic choice behind the launch rate, $X_t$. The launch rate will be determined by forward-looking optimizing agents based on the economic institutions (open access or a social planner) and the current state of the environment (i.e. $(S_t,D_t)$). 

We assume satellites are identical and infinitely-lived unless destroyed in a collision. As in the simple model, we assume that the net per-period payoff for a satellite is $\pi > 0$, while it costs $F>0$ to plan, build, and launch.\footnote{In reality, satellites accrue one-time (capex) and recurring (opex) costs, while the revenues are recurring annual flows. Satellite capex costs are generally significant. \citet{ogutu2021techno} estimate capex and opex costs for several proposed satellite constellation designs, finding that capex costs are roughly 2-4x the size of annual opex costs. For theoretical analysis, it is convenient to write the capex costs separately in $F$ while packaging the recurring revenues and costs into $\pi$.} Operators discount the future at the common rate $r>0$. Costs and payoffs are assumed to be constant over time and states throughout the analysis. None of our results depend qualitatively on these assumptions. We relax these assumptions in the calibrated simulations in Section \ref{sec:simulation}, allowing for finite satellite lifetimes, time-varying costs and payoffs, and payoffs which depend on the current stock of active satellites in the shell (e.g. reflecting downward-sloping demand for satellite services). 

Finally, we will maintain the assumption that a launch rate cannot be negative, that is $X_t\geq0$. This reflects the fact that there are currently no technologies which can remove a satellite from orbit other than a decision by the operator to deorbit their own satellite.\footnote{Even these decisions may not be actionable, e.g. satellites without propulsion cannot be deorbited on command and must instead wait for natural decay processes to remove them from orbit.} As long as it was optimal to launch a satellite, it will never be optimal for a private operator to deorbit the satellite before its useful life is over. This makes satellites ``putty-clay'' investments \citep{johansen1959substitution}: once the satellite is in orbit, it will be operated until it is destroyed in a collision. This generates launch rate policy functions which have an ``inaction region'' where the launch rate is at its lower bound, i.e. states $(S,D)$ such that $X = 0$.

\subsection{Open access}

\subsubsection{Equilibrium}
We assume that all economic agents are risk neutral so they are concerned only with expected discounted values. The value at time $t$ of a satellite in orbit, which we denote by $Q(S_t, D_t)$, is thus the sum of present payoffs and the expected discounted value of its remaining lifetime payoffs, and it can be represented recursively by:
\begin{equation}
Q(S_t,D_t) = \pi + \frac{1}{1+r} (1-L(S_t,D_t)) Q(S_{t+1},D_{t+1}).
\label{eqn:satBellman}
\end{equation}
A firm which does not own a satellite in period $t$ decides whether to pay $F$ to plan, build, and launch a satellite which will reach orbit and start generating payoffs in period $t+1$, or to wait and decide again in period $t+1$. There is a large number of potential launchers, each one denoted by $i$. At each period, the launcher $i$ has to decide whether to launch, corresponding to choosing $x_{it}=1$, or to wait for next period, corresponding to choosing $x_{it}=0$. If launching, the future value for $i$ will be equal to \eqref{eqn:satBellman} net of the one time cost $F$. If not launching, the future value will correspond to the value of a potential launcher at $t+1$. The problem can be written recursively as
\begin{align}
V_{i}(S_t,D_t,X_t) &= \max_{x_{it} \in \{0,1\}} \left\{ \frac{(1-x_{it})}{1+r} V_{i}(S_{t+1},D_{t+1},X_{t+1}) + x_{it} \left[ \frac{1}{1+r} Q(S_{t+1},D_{t+1}) - F \right]  \right\} ,
\end{align}
subject to the laws of motion \eqref{eqn:satLoM} and \eqref{eqn:debLoM}. The launch rate at time $t$ is then given by, $X_t = \int_i x_{it}di$.

In an open-access equilibrium, the launch rate is such that all launching firms earn zero profits, i.e. the equilibrium launch rate $\hat{X}_t$ solves
\begin{equation}
    V_i(S_t, D_t, \hat{X}_t) = 0. \label{eqn:launcher_zero_profit}
\end{equation}
Combining equations \eqref{eqn:satBellman}-\eqref{eqn:launcher_zero_profit}, one obtains that in an open-access equilibrium it must be that
\begin{equation}
    F = \pi\left(\frac{1}{1+r - q(S_t,D_t,\hat{X}_t)}\right), \label{eqn:zeroprofit}
\end{equation}
where $q$ represents the survival probability of a satellite from $t$ to $t+1$,
\begin{equation}
q(S_t,D_t,\hat{X}_t) \equiv 1 - L(S_{t+1}(S_t,D_t,\hat{X}_t),D_{t+1}(S_t,D_t,\hat{X}_t)).
\end{equation}
Equation \eqref{eqn:zeroprofit} is the infinite-horizon version of the open-access condition \eqref{eqn:OPA} in the simple model. It states that under open access, the equilibrium launch rate equates the cost of launching, $F$, to the per-period payoff $\pi$, discounted by the rate $r$ adjusted by the survival probability of the satellite into next period, $q(S_t,D_t,\hat{X}_t)$. Intuitively, when the survival probability is higher (lower), the effective discount rate $r - q(S_t,D_t,X_t)$ is lower (higher), and thus more (fewer) launches would take place. Note that the survival probability $q(S_t,D_t,\hat{X}_t)$ embeds the physics of orbit use, which impacts the economics of orbit use by adjusting the effective opportunity cost of investing in the satellite asset via \eqref{eqn:zeroprofit}. At the same time, the choice of investing, represented by $\hat{X}_t$, impacts the physics of orbit use via the laws of motion \eqref{eqn:satLoM} and \eqref{eqn:debLoM}, which creates the fundamental feedback between the physics and the economics of orbit use.

\subsubsection{Dynamic analysis of open-access orbit use}

In this section we explore the dynamics of orbit use under open access, with the goal of highlighting the impact of the economic decision of satellite launchers on the evolution of objects in orbit. We first note that equation \eqref{eqn:zeroprofit} can be rewritten as
\begin{equation}
   \frac{\pi}{F} - r = L(S_{t+1},D_{t+1}). \label{eqn:zeroprofit-rate}
\end{equation}
which shows that the open-access equilibrium is on an isoquant of the collision probability function---specifically, the isoquant at the level of the ``excess rate of return,'' $\pi/F - r$. Changes in returns, costs, or discounting which lead to the same changes in excess return will have identical effects on open-access behavior. The open-access system is a solution in $(X_t, S_{t+1}, D_{t+1})$ to equations \eqref{eqn:satLoM}, \eqref{eqn:debLoM}, \eqref{eqn:zeroprofit-rate}.

Figure \ref{fig:example-trajectories} shows two examples of open-access trajectories for the launch rate, satellite and debris stocks, and collision probability, under different excess rates of return. In both cases, the launch rate is initially high to ensure the equilibrium condition holds (i.e. to ensure that the collision probability is on the required isoquant). In subsequent periods, the physical dynamics continue to generate debris, pushing the collision probability above the equilibrium level and continually reducing the satellite stock through collisions. The high collision probability forces the launch rate to be zero until the debris and satellite stocks are low enough that a positive launch rate can satisfy the equilibrium condition. 

Figure \ref{fig:oa-policy-phase} shows an example of the open-access launch policy $\hat{X}(S,D)$ and the corresponding phase diagram. Unlike typical problems in economic dynamics, this problem does not feature a saddle path to the equilibrium or optimum. Instead, there is a basin of attraction to a stable steady state (the ``stable basin''). An initial condition will reach the stable steady state if and only if it is within the stable basin. The size and shape of the stable basin is determined both by the system's physics and the economic institution governing launch rates. 

\begin{figure}[H] 
	\centering
	\includegraphics[width=0.9\textwidth]{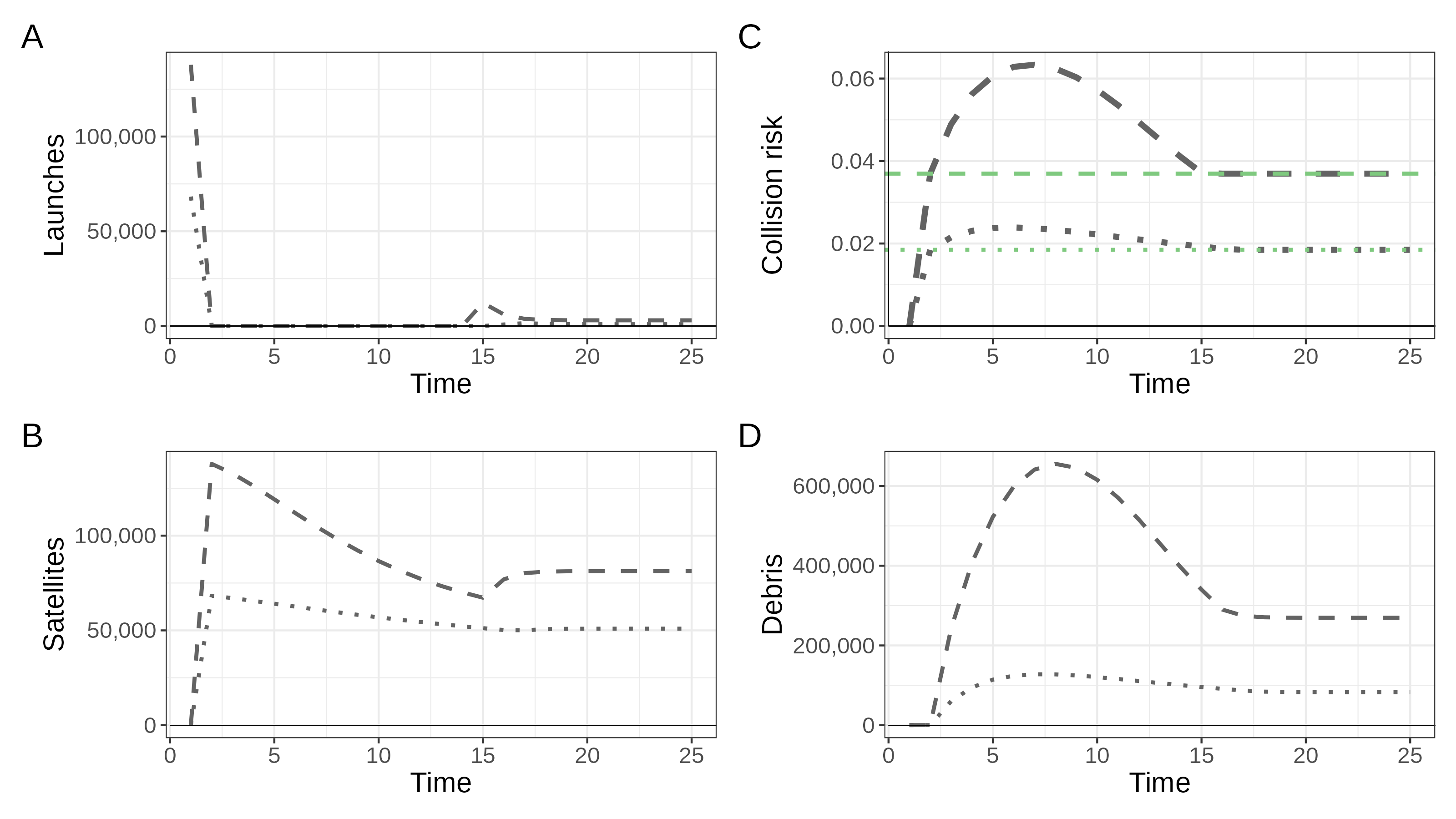} 
	\caption{\footnotesize Example open-access trajectories. The dotted line shows a trajectory under a low excess rate of return (green dotted line in panel C), while the dashed line shows a trajectory under a high excess rate of return (green dashed line in panel C). The initial condition for both trajectories is $(S_0,D_0) = (0,0)$. 
	}
	\label{fig:example-trajectories}
\end{figure}

\begin{figure}[H] 
	\centering
	\includegraphics[width=0.9\textwidth]{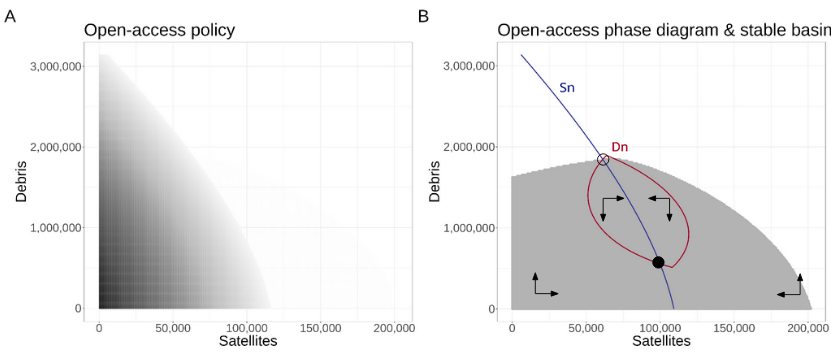} 
	\caption{\footnotesize Policy functions, phase diagrams, and stable basin under open access. In panel A, darker colors correspond to higher launch rates. In panel B the red line shows the nullcline for debris, the blue line shows the nullcline for satellites, and the gray area shows the stable basin. The arrows show the direction of motion, the solid black circle shows the stable steady state, and the open black circle shows the unstable steady state. 
	}
	\label{fig:oa-policy-phase}
\end{figure}

Figure \ref{fig:oa-policy-phase}A shows the open-access launch policy. Launch rates are highest when both satellites and debris are low. As either increases, the launch rate decreases. When the satellite and debris stocks are high enough, the launch rate will be zero. Figure \ref{fig:oa-policy-phase}B shows the phase diagram resulting from the physical dynamics of satellites and debris and the launch policy. The blue line shows the zero-growth isoquant (``nullcline'') for satellites and the red line shows the nullcline for debris. The satellite nullcline takes the shape of the launch policy. As long as $(S,D)$ is to the left of the satellite nullcline, open-access launch behavior will ensure that the stock of satellites is growing. 

The ``kinked ellipse'' shape of the debris nullcline is less obvious. The left side of the kinked ellipse (roughly the portion of the debris nullcline to the left of the satellite nullcline) results when satellites are being launched. In this case debris growth is the sum of launch debris and what physical processes produce. The right side of the kinked ellipse (roughly the portion of the debris nullcline to the right of the satellite nullcline) results when satellites are no longer being launched. In this case debris growth is solely governed by physical processes. Outside the kinked ellipse there is net positive debris growth, either due to satellite launches or collisions between orbiting objects or both. Inside the kinked ellipse the launch rate and physical dynamics are such that the debris stock will shrink even if the satellite stock increases. 

The open-access steady states are pairs $(S,D)$ which satisfy the following conditions:
\begin{align}
L(S,D) &= \frac{\pi}{F} - r, \label{eqm:SSeqmcond} \\
X &= L(S,D)S \label{eqm:SSsateqn}\\ 
\delta D &= G(S,D) + mX. \label{eqm:SSdebeqn}
\end{align}
The intersections of the nullclines in figure \ref{fig:oa-policy-phase}B reveals two steady states satisfying conditions \ref{eqm:SSeqmcond}-\ref{eqm:SSdebeqn}. Proposition \ref{multipleOASS} shows that this multiplicity is a generic property of open-access orbit use whenever (a) active satellites can collide with other active satellites and debris, and (b) debris objects can collide with each other and produce new debris.

\begin{restatable}[Multiplicity and instability]{prop}{multiplicityOASS}
	\label{multipleOASS}
	Given a positive excess return on a satellite and a collision probability function which depends on satellites and debris, multiple open-access steady states can exist if debris objects can collide and produce new debris ($G_D > 0$). An open-access steady state will be stable if and only if
 	\begin{equation}
	    \underbrace{(G_D(S^*,D^*) - \delta)}_{\text{\shortstack{Net rate of\\autocatalytic debris growth}}} < \underbrace{\frac{L_D(S^*,D^*)}{L_S(S^*,D^*)}(G_S(S^*,D^*) + m(\frac{\pi}{F} - r))}_{\text{\shortstack{Rate of new fragment reduction due to\\ equilibrium launch activity response to debris}}}. \label{eqn:stability-condition}
	\end{equation}
    When $G$ is strictly convex in both arguments and two steady states exist, the higher-debris steady state is unstable.
\end{restatable}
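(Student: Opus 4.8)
The plan is to exploit a dimensional reduction: the open-access equilibrium condition \eqref{eqn:zeroprofit-rate} forces the one-step image of any launching state onto the collision-probability isoquant $\{L=\ell\}$, where $\ell:=\pi/F-r>0$. Since that curve is forward-invariant, the launching dynamics are effectively one-dimensional along it. I would parameterize the isoquant by the debris stock, writing $S=\psi(D)$ for the solution of $L(\psi(D),D)=\ell$, so that $\psi'=-L_D/L_S<0$ and $\psi$ is strictly decreasing on $[0,D_{\max}]$ with $\psi(0)=S_{\max}$ and $\psi(D_{\max})=0$. Substituting $X=\ell S$ from \eqref{eqm:SSsateqn} and $S=\psi(D)$ into the debris balance \eqref{eqm:SSdebeqn}, the interior steady states become the zeros of the residual $\Gamma(D):=\delta D-G(\psi(D),D)-m\ell\,\psi(D)$, whose derivative is $\Gamma'(D)=\delta-G_D+\tfrac{L_D}{L_S}(G_S+m\ell)$.

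For multiplicity (part 1), I would first note $\Gamma(0)=-G(S_{\max},0)-m\ell S_{\max}<0$. If $G_D\equiv 0$ every term of $\Gamma'$ is positive, so $\Gamma$ is strictly increasing and has a single zero: $G_D>0$ is therefore necessary for more than one steady state. Conversely, when autocatalytic growth is strong and $G$ is convex, the $-G_D$ term drives $\Gamma'$ negative at high $D$, so $\Gamma$ rises and then falls; together with $\Gamma(0)<0$ this produces two interior zeros by the intermediate value theorem. I would confirm sufficiency concretely using the exponential forms \eqref{eqn:general-collision-function-example}--\eqref{eqn:general-new-debris-function-example}.

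For the stability characterization (part 2), I would linearize the two-dimensional launching map. Because its image lies on $\{L=\ell\}$, differentiating that identity at the steady state shows $(L_S,L_D)$ is a left null vector of the Jacobian $J$, so $\det J=0$ and the only nonzero eigenvalue is $\lambda=\operatorname{tr}J$, which coincides with the derivative $f'(D^*)$ of the reduced map along the isoquant. Writing the reduced map implicitly as $\Phi(D_{t+1})=\Psi(D_t)$ with $\Phi(D)=D-m\psi(D)$ (hence $\Phi'=1+mL_D/L_S>0$) and checking the algebraic identity $\Gamma=\Phi-\Psi$, I get $\Gamma'(D^*)=\Phi'(D^*)\bigl(1-f'(D^*)\bigr)$. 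Thus $f'(D^*)<1\iff\Gamma'(D^*)>0$, and rearranging $\Gamma'(D^*)>0$ reproduces exactly the stability inequality \eqref{eqn:stability-condition}. Local stability is $|\lambda|<1$; the economically relevant transition is the fold at $\lambda=1$ (consistent with the basin structure), and I would verify that the flip bound $\lambda>-1$ is slack under the maintained monotonicity of $L$ and $G$.

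For the higher-debris instability (part 3), given two consecutive simple zeros $D_1<D_2$ and $\Gamma(0)<0$, the smallest zero must be an upward crossing, $\Gamma'(D_1)>0$ (stable), and because consecutive simple zeros alternate in crossing direction the next must be a downward crossing, $\Gamma'(D_2)<0$ (unstable by part 2). The role of strict convexity of $G$ is to guarantee that $\Gamma'$ changes sign at most once, so that the two steady states are exactly this stable/unstable pair with no further interior zeros. I expect this last curvature step to be the main obstacle: convexity of $G$ does not by itself make $\Gamma$ concave, because the isoquant $\psi$ injects its own curvature through the second derivatives of $L$. I would therefore need to bound the total second derivative of $G(\psi(D),D)+m\ell\,\psi(D)$ along the isoquant, or more cheaply establish a single sign change of $\Gamma'$ directly from the monotonicity of $G_D$ in $D$ and the signs of the cross-terms, to pin down both the count and the ordering of the steady states.
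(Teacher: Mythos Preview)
Your proposal follows essentially the same route as the paper: both reduce to one dimension along the equilibrium isoquant $L=\ell$, parameterize by the debris stock, and study the zeros of the residual $\Gamma(D)=\delta D-G(\psi(D),D)-m\ell\,\psi(D)$ (the paper works with $\mathcal{Y}=-\Gamma$ instead). Your stability analysis is actually more careful than the paper's: you properly linearize the discrete-time map, observe that the isoquant constraint forces $\det J=0$, and show $\Gamma'(D^*)>0\iff f'(D^*)<1$, correctly flagging that the flip bound $f'(D^*)>-1$ must be checked separately; the paper simply asserts that the sign of $\mathcal{Y}'$ ``matches the sign of the respective eigenvalues'' without addressing the lower bound. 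Your worry about part 3 is also well placed and is shared by the paper's proof: the paper establishes the ordering from the endpoint signs of $\mathcal{Y}$ (positive at $D=0$ and as $D\to\infty$ under autocatalytic growth) and asserts that strict convexity of $G$ caps the count at two, but does not rigorously control the curvature injected by $\psi$ any more than you do.
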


\noindent The proof is shown in the Appendix. This multiplicity is possible because open access makes the equilibrium satellite stock a decreasing function of the debris stock \emph{and} autocatalytic debris growth is possible. In a stable steady state, equilibrium reductions in orbit use due to more debris (i.e. the right-hand side of condition \ref{eqn:stability-condition}) outweigh the incremental debris autocatalysis (i.e. the left-hand side of condition \ref{eqn:stability-condition}). In an unstable steady state, the reverse holds. When $G_D \equiv 0$, the stable basin fills the entire space. Condition \eqref{eqn:stability-condition} combined with equation \eqref{eqn:general-new-debris-function-example} show that when launching decisions are endogenous, i.e. respond to economic incentives, the stock of satellites in orbit $S^*$ adjusts and makes it more difficult for \eqref{eqn:stability-condition} to be violated. Even in open access, the private part of the costs of collision risk are internalized, which affects the launching decisions, and eventually the number of satellites in orbit. However, when it comes to debris, once they have formed, there is no endogenous change in behavior that can directly impact the autocatalytic part of their growth, so the economics of orbit use is much less effective in avoiding runaway debris growth. We finally note that Proposition \ref{multipleOASS} is the full model analog to Proposition \ref{prop:simple-kessler}, which required $\sigma>0$ in the simple model for Kessler Syndrome (i.e instability) to emerge.


The number of open-access steady states still depends on the shapes of $L$ and $G$. For example, if $L$ is strictly increasing and $G$ is strictly convex and sufficiently smooth, there can be up to two open-access steady states. Only one will be stable. Without smoothness, there may be more and with more-complex stability properties. Strict convexity of $G$ means putting more objects in orbit strictly increases the expected number of fragments generated by collisions in each period, e.g. due to fragments from one collision propagating and increasing the likelihood of fragment-generating collisions during that period. Smoothness of $G$ means the absence of sharp thresholds in debris formation, e.g. due to critical transitions in debris collision dynamics. The shaded area in figure \ref{fig:oa-policy-phase}B shows the basin of attraction to the stable open-access steady state for our example functional forms (equations \ref{eqn:general-collision-function-example} and \ref{eqn:general-new-debris-function-example}).  

While the possibility of multiple steady states in a non-convex system, some potentially unstable, is a well-understood result in environmental and natural resource economics (e.g. \citet{shallowlake2003, wagener2003, lemoine2016economics}), it is less-recognized in the growing economic literature on orbit use. While \citet{raoetal2020} acknowledge the possibility of an unstable steady state, it is left unmodeled; \citet{aac2015, rouillon2019} similarly focus on stable long-run outcomes; and \citet{aac2018} show the impossibility of meaningful physical instability in a model without the potential for debris-debris collisions, i.e. $G_D > 0$ for some $D$. The non-convexity driving the potential for multiplicity and instability is the potential for debris-debris interactions, which is shown to be theoretically and empirically relevant in the engineering literature on orbit use (e.g. \citet{lewis2020understanding}). 

Finally, it is worth asking if the overshooting seen in figure \ref{fig:example-trajectories}---where collision risk first exceeds the equilibrium level before reaching the steady state---is specific to particular parameter sets or a generic property of orbit use models. If open access equilibria will tend to monotonically approach the stable steady state, then parameter changes which shift the steady state may not cause costly spikes in the debris stock. Such parameter changes include technological advances (e.g. better-shielded satellites), policy guidelines (e.g. encouragement to use frangibolts instead of exploding bolts for booster separation), environmental processes (e.g. sunspot activity which causes $\delta$ to vary), or economic changes (e.g. increases in the excess return on a satellite). Proposition \ref{prop:OAEovershoot} shows that the overshooting in figure \ref{fig:example-trajectories} is a generic property of open-access equilibria: almost all initial conditions will overshoot the stable steady state in satellites or debris or both. Figure \ref{fig:overshoot1} illustrates this result.

\begin{restatable}[Overshooting]{prop}{overshootingOASS}
	\label{prop:OAEovershoot}
	Suppose the new fragment formation function is strictly convex in both arguments and the launch rate constraint does not bind. Except on a set of measure zero, open access paths from initial conditions with positive launch rates will overshoot the stable open-access steady state in at least one state variable.
\end{restatable}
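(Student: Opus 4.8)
The plan is to use the observation, already established above, that whenever the launch rate is interior the zero-profit condition \eqref{eqn:zeroprofit-rate} forces the one-period-ahead state onto the isoquant $I=\{(S,D):L(S,D)=\pi/F-r\}$. Hence, under the maintained hypothesis that the constraint $X_t\ge 0$ never binds, every point of the path from period $1$ onward lies on $I$, and the two-dimensional system reduces to a one-dimensional map on $I$ with the stable steady state $(S^*,D^*)$ as an attracting fixed point. The first step is to record the geometry of $I$: because $L$ is strictly increasing in each argument, the implicit function theorem makes $D$ a strictly decreasing function of $S$ along $I$, so that as one moves along the curve the two state variables necessarily move in \emph{opposite} directions.

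Next I would characterize the tail. Stability of $(S^*,D^*)$ (Proposition \ref{multipleOASS}) gives convergence of the one-dimensional map; if the linearization at the fixed point is negative the approach oscillates and overshooting is immediate, so the substantive case is monotone convergence. In that case $S_t$ and $D_t$ each approach their steady-state values monotonically, but---by the previous step---from \emph{opposite} sides of $(S^*,D^*)$ along $I$. The central geometric fact is then that the one-step image $(S_1,D_1)\in I$ differs from $(S^*,D^*)$ except on a null set and, sitting on the strictly decreasing curve $I$, satisfies exactly one of $S_1>S^*$ (with $D_1<D^*$) or $S_1<S^*$ (with $D_1>D^*$).

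I would then combine the initial jump with the tail. From an initial condition lying below the steady state in both coordinates, the one-step image $(S_1,D_1)$ lies strictly beyond $(S^*,D^*)$ in exactly one coordinate; that coordinate therefore rises from below $S^*$ (or $D^*$) to above it and is then driven back down by the monotone tail---an overshoot. The mirror argument applies whichever coordinate is the one pushed beyond its steady-state value. The exceptional set is the union of the initial conditions already lying on $I$ (a one-dimensional, hence null, set, for which the path is monotone) and the codimension-one set of initial conditions whose one-step image is exactly $(S^*,D^*)$; both are null, which yields the ``except on a set of measure zero'' qualifier.

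The step I expect to be the main obstacle is extending this jump-versus-tail argument to positive-launch initial conditions that are \emph{not} uniformly below the steady state, together with the precise measure-zero bookkeeping there. Away from the low-stock region the sign of each component of the one-period jump is no longer automatic---collision losses can lower $S$ and decay can lower $D$---so a priori there could be a positive-measure cone of initial conditions for which the jump is co-monotone with the anti-monotone motion along $I$ and no coordinate reverses. The route I would attempt is to exploit that the one-step map collapses the plane onto the one-dimensional curve $I$: its Jacobian at $(S^*,D^*)$ is rank one (a zero eigenvalue along the launch-adjusted direction, the nonzero eigenvalue governing motion along $I$), and I would try to argue that the non-overshooting configurations are thereby confined to a lower-dimensional alignment locus. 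Verifying that this locus is genuinely null, rather than a positive-measure cone, is the delicate point. Strict convexity of $G$ enters precisely here, guaranteeing through Proposition \ref{multipleOASS} that the relevant attractor is the stable low-debris steady state, which fixes the direction of the monotone tail used throughout the argument.
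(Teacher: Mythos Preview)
Your core geometric observation---that the one-step image lies on the isoquant $I=\{L=\pi/F-r\}$, and that $I$ is a strictly decreasing curve so any point on $I\setminus\{(S^*,D^*)\}$ has exactly one coordinate above and one below the steady state---is exactly the engine of the paper's proof. The substantive divergence is in what ``overshooting'' means. You read it as a \emph{reversal}: a coordinate must cross its steady-state value. The paper takes the weaker reading: the path at some time $t\ge 1$ has a coordinate strictly above its steady-state value. Under the paper's reading, your own observation already finishes the proof---landing anywhere on $I\setminus\{(S^*,D^*)\}$ \emph{is} overshooting in one variable and undershooting in the other, full stop. All of your tail analysis (monotone vs.\ oscillatory one-dimensional map, the jump-versus-tail combination, the ``positive-measure cone'' concern for initial conditions not uniformly below the steady state) is then unnecessary, and the obstacle you flag does not arise.

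The measure-zero bookkeeping also differs. You take the exceptional set to be (i) initial conditions already on $I$ plus (ii) the preimage of $(S^*,D^*)$. Under the paper's reading, (i) is not exceptional at all: starting on $I$ away from the steady state already exhibits one coordinate above its steady-state level. The paper's exceptional set is solely the ``one-step set'' $A_{P1}$, the set of initial states whose physical evolution followed by a launch lands exactly at $(S^*,D^*)$. The paper shows this is null by decomposing the one-period map into physical dynamics $P_{SD}$ followed by a launch along the ray $\{(S+X,D+mX)\}$; the launch preimage of the singleton $(S^*,D^*)$ is a line segment, and bijectivity of $P_{SD}$ preserves its null measure. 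Your rank-one Jacobian idea points in a similar direction but is not needed once the weaker notion of overshooting is adopted.
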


\begin{figure}[t] 
	\centering
	\includegraphics[width=0.9\textwidth]{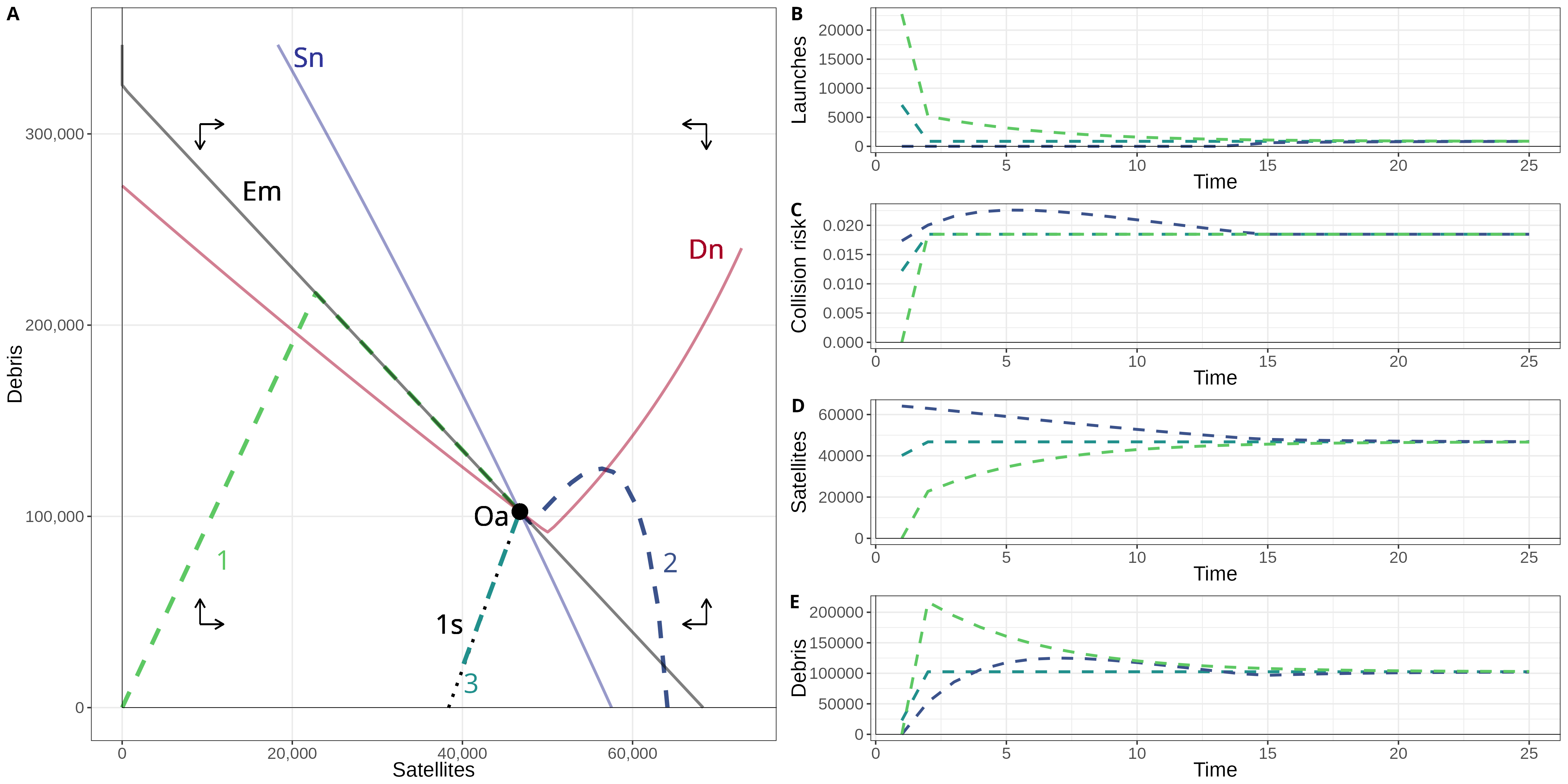} 
	\caption{\footnotesize In panel A, the thin black line shows the equilibrium isoquant (curve \textit{Em}). The thick black point in panel A on the equilibrium isoquant (point \textit{Oa}) is the stable open-access steady state. The dotted black line in panel A connecting the steady state to the y-axis is the manifold of points which can converge to the stable steady state in one step (the ``one-step set'', curve \textit{1s}). Panel A illustrates Proposition \ref{prop:OAEovershoot} with sample paths. The teal dashed line (labeled 3) along the dotted black line shows a path converging to the steady state in one step. The green dashed line (labeled 1) shows a path which overshoots in debris and converges to the steady state along the equilibrium isoquant. The purple dashed line (labeled 2) shows a path which initially overshoots in satellites. These paths follow the equilibrium isoquant to the steady state whenever possible, though the physical dynamics force path 2 to approach the steady state from outside the action region. Panels B-E show the launch rate, collision probability, and satellite and debris stocks for sample paths 1-3 over time. Panel A also shows the satellite (blue, curve \textit{Sn}) and debris nullclines (red, curve \textit{Dn}).}
	\label{fig:overshoot1}
\end{figure}

\noindent The proof is shown in the Appendix. Panel A of Figure \ref{fig:overshoot1} shows three notable features of the open-access equilibrium dynamics. First, curve \textit{Em} is the equilibrium isoquant: a set of points such that the open-access equilibrium condition, equation \eqref{eqn:zeroprofit-rate}, holds.\footnote{The linearity of \textit{Em} is due to the log form assumed for $L(S,D)$.} Paths which eventually converge to the stable steady state (point \textit{Oa}) will attempt to do so along this isoquant. This can be seen most clearly in panel C of Figure \ref{fig:overshoot1}: the green line representing the path labeled 1 reaches the equilibrium collision risk level even as the satellite and debris stocks are away from their steady-state levels. Periods where the collision risk is at the equilibrium level correspond to periods where the path in panel A is moving along \textit{Em}.\footnote{A binding upper bound on the number of satellites launchable per period will cause the path to curve towards the nearest steady state as it approaches \textit{Em}. While this can help firms avoid an unstable steady state following exogenous parameter changes, it does not affect the \emph{existence} of an unstable steady state.} Second, curve \textit{1s} is a frontier separating regions of the state space where, along an equilibrium trajectory, (at least) the stock of debris is guaranteed to exceed the steady-state level at some period from regions where (at least) the stock of satellites is guaranteed to exceed the steady-state level at some period. The frontier demarcated by \textit{1s} is exactly the (measure zero) set of points which can converge to the steady state in one step without overshooting in either state variable. Third, there are in general two types of paths to the stable steady state: those which involve positive launch rates every period (exemplified by paths 1 and 3), and those which involve some periods with zero launch rates (exemplified by path 2).\footnote{Paths where the launch rate is eventually always zero occur if and only if Kessler Syndrome has occurred (see Proposition \ref{prop:finite-horizon-zero-launch} in Appendix \ref{appendix:finite-horizon-orbital-capacity}).} The former occur when the total stock of objects in orbit is low enough, while the latter occur when the total stock is large enough.

\subsection{Social planner}

\subsubsection{Optimum} 
The value of the satellite fleet and the right to launch new satellites is the sum of present and expected discounted payoffs from satellites in orbit and satellites which are going to be launched. The fleet planner chooses the launch rate to maximize this value:
\begin{align}
W(S_t,D_t) =& \max_{X_t \geq 0} \left\{ \pi S_t - F X_t + \frac{1}{1+r} W(S_{t+1},D_{t+1}) \right\}
\label{prog:planner-dp}
\end{align}
subject to the laws of motion \eqref{eqn:satLoM} and \eqref{eqn:debLoM}. The planner's optimal interior launch rate $X^*_t$ equates the flow of benefits and costs from a marginal satellite, such that
\begin{align}
\label{eqn:optflowcond}
F =\pi\left(\frac{1}{1+r-q(S_t,D_t,X^*_t)+\xi(S_t,D_t,X^*_t)}\right) 
\end{align}
where $\xi(S_t,D_t,X^*_t)\geq 0$ is the external cost of a marginal satellite launched in $t$, corresponding to the right-hand side term in the simple model condition \eqref{eqn:PLN}. Equation \eqref{eqn:optflowcond} shows that the planner internalizes the impact of each additional satellite on the survival probability of the satellite fleet by adjusting upward the effective discount rate of per-period returns by the external cost factor $\xi$, compared to the open-access condition \eqref{eqn:zeroprofit}. While the general expression for the external factor is difficult to summarize due to the presence of shadow values and differences between future and past periods, the steady-state expression for an interior launch rate is more interpretable.\footnote{We derive equation \ref{eqn:optflowcond} and $\xi(S_t,D_t,X^*_t)$ in the general (not-necessarily-interior, non-stationary) case in Appendix \ref{appendix:MECderivation}.}  Suppressing function arguments and time subscripts, the external cost of a marginal satellite in an interior steady state is
\begin{align}
\xi =& \underbrace{L_S S }_{\text{\shortstack{``congestion'' channel: \\ marginal cost of satellites colliding \\with each other due to crowding}}} + \underbrace{\frac{1}{1+r} \left(G_S  + m(L + S L_S)\right) L_D S  + \left(1-\frac{1}{1+r}\right) m L_D S}_{\text{\shortstack{``pollution hazard'' channel: \\ marginal cost of satellites colliding\\with new fragments and launch debris}}} \nonumber \\
&+ \underbrace{\frac{1 - \delta + G_D}{1+r} \left( \frac{\pi}{F} - r- (L + L_S S)\right)}_{\text{\shortstack{``pollution persistence'' channel: \\ marginal cost of persistent debris \\and debris growth }}} \label{eqn:ss-mec}
\end{align}
The first term of $\xi$, the congestion channel, represents the cost of additional satellite collision probability due to satellite crowding. This is essentially the same externality identified in open-access fisheries in \citet{gordon1954}. As long as the collision probability is coupled with the satellite stock and new satellites weakly increase the probability of satellite-destroying collisions ($L_S \geq 0$), this term is non-negative. This contemporaneous externality ($L_S S $) could be remedied by coordinating satellites perfectly to avoid each other, e.g. using a common slotting architecture for all satellites \citep{arnas2021definition, lifson2022many}.\footnote{Such coordination is technically feasible, but due to the OST it must avoid implying slot ownership.} In the general model this corresponds to the collision probability function not having $S$ as an argument, i.e. $L(D_t)$ instead of $L(S_t, D_t)$. This form is used in some prior economic literature on orbit use, e.g. \citet{aac2015, aac2018}.\footnote{The form in \citet{rouillon2019} is agnostic on this point, since it focuses on the steady state and treats the collision risk as a ``black box'' determined by the launch rate. Disambiguating these effects requires more structure on the collision probability function.} Since this channel shows only crowding due to active resource users, we refer to it as the ``congestion channel'' following the distinction between open-access, congestion, and pollution externalities in \citet{haveman1973}.  Importantly, note that this channel exists even if debris is irrelevant, i.e. $D \equiv 0$ or $L(S, D) = L(S)$ or both---the remaining terms in equation \eqref{eqn:ss-mec} would disappear.\footnote{This cannot be inferred directly from equation \eqref{eqn:ss-mec} or its more-general form, equation \eqref{eqn:general-mec}, in the appendix, since both are derived assuming debris exists and affects collision risk. It can be seen instead from the first-order conditions planner's lagrangian in equation \eqref{eqn:plan-lagrangian} of Appendix \ref{appendix:MECderivation}. If debris didn't affect collision risk or didn't exist, no terms from the debris law of motion would appear in the first-order necessary, complementary slackness, or transversality conditions.}

Pollution is a different case \citep{haveman1973}. The residual from production which reduces environmental quality is typically borne by entities other than the residual discharger---typically a firm imposing costs on consumers. Debris constitutes a pollution stock---though the entities bearing the costs may be the same as those discharging debris, its temporal persistence means that future users will be affected by present or past users' choices. Open-access rent-dissipation leads firms to ignore these intertemporal effects. Critically, this intertemporal separation means that debris pollution will not constrain equilibrium output via rent dissipation at any given time $t$ \emph{except through its effects on the contemporaneous collision risk that operator expects to face}. These channels are explicitly noted in equation \eqref{eqn:ss-mec} as the ``pollution'' channels. 

The second term, the pollution hazard channel, is the cost of additional collisions with debris. There are two components of this channel, reflecting debris costs incurred over time and immediately. The first component, $\left(G_S  + m(L + S L_S)\right) L_D S $, represents the marginal cost of colliding with fragments generated by collisions involving other satellites (including fragments from collisions between satellites and launch debris). This component reflects the long-run hazard to the fleet created by the marginal satellite, as it may be destroyed in a collision and generate additional fragments. The second component, $m L_D S $, represents the marginal cost of the new satellite's launch debris---a short-run hazard to the fleet. As the discount rate approaches zero only the long-run component matters, while as it approaches infinity only the short-run component matters. As the number of launch debris fragments from the marginal satellite goes to zero, this channel reduces to the discounted marginal cost of collisions with fragments of satellite-debris collisions ($\frac{1}{1+r} G_S L_D S$, a long-run cost) only. As long as having more objects in orbit increases the probability of a collision ($L_D > 0$) and collisions with active satellites or launch debris can produce debris ($G_D > 0$ or $m > 0$), this term is non-negative. 

The third term, the pollution persistence channel, is the cost of debris which does not decay and new fragments produced in collisions between debris objects. If all debris in orbit decayed at the end of each period and could not collide with other debris to create new debris ($\delta = 1$ \& $G_D \equiv 0$), this channel would disappear. The cost of this channel is the forgone discounted payoff from a satellite net of collision risk ($L$), congestion costs ($L_S S$), and the opportunity cost of the funds used to deploy the asset ($r$). As long as the excess rate of return weakly exceeds the collision probability and marginal rate of congestion costs ($\frac{\pi}{F} - r \geq L(S,D) + L_S(S,D)S$), this term is nonnegative. 

\subsubsection{Dynamic analysis of optimal orbit use}

Figure \ref{fig:opt-example-trajectories} shows two examples of optimal trajectories for the launch rate, satellite and debris stocks, and collision probability, under the same excess rates of return used in figure \ref{fig:example-trajectories} and from the same initial condition. We induce the change in excess return by lowering the launch cost---the dependence of the external cost of a marginal satellite on returns, private cost, and discounting parameters breaks the symmetry present under open access. In both examples, the launch rate is initially high to take advantage of the lack of debris in orbit. This leads to some overshooting in the satellite stock only (though less than under open access), and the launch rate falls to zero until the satellite stock is closer to its steady-state level. In subsequent periods the satellite stock approaches its steady-state level from above as debris and collision probability grow monotonically. However, the collision probability remains well below the open-access levels. Unlike open access, the optimal collision probability does not overshoot its steady-state level, since the planner internalizes the dynamic externality of launching too many satellites. The planner does however overshoot the steady-state level of satellites: given the initial condition ($(S,D) = (0,0)$), the planner takes advantage of ``clear skies'' before debris accumulates.

Figure \ref{fig:opt-policy-phase} shows an example of the optimal launch policy $X^*(S,D)$ and the corresponding phase diagram. As before, there is a basin of attraction to a stable steady state. The optimal launch policy has a similar shape as the open-access launch policy, but a weakly lower magnitude in every state. This expands the stable basin and partially cuts off the ``kinked ellipse'' of the debris nullcline. Compared to the open-access policy, the optimal policy maintains fewer satellites and less debris in its stable state. The region outside the planner's stable basin in Figure \ref{fig:opt-policy-phase} corresponds to states where negative launch rates would be required to avert Kessler Syndrome; without access to such technologies, the planner is not able to avert Kessler Syndrome in those states.

The steady-state satellite and debris stocks are both lower than their open-access levels, the satellite stock by around half and the debris stock by around 80\%. As in the open-access case it appears that almost all initial conditions will overshoot the stable steady state in at least satellites or debris or both. Figure \ref{fig:overshoot-opt} illustrates this result.

\begin{figure}[H] 
	\centering
	\includegraphics[width=0.9\textwidth]{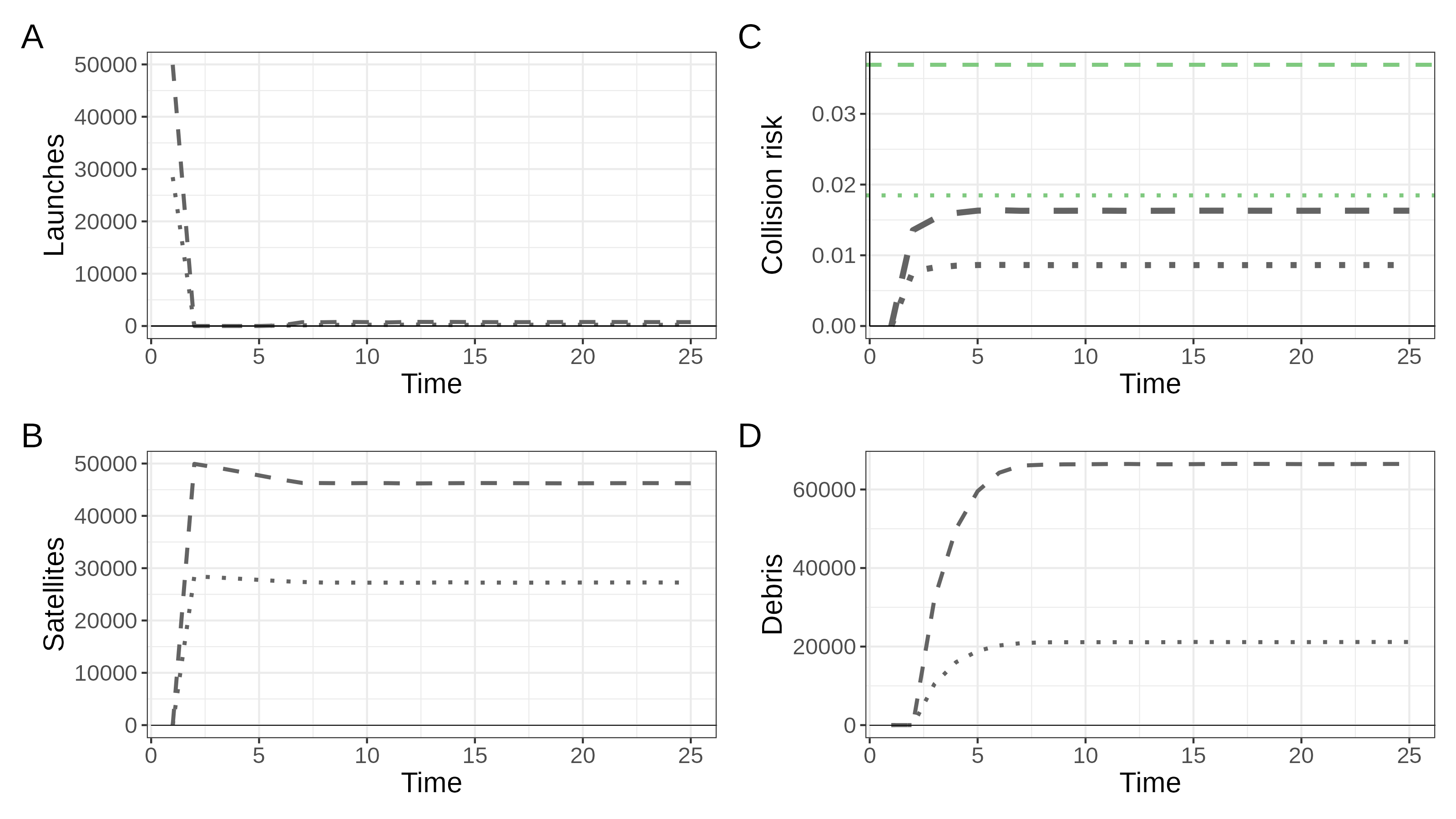} 
	\caption{\footnotesize Example optimal trajectories. The dotted line shows a trajectory under a low excess rate of return (green dotted line in panel C), while the dashed line shows a trajectory under a high excess rate of return (green dashed line in panel C). The initial condition for both trajectories is $(S_0,D_0) = (0,0)$. 
	}
	\label{fig:opt-example-trajectories}
\end{figure}

\begin{figure}[H] 
	\centering
	\includegraphics[width=0.8\textwidth]{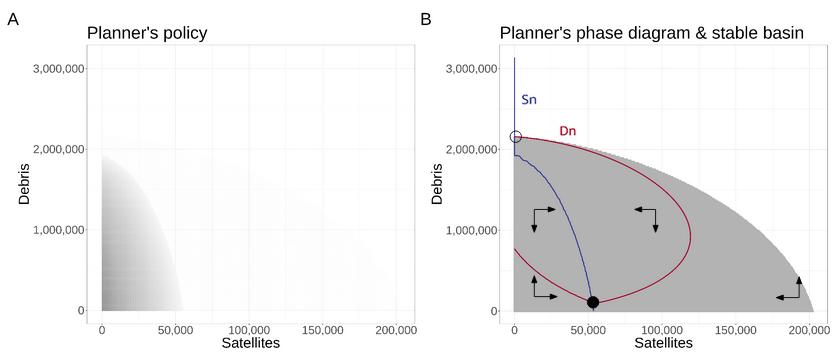} 
	\caption{\footnotesize Policy functions, phase diagrams, and stable basin under the planner. In panel A, darker colors correspond to higher launch rates. In panel B the red line shows the nullcline for debris, the blue line shows the nullcline for satellites, and the gray area shows the stable basin. The arrows show the direction of motion, the solid black circle shows the stable steady state, and the open black circle shows the unstable steady state. 
	}
	\label{fig:opt-policy-phase}
\end{figure}

\begin{figure}[H] 
	\centering
	\includegraphics[width=0.9\textwidth]{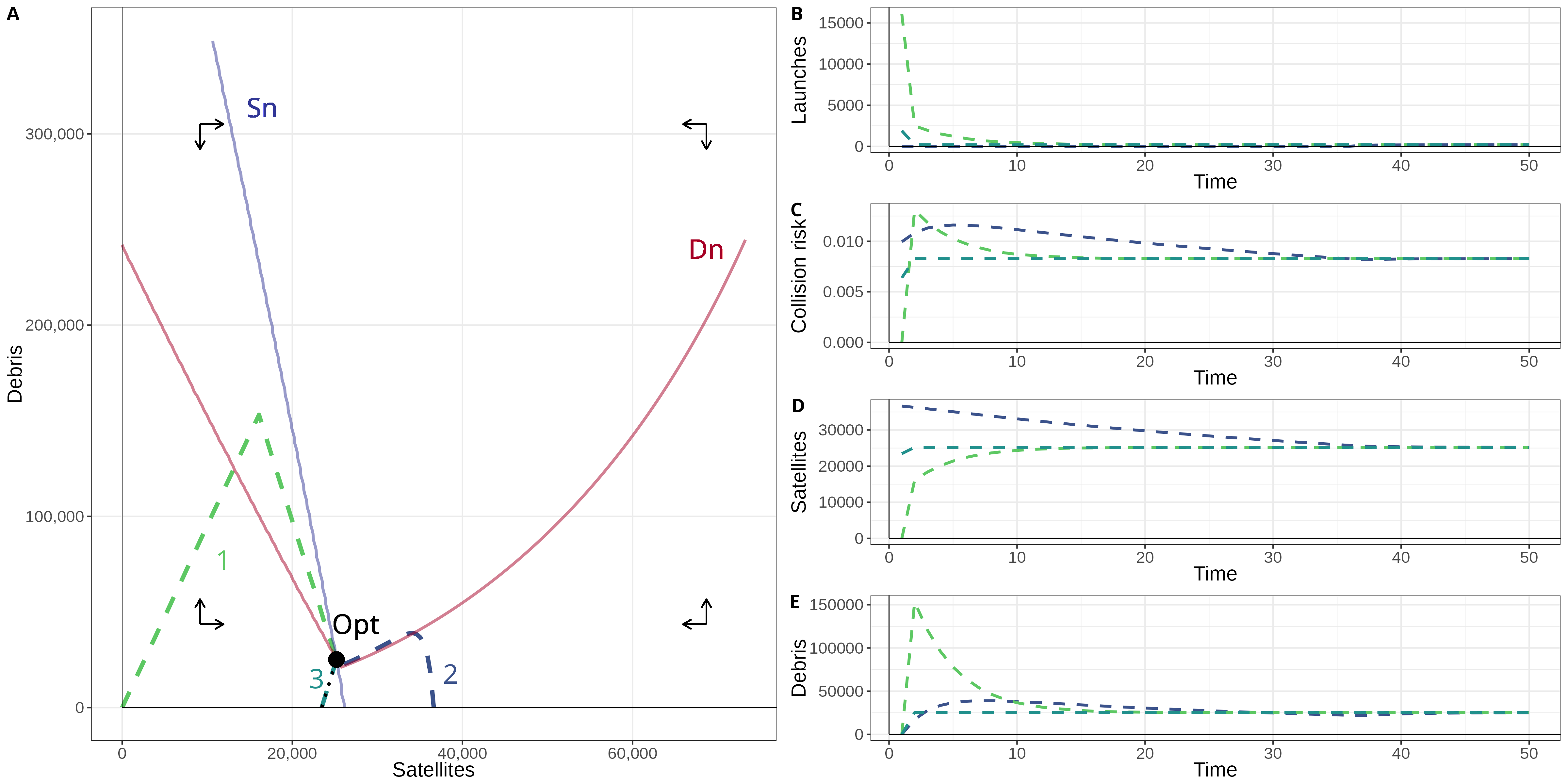} 
	\caption{\footnotesize The thick black point in panel A labeled \textit{Opt} is the stable optimal steady state. The dotted black line in panel A connecting the steady state to the y-axis is the manifold of points which can converge to the stable steady state in one step. The teal dashed line (labeled 3) along the dotted black line shows a path converging to the steady state in one step. The green dashed line (labeled 1) shows a path which overshoots in debris. The purple dashed line (labeled 2) shows a path which starts with more satellites than optimal. Panels B-E show the launch rate, collision probability, and satellite and debris stocks for sample paths 1-3 over time. Panel A also shows the satellite (blue, curve \textit{Sn}) and debris nullclines (red, curve \textit{Dn}).}
	\label{fig:overshoot-opt}
\end{figure}

\subsection{Kessler Syndrome}

We now define Kessler Syndrome in the general setting and show how it can occur under open access or the social planner. Intuitively, Kessler Syndrome is a region of the state space from which it is impossible to reach the stable steady state, given the launch policy in use.\footnote{In \citet{kessler1}, ``collisional cascading'' is defined as one of the ``possible consequences of continuing unrestrained launch activities''. ``Unrestrained launch activities'' is not defined precisely, though that paper and other engineering studies of the debris environment typically assume either continuation of historical trends indefinitely or continuation with total launch cessation at an arbitrary date \citep{rossi1998, liou2006, bradleywein2009}. Open-access launch behavior is consistent with ``unrestrained launch activities'', current legal institutions around orbit use, and economic behavior. In \cite{kessler2}, ``runaway debris growth'' is defined in terms of a launch policy which holds the stock of ``intact objects'' (active satellites and unfragmented debris objects) constant. Such a policy is less consistent with rational economic behavior---eventually the required rate of replacement should make launchers prefer to invest their funds elsewhere. In any case, our definition encompasses all such definitions and makes the dependence on the launch policy explicit.} Definition \ref{def:kessler-syndrome} formalizes this.

\begin{defn}{(Kessler Syndrome)}
	\label{def:kessler-syndrome}
	The Kessler region for a launch policy $X$ is the set of states (satellite and debris levels) such that under $X$ the debris stock will grow to infinity, i.e.
	\begin{align*}
	    \kappa_X \equiv \{ (S,D) : \lim_{\tau \to \infty} D_{\tau+1} = \infty ~|~ S_t = S, D_t = D, X_t=X(S_t,D_t), \tau \geq t\}
	\end{align*}
Kessler Syndrome occurs when the system enters the Kessler region, i.e. $(S_t,D_t) \in \kappa_X$.
\end{defn}

The ``Kessler region'' in Definition \ref{def:kessler-syndrome} is simply the complement of the stable basin. That is, Kessler Syndrome occurs when the initial condition places the system outside the stable basin defined by the launch policy and physical dynamics---when $(S,D)$ is outside the gray shaded area in figures \ref{fig:oa-policy-phase}B and \ref{fig:opt-policy-phase}B. Figures \ref{fig:oa-policy-phase} and \ref{fig:opt-policy-phase} show that the Kessler region is larger under open access than under the social planner, reflecting Proposition \ref{prop:simple-kessler}. The Kessler threshold in the simple model from Section \ref{sec:simple_econ} now takes the form of a manifold in terms of $(S,D)$ which defines the boundary between the stable basin and the rest of the state space.  

The economics of orbit use plays a key role in determining the Kessler region. Given a parameterization of the physical dynamics, the initial condition on its own is insufficient to determine whether Kessler Syndrome will occur or not. The economic institutions governing orbit use must be specified to close that loop. Open access and a social planner are two particular institutions consistent with rational forward-looking behavior. The engineering literature tends to assume different institutions, e.g. that launches will continue at a rate which maintains the current total population of active objects in orbit regardless of collision risk \citep{kessler1}, or that the launch pattern from a given interval (e.g. 1997-2004 or 2010-2017) will repeat indefinitely \citep{liou2006, lewis2020understanding}. Our model shows that Kessler Syndrome projections not grounded in a model of rational forward-looking behavior may miss important behavioral responses. For example, open-access operators will stop launching when the collision risk gets too high, potentially preventing Kessler Syndrome. On the other hand, a large-enough increase in satellite profitability may lead open-access operators to rationally choose to cause Kessler Syndrome despite prior stability. 

Figure \ref{fig:oa-kessler-excess-return} illustrates the role of economics in orbital stability through two cases: one where the discount rate is high (panels A and C), and one where the discount rate is low (panels B and D). In all panels, we report two dynamic paths, one starting with a large number of debris and no satellites (dashed arrows), and one starting with zero debris and a large number of satellites (plain arrows). In panel A, open access does not lead to runaway debris growth for the plain path, but not for the dashed path. As the discount rate drops, the open-access stable basin shrinks until it disappears and Kessler Syndrome is guaranteed in all states of the orbit under open access, including the plain one (panel B). 

For the planner things are different. Under the same parameterization of panel A, both the dashed and the plain paths are stable and converge to a steady state. As the discount rate increases, the planner places less weight on debris accumulation in the future, which means that the external cost seen in the second and third terms of equation \eqref{eqn:ss-mec} is now lower for all $S$ and $D$; panel D shows that this can lead the planner to allow Kessler Syndrome along the optimal paths in states where it was not optimal before (dashed line). At the same time, along the stable path (plain line), the planner allows a higher number of satellites and a higher number of debris compared to panel C. 

\begin{figure}[t] 
	\centering
	\includegraphics[width=0.75\textwidth]{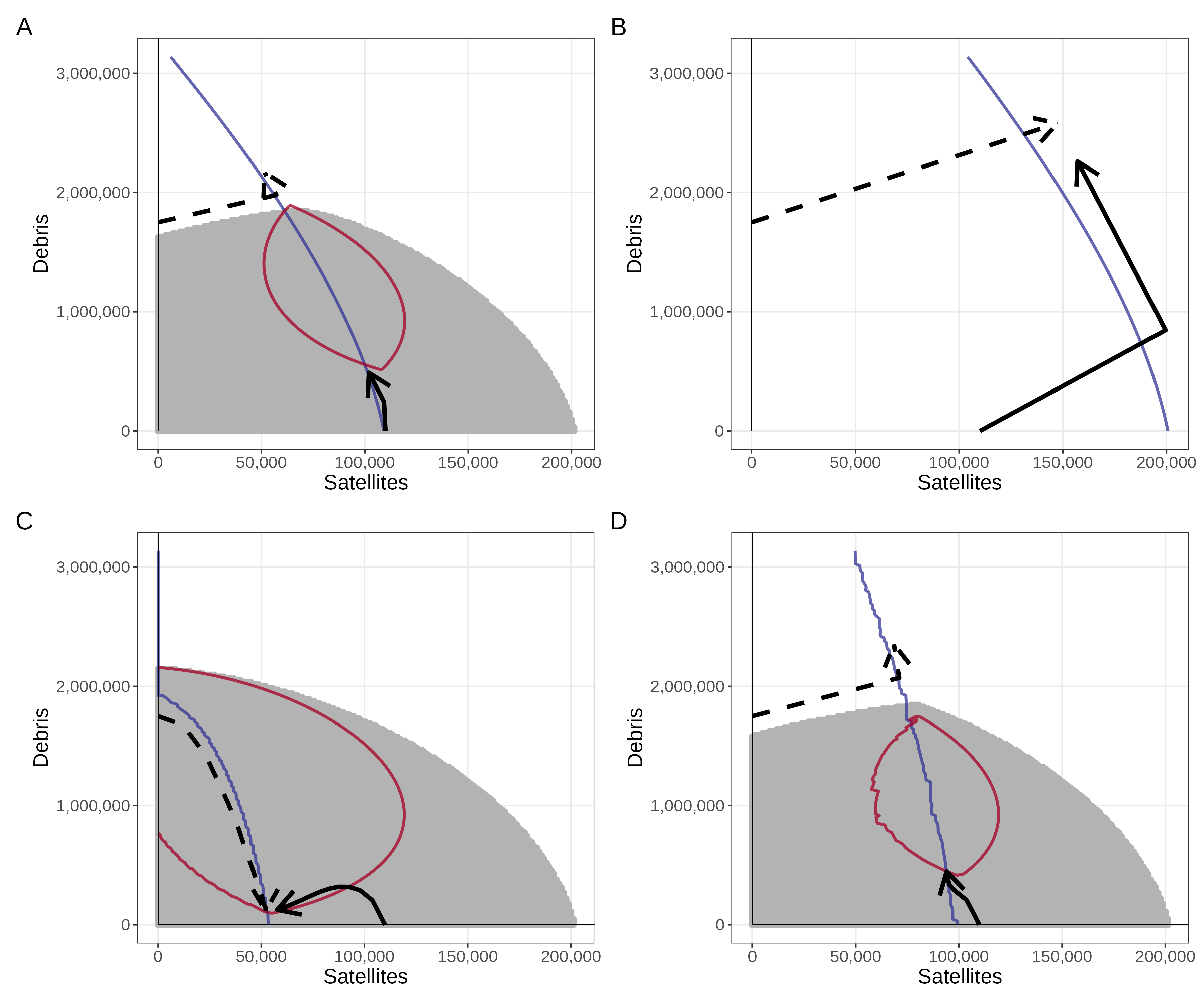} 
	\caption{\footnotesize Graphical analysis of emergence of Kessler Syndrome for paths for debris and satellites under two different inititial conditions: high debris, no satellites (dashed path), and high satellites, no debris (plain path). The upper row shows the open-access phase diagram and stable basin under two cases of launch costs and discount rates: high cost and low discount rate (low excess return) on the left, low cost and high discount rate on the right. The lower row shows the planner's phase diagram and stable basin under the same parameters. Discretization introduces small irregularities in panel D. Kessler Syndrome emerges whenever the path is outside the stable basin (gray area in the graphs).
	}
	\label{fig:oa-kessler-excess-return}
\end{figure}

The planner tends to allow Kessler Syndrome (even when they could avoid it by launching less) in states where the short-run return may justify launching, but the long-run returns will be low. Those tend to be states with very few satellites and high numbers of debris. While the planner would not reach those states had they started from a totally empty environment---$(S,D)=(0,0)$ is within the stable basin---the institutions preceding the planner's control may lead there. The history of LEO use suggests this may be plausible. From the late 1950s to the early 2000s, the majority of LEO users were government entities. This period saw the creation of the bulk of the large debris objects (i.e. rocket bodies and intact payloads) in orbit today, as the US and USSR (later Russia) conducted activities in LEO primarily for geopolitical and national security purposes. Commercial use of LEO began accelerating in the early 2000s and particularly in the late 2010s, driven by advances in miniaturization, computing, and software. Governments with non-commercial motives have continued to be large debris creators in this period---e.g. missile tests by China and Russia in 2007 and 2021---but a growing share of space traffic has been due to commercial operators. Were control of orbital space to be rationalized under a planner's regime tomorrow, the initial condition would reflect decades of geopolitical and open-access use.

\subsection{Simulation exercises}\label{sec:simulation}
Finally, to get a sense of the plausible range of times till Kessler Syndrome occurs (i.e. the system enters the runaway growth region $\kappa_X$ from Definition \ref{def:kessler-syndrome}), we calibrate the general model to the 600-650 km orbital shell and simulate open-access behavior forward for 200 years from the 2020 initial condition, $(S_{2020},D_{2020}) = (158, 626)$. We assume there are no changes in institutions, no debris removal technologies deployed, and no further anti-satellite missile tests.\footnote{There are several technological and institutional challenges associated with implementing debris removal, including current lack of feasibility and free-rider issues \citep{weedenIAC, klima2016space, klima2018space}. Anti-satellite missile tests have been a major source of debris so far \citet{lioujohnson2009, oltrogge2022iaa, pardini2023short}. Although the US has recently announced a moratorium on their own use of such tests, there is only limited progress towards a binding international agreement preventing their use \citep{panda2022us, foust2022united}.} To account for time-varying payoffs and potential market competition or economies of scale in operating satellites, we rewrite the per-period payoff as $p_t(S_t)$, where
\begin{equation}
    p_t(S_t) = \pi e^{at} (1 + \eta) S_t^{\eta},
    \label{eqn:time-varying-payoff}
\end{equation}
In the model so far we have taken $a = 0$ and $\eta = 0$, so $p_t(S_t) = \pi$. Equation \ref{eqn:time-varying-payoff} generalizes this payoff. The payoff consists of two components: factor productivity $\pi e^{at}$, and output elasticity $(1 + \eta) S^{\eta}$. The factor productivity parameter $\pi e^{at}$ represents the productivity of a satellite in an orbital slot at a reference time with exponential growth since then. The elasticity parameter $\eta$ reflects the occupancy elasticity of satellite output. 
We also allow time-variation in the per-period cost $F$, which we model as $F_t = c e^{bt}$. We calibrate the parameters $p, c$ such that when $\eta = 0$, $p_t(S_t)$ and $F_t$ match the 2020 aggregate revenues and costs of the satellite sector as reported by \citet{tsr2021}, and the parameter $b$ to match the empirical growth rate of those sectoral costs (2.5\%). We also modify the laws of motion to allow for imperfect collision avoidance maneuvers and limited satellite lifetimes. We describe the data and calibration procedures, as well as the modified laws of motion, in more detail in Appendix \ref{appendix:calibration}. \\

\noindent \textsc{Role of Autocatalytic Growth.} The first simulation exercise evaluates the claim that it is the component of autocatalytic growth in the debris growth function $G(S,D)$, represented by the partial derivative $G_D(S,D)$, that primarily matters for Kessler Syndrome, as implied by Proposition \ref{multipleOASS}. We simulate the model with values of the parameter controlling the new fragments from the debris-debris collision, $\beta_{DD}$ in equation \eqref{eqn:general-new-debris-function-example}, that range from 0 to 500, and for each, we compute the time at which the system enters the Kessler region, which we referred to as the ``Kessler time.'' Recall that entering the Kessler region does not imply that the orbit is unusable, but rather that the orbit has crossed a threshold beyond which eventual unusability is guaranteed.

Figure \ref{fig:oa-kessler-times}A reports these simulation results. Focusing on the black line, for any value above 450 fragments, the system is already in the Kessler region in 2023, the time of this writing; while for values below 120-150, the Kessler time exceeds 3000. In between, the Kessler time ranges between 2023 and 2130 for $\beta_{DD}$ between 450 and 400 fragments, with a steep slope that implies a high sensitivity of the stability of the system in that parameter range.  For values of $\beta_{DD}$ below 400, the decrease in Kessler time is more gradual. The middle (dark gray) line in the figure is computed with a value of 332 for the new fragments from satellite-debris collisions, the parameter $\beta_{SD}$ in equation \eqref{eqn:general-new-debris-function-example}. The two additional curves in the figure are computed with a higher and lower value of the same parameter, 100 and 600. The three curves essentially sit on each other for values of $\beta_{DD}$ above 200. In particular, they are indistinguishable for values between 400 and 450. The implication is that the Kessler time is very insensitive to the satellite-debris collision part of the debris growth function compared to the autocatalytic part represented by $\beta_{DD}$. The numerical results of figure \ref{fig:oa-kessler-times}A thus confirm the theoretical result in Proposition \ref{multipleOASS}: in the presence of endogenous launching decisions, Kessler Syndrome is primarily driven by the autocatalytic part of debris growth.\\

\noindent \textsc{Role of Returns Growth and Output Elasticity.} The second simulation exercise evaluates the impact of the growth rate of returns ($a$) and the output elasticity ($\eta$) on the emergence of Kessler Syndrome. There is as yet insufficient data on the satellite sector to conduct a detailed demand estimation exercise and identify the components of the occupancy elasticity, and the future growth rate of payoffs from satellite operation is uncertain. We therefore conduct a simple sensitivity analysis over a small range of occupancy elasticities, namely $\eta \in\{-0.2,-0.1,0\}$, consistent with the range identified in \citet{raoletizia2021},\footnote{\citet{raoletizia2021} estimate a discrete-choice model of sorting over orbits with satellite and debris stocks as orbital characteristics while holding the profile of launches fixed, finding both positive and negative responses to satellites in the same shell for different types of operators---commercial operators seem to prefer to avoid shells with other satellites, while civil government and military operators seem to prefer those shells. However, these substitution and complementarity effects cannot be separately identified in the aggregate $\eta$ we model without more granular data on revenues and costs for individual LEO satellite operators or a credible identification strategy. To the best of our knowledge the former is not available, and the latter will require further research and is beyond our scope here.} and consider a range of payoff growth rates $a$ between $2.5\%$ and $10\%$. 

Figure \ref{fig:oa-kessler-times}B shows the estimated Kessler times over the annual returns growth rate. Focusing on the black curve, which corresponds to the benchmark case of $\eta=0$, the results show that Kessler time is highly sensitive to the growth rate of payoffs. For rates above 8\%, the Kessler time is between 2040-2045, while for rates below 3\% it is larger than 2184. The impact of a negative elasticity is represented by the two additional curves, which show that Kessler time is increased across all growth rates, but in a smaller proportion for high growth rates. Specifically, for rates below 3\% the Kessler time is 2389 for $\eta=-0.1$, and 2592 for $\eta=-0.2$, while for rates above 8\% it is between 2060-2073 and 2079-2097, respectively. Overall, under the current parameterization, a negative elasticity has a positive impact on the Kessler time because it functions as a dampener on the incentive to launch when the stock of satellites increases. However, we emphasize that this result is not general. As we show in Appendix \ref{appendix:downward-demand}, a moderately negative elasticity might have a positive effect on the incentive to launch, with an ambiguous impact on the Kessler time. We leave the exploration of the interaction between output elasticity and Kessler time to future work.

\begin{figure}[t] 
	\centering
	\includegraphics[width=0.93\textwidth]{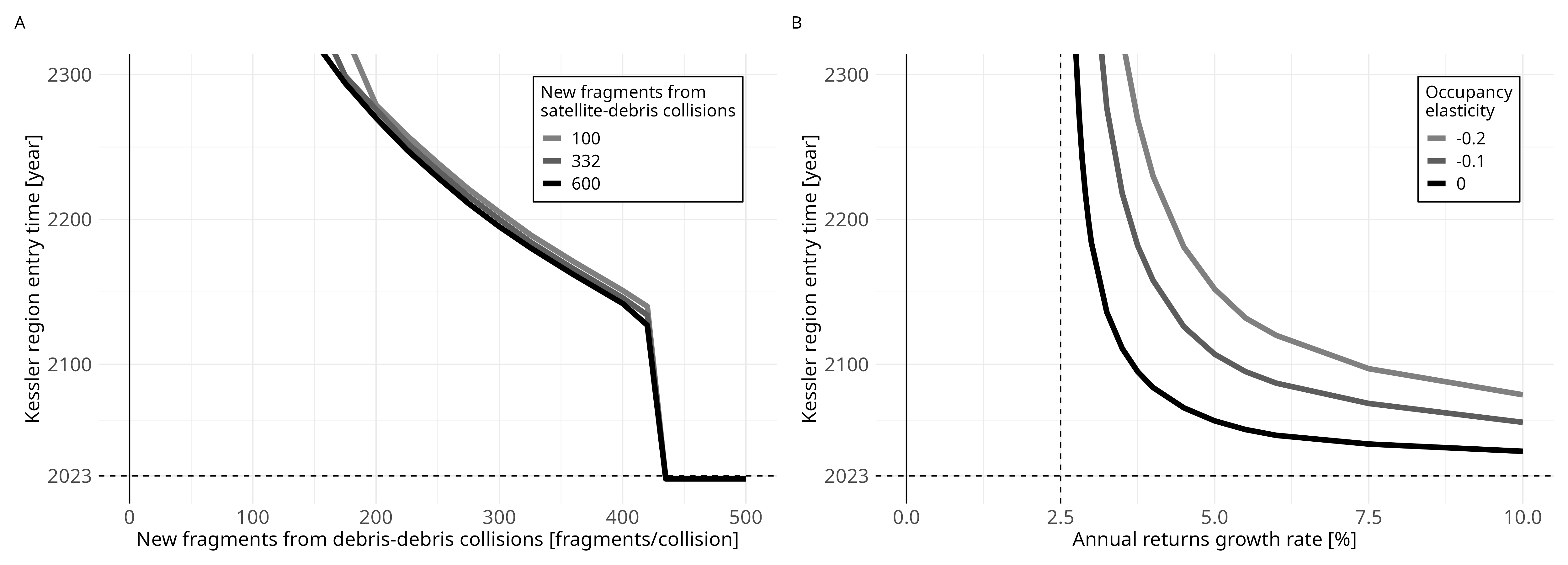} 
	\caption{\footnotesize Simulated Kessler Syndrome region entry times in the 600-650 km shell under different numerical specifications. Panel A plots Kessler times against debris-debris fragmentation, $\beta_{DD}$ in equation \eqref{eqn:general-new-debris-function-example}, for three values of satellite-debris fragmentation, $\beta_{SD}\in\{100, 332, 600\}$, and with $a=3\%$ and $\eta=0$. Panel B plots Kessler times against unitary satellite returns growth rates for three values of the output elasticity, $\eta \in\{-0.2,-0.1,0\}$, and with $\beta_{DD} = 326$ and $\beta_{SD} = 332$. In both panels the growth rate of satellite costs is fixed at 2.5\%. Details of the calibration are reported in Appendix \ref{appendix:calibration}.
	}
	\label{fig:oa-kessler-times}
\end{figure}

\section{Discussion and conclusion}
\label{sec:discussion}

\subsection{Discussion}
Orbit use bears some similarities to terrestrial resources, though there are also some significant differences. In this section we briefly discuss some of these similarities and differences. Where possible we refer to the simple model.

\paragraph{The source of orbit-use externalities:}

The fundamental coordination problem in orbit use is due to the lack of property rights over orbital slots and the potential for collisions between satellites. The OST thus induces open access to orbit. This open access creates orbit-use externalities due to collision risk, like congestion in a high-seas fishery. Though debris is a source of pollution, its damages are also due to the risk of collisions---not harms to consumers. All orbital externalities are thus reciprocal, and to some degree limited by the same incentives that drive operators to use the resource in the first place. But just as fishers in a high-seas fishery cannot claim exclusive rights over specific fish in a finite population, individual operators in low-Earth orbit cannot claim exclusive rights over specific ``slots'' in a finite (if large) set. Collisional rent dissipation induces resource overuse even as it is limited by expected economic returns \citep{gordon1954}. 

Equation \eqref{eqn:PLN} shows the centrality of collision risk in the externality problem: if the collision risk is exogenous, i.e. $q(\cdot) \to q$, then $q'(\cdot) \equiv 0$ and the planner's first-order condition matches the equilibrium condition. The externality would disappear as social and private marginal costs align. Slots would be claimed until marginal revenues equaled marginal costs and the allocation, whatever risks it entailed, would be economically efficient. But when collision risk is endogenous and marginal satellites do not internalize their effects on the fleet, the open-access equilibrium is inefficient. Note that open-access firms \emph{do} internalize collision risk directly, but without considering how their actions affect the risks facing others.\footnote{I.e. $q(\cdot)$ but not $q'(\cdot)$ is present in equation \eqref{eqn:EXP}.} Again, there is a parallel to open-access fishers accounting for the current stock level due to the effort it implies, but not how their harvests change future population growth. While the importance of collision risk can be seen in prior economic literature on orbit use (e.g. \citet{aac2015, rouillon2019, raoetal2020}), the simple model shows clearly how its endogeneity drives the externality even without debris ($\sigma=0$).

\paragraph{Hotelling's Rule:} Hotelling's Rule for optimal resource extraction requires that the rate of growth in the profits from resource extraction match the rate of interest. Here, Hotelling's Rule manifests in a requirement that the expected rate of return on capital invested in orbit use equals the discount rate (equation \ref{eqn:zeroprofit-rate}). There are two main differences between orbit use and resource extraction: 
\begin{enumerate}
    \item Increased orbit use reduces the survival rate of satellites and thus their expected revenues. This effect exists even if it does not decrease the price received by slot users. The problem is also not necessarily finite. In the absence of autocatalytic debris growth, orbital slots cannot be permanently exhausted---filled slots are ``replenished'' at the net rate of satellite demise and orbital decay.
    \item The endogenous risk of collision reduces the rate of return on marginal fleet assets (i.e. the rate of return on the fleet from adding a marginal satellite) below the rate of return on marginal satellite assets (i.e. the rate of return on the marginal satellite added). Thus, while both open access firms and the planner satisfy a version of Hotelling's Rule, the open-access condition (equation \ref{eqn:OPA} in the simple model, equation \ref{eqn:zeroprofit} in the full model) considers only a marginal satellite whereas the planner's optimality condition (equation \ref{eqn:PLN} in the simple model, equation \ref{eqn:optflowcond} in the full model) considers the marginal return on the fleet as a whole. A rule similar to the open-access condition here appears in the problem of drilling subject to pressure constraints studied in \cite{anderson2018hotelling}, where the decreasing rate of well pressure plays a similar role as the decreasing survival rate here.
\end{enumerate}

\paragraph{Discounting and renewable resources:} The fact that orbits can be renewed through satellite demise and orbital decay creates similarities to renewable terrestrial resources like fisheries. As in terrestrial renewable resources, the extent of orbit use under open access exceeds the extent of fleet-wide profit-maximizing use due to the lack of property rights over the underlying natural resource (fish in fisheries, slots in orbits). However, the role of the discount rate differs substantially. Biological assets like fish can reproduce without human investment, allowing a user to profit today at the expense of tomorrow by investing and harvesting many members of the species. Artificial assets like satellites require upfront investment and only generate payoffs over time---investing more today may reduce tomorrow's payoffs due to collisions and won't increase the payoffs received today. Changes in the discount rate thus have the opposite effect on orbit use compared to biological renewable resources. An increase in the discount rate can increase the rate of harvest in a fishery by diminishing the perceived future costs of reduced harvest levels (e.g. see equation (2.11) in \cite{clarkmunro1975}), but will decrease the rate of satellite launches by diminishing the perceived future benefits satellites deliver (i.e. increasing $r$ in equations \eqref{eqn:OPA} or \eqref{eqn:PLN} will decrease the number of launches).

\paragraph{Profit-maximizing extinction:} Finally, Kessler Syndrome has similarities to extinction of a biological population. When a biological population drops below its critical depensation level, it will eventually go extinct whether it is harvested or not. Similarly, once the debris population exceeds a critical threshold, orbital slots will become increasingly full of debris objects and unusable whether new satellites are launched or not. We observe similarities to results in \cite{clark1973profit} regarding optimal extinction of biological populations. As in \cite{clark1973profit}, even the planner can find it optimal to cause Kessler Syndrome if the returns from harvesting the resource over a finite interval exceed the present value of extending the length of the interval indefinitely. However, open-access firms will be ``more likely'' to cause Kessler Syndrome than the planner would. Intuitively, open access means that firms who might receive long-run returns from sustainable orbit use are unable to exclude marginal entrants from launching satellites and crossing the Kessler threshold. Since \emph{ceteris paribus} larger discount rates reduce the intensity of orbit use they also reduce the chances of Kessler Syndrome---again, a departure from the case of biological populations. 

\subsection{Conclusion}
\label{sec:conclusion}

In this paper we present a dynamic physico-economic model of orbit use under rational expectations with endogenous collision probability and Kessler Syndrome. We show how both economics and physics drive equilibrium and optimal orbital-use patterns, derive the external cost of a marginal satellite, and examine conditions under which Kessler Syndrome can be an equilibrium or optimal outcome. We highlight three messages regarding orbital-use management.

First, under open access too many firms will launch satellites because they won't internalize the risks they impose on other orbit users. This inefficiency is independent of whether Kessler Syndrome is possible or not, or even whether debris exists or not. Second, Kessler Syndrome is possible when there is autocatalytic debris growth. All else equal, Kessler Syndrome is more likely to be an equilibrium outcome as the rate of return on a satellite rises, even if firms respond to orbital congestion by launching fewer satellites. Third, the role of the discount rate in orbit use is unique. Unlike typical bioeconomic commons problems, \emph{ceteris paribus} higher discount rates induce less (rather than more) open-access and optimal resource use. But if the cost of deploying a satellite is low and the discount rate is high, Kessler Syndrome may maximize the net present value of orbit use over the long run, even when all external costs are internalized.

Commons management is one of the oldest problems in economics. Economists tend to favor property rights, corrective taxes, or other market-based mechanisms to solve them. While these mechanisms can ensure efficient orbit use, more data collection and research is needed to understand how orbital-use management policies should be designed in light of its unique features. Whether they are enforced by states, structured as self-enforcing agreements between private actors, or some combination of the two, effective orbital-use management policies must address the open access problem.

\newpage
{
	\setlength{\bibsep}{3pt}
	\setstretch{1}
	\bibliography{database}
	\bibliographystyle{jpe}
}

\begin{appendices}

\section{Downward-sloping demand for satellites}
\label{appendix:downward-demand}
In this Appendix, we extend the simple model to the case of per-period returns that depends negatively on the stock of satellites in orbit. Our goal is to show that Kessler Syndrome is still possible, with the conditions for its emergence appropriately modified. We begin by considering a technology that uses satellites to produce output. This output is an aggregate bundle of goods and services provided by different types of satellites, e.g. a composite good incorporating telecommunications, imaging, etc. We normalize the price of the composite output to $1$, and the unitary cost of a satellite input as $\pi>0$. The representative aggregator firm takes the unitary cost of a satellite as given and maximizes the per-period profits
\begin{equation*}
    \pi Z_t^{1+\eta} - p_t Z_t.
\end{equation*}
where $Z_t$ denote the number of satellites \textit{operating} at time $t$. The solution to the maximization problem corresponds to the demand faced by the satellite operators:
\begin{equation*}
    \pi(1+\eta)Z_t^{\eta} = p_t.
\end{equation*}
The case considered in the main text is equivalent to a situation where the aggregation technology is constant returns to scale, so that $\eta=0$, and $p_t = \pi$. The unitary return of a satellite can still change over time, something that we do assume in the fully dynamic model, but the change is exogenous to the stock of satellites, and it corresponds to a time-varying productivity $\pi$.

Let us consider the case of $\eta<0$, which implies that the more satellites operating in orbit, the lower the unitary return. An immediate implication of this negative relationship is that as the orbit fills with satellites, the return on satellites declines, and so does the incentive to launch additional satellites. It may appear that a downward-sloping demand curve makes it more difficult to congest the orbit and obtain Kessler Syndrome. However, this argument is incomplete. It is important to recognize that the orbit congestion depends on the total number of objects in orbit, while the unitary return depends on the objects in orbit that are \textit{still} operating. In other words, as the number of objects in orbit increases, the unitary return might also be increasing if the increase in the number of orbiting objects is primarily due to the increase in debris while operating satellites decline!

To see this in the context of our simple model, consider that in period $t=s$ the number of operating satellites is
\begin{equation}
Z = q(S)S
\end{equation}
while the total number of objects in orbit---which matters for the survival probability---is the number of satellites launched at $t=0$, $S$. So the unitary return in period $t=s$ decreases as $S$ increases only if
\begin{equation}
\frac{dZ}{dS} = q'(S)S + q(S) \geq 0. \label{downward-condition}
\end{equation}
If this condition does not hold, as more satellites are launched, the impact on the survival probability dominates the impact on the operating satellites, and the per-period returns are actually increasing in $S$, even when there is a downward-sloping demand for satellites. If we let $q(X) = 1-S/\bar{X}$, the condition above is violated whenever $S>\frac{1}{2}\bar{X}$, that is whenever the orbit is sufficiently congested. The insight offered by this simple example is that a downward-sloping demand for satellites, when combined with collision risk, might end up exacerbating the incentive to congest the orbit. We believe this positive feedback mechanism is interesting, but we leave a systematic analysis of its implications for future work. For the purpose of the current analysis we consider the argument above as a reassurance that a constant return $\pi$ is not an unreasonable assumption for our baseline model.

Assuming that operators face the demand curve above, one can show that the condition for Kessler Syndrome under open access becomes
\begin{equation}
    F\leq \frac{\pi(1+\eta)}{1+r}S_K^{\eta}q(S_K)^{1+\eta}.
\end{equation}
Compared to the case of $\eta=0$, Kessler Syndrome is more likely to occur in the presence of a downward-sloping demand when
\begin{equation}
    \left[S_K q(S_K)\right]^{\eta}>\frac{1}{1+\eta}.
\end{equation}
Using once again the functional form $q(X) = 1-S/\bar{X}$, this condition corresponds to
\begin{equation}
    \left[\bar{X}\left(\frac{\sqrt{1+4\sigma}-1}{2\sigma}\right)\left(\frac{1+2\sigma - \sqrt{1+4\sigma}}{2\sigma}\right)\right]^{\eta}>\frac{1}{1+\eta}.
\end{equation}
Numerical computations show that the impact of $\eta$ is non-monotonic. Setting $\bar{X}=1$ and $\sigma=1$ the inequality above holds approximately for $\eta\in(-0.5,0)$, with a peak in the gap at around $-0.3$, which means that for a moderately downward-sloping demand, the positive feedback mechanism described above is strongest. 

Taken together, the results just presented indicate that the introduction of a downward-sloping demand has an ambiguous effect on the emergence of Kessler Syndrome. Our maintained assumption of a constant return $\pi$ corresponds to balancing the two contrasting effects highlighted above. The positive effect articulated here is an interesting and potentially important extension of our analysis that we leave to future work.

\section{Proofs and derivations}

\subsection{Proofs omitted from main text}

\simpleKessler*

\begin{proof}
First, recall the definition of Kessler Syndrome in the simple model: a launch rate $S$ such that $S<\bar{X}$ while $g(S) \geq \bar{X}$. The smallest level of $S$ at which this condition can hold is $S_K$, since $g(S_K) = \bar{X}$, $g(S)$ is increasing in $S$ when $\sigma > 0$, and by assumption $S_K < \bar{X}$. \\

Suppose open access will cause Kessler Syndrome. Then $\hat{S}$ must be such that $\hat{S}<\bar{X}$ while $g(\hat{S}) \geq \bar{X}$, implying $\hat{S} \geq S_K$. The equilibrium condition becomes
\begin{equation}
    F = q (\hat{S}) \frac{\pi}{1 + r}.
\end{equation}

Since $q$ is decreasing in $S$, the above condition can be satisfied if and only if
\begin{equation}
    F \leq q (S_K) \frac{\pi}{1 + r}.
\end{equation}

Next, suppose the social planner will cause Kessler Syndrome. Then $S^*$ must be such that $S^*<\bar{X}$ while $g(S^*) \geq \bar{X}$, implying $S^* \geq S_K$. The optimality condition becomes
\begin{equation}
    F = [q (S^*) + S^* q' (S^*) ] \frac{\pi}{1 + r}.
\end{equation}

Since $q$ is decreasing in $S$, the above condition can be satisfied if and only if
\begin{equation}
    F \leq [q (S_K) + S^* q' (S_K) ]  \frac{\pi}{1 + r}.
\end{equation}

This completes the proof.
\end{proof}

\multiplicityOASS*

\begin{proof}
    The proposition asserts:
    \begin{enumerate}
        \item \emph{Existence of multiple steady states:} Given a positive excess return on a satellite, multiple open-access steady states can exist if debris objects can collide and produce new debris.
        \item \emph{Stability of steady states:} An open-access steady state will be stable if and only if
        \begin{equation}
            (G_D(S^*,D^*) - \delta) < \frac{L_D(S^*,D^*)}{L_S(S^*,D^*)}(G_S(S^*,D^*) + m(\frac{\pi}{F} - r)). \tag{\eqref{eqn:stability-condition}}
        \end{equation}
        \item \emph{Ordering of steady states:} When $G$ is strictly convex in both arguments and two steady states exist, the higher-debris is unstable.
    \end{enumerate}

Before proving them, we establish a useful reduction.

\paragraph{0. A useful reduction:} The open-access steady states are defined by equations \eqref{eqn:satLoM}, \eqref{eqn:debLoM}, and \eqref{eqn:zeroprofit}, combined with the conditions $D_t = D_{t+1} = D$ and $S_t = S_{t+1} = S$. Since $L$ is monotone increasing in both arguments it is invertible, and equation \eqref{eqn:zeroprofit} implicitly determines the number of satellites as a function of the amount of debris, the excess return on a satellite, and the collision rate function,
	\begin{equation}
	L(S,D) = \frac{\pi}{F} - r \implies S = S(\frac{\pi}{F} - r, D). \label{eqn:SSeqmcond}
	\end{equation}
	Since $L$ is monotone increasing in each argument, $S(\frac{\pi}{F} - r, D)$ is monotone decreasing in $D$. Since $S$ must be nonnegative, there exists a nonnegative $D^S : S(\frac{\pi}{F} - r, D) = 0 ~ \forall D \geq D^S$. Let $\hat{S}$ be the equilibrium satellite stock as a function of the debris stock. So we have
	\begin{align}
	    \hat{S} =
	    \begin{cases}
	        S(\frac{\pi}{F} - r, D) \text{ if } D \in [0,D^S) \\
	        0 \text{ if } D \geq D^S
	    \end{cases}
	\end{align}
	
	Using $\hat{S}$ we can reduce equations \eqref{eqn:satLoM}, \eqref{eqn:debLoM}, and \eqref{eqn:zeroprofit} to a single equation in debris,
	\begin{equation*}
	\mathcal{Y}(D) = -\delta D + G(\hat{S},D) + m (\frac{\pi}{F} - r) \hat{S}, \label{eqn:OASSredux}
	\end{equation*}
	
	with the solutions
    \begin{equation}
    \{\hat{D} \geq 0 : \delta \hat{D} = G(\hat{S},\hat{D}) + m (\frac{\pi}{F} - r) \hat{S}\}
	\end{equation}
	being the open-access steady states.

\paragraph{1. Existence of multiple steady states:} Using the above reduction, we focus our attention on solutions to equation \eqref{eqn:OASSredux}.  $\delta D$ is monotonically increasing in $D$ with $\delta D = 0$ when $D = 0$, and $m(\frac{\pi}{F}-r)\hat{S}$ is monotonically decreasing in $D$ with $\hat{S}>0$ when $D=0$, but $\hat{G} \equiv G(\hat{S},D)$ is nonmonotone in $D$. To see this, note
	\begin{align}
	    \frac{d\hat{G}}{dD}(\hat{S},D) &= \underbrace{\frac{\partial G}{\partial S}}_{\geq0} \underbrace{\frac{\partial \hat{S}}{\partial D}}_{\leq 0} + \underbrace{\frac{\partial G}{\partial D}}_{\geq 0}, \text{ with}\\
	   \frac{d\hat{G}}{dD}(\hat{S},0) &= \frac{\partial G}{\partial S} \frac{\partial \hat{S}}{\partial D} < 0 \text{ and} \\
	   \frac{d\hat{G}}{dD}(0,D^S) &= \frac{\partial G}{\partial D} > 0 ,
	\end{align}

where $\frac{\partial \hat{S}}{\partial D} = - \frac{L_D}{L_S} \leq 0$ by application of the Implicit Function Theorem on equation \eqref{eqn:zeroprofit}.

Let $\hat{D}$ be a solution to equation \eqref{eqn:OASSredux}. If $G_D > 0$, then $\hat{G}$ is nonmonotone in $D$ and the existence or uniqueness of $\hat{D}$ cannot be guaranteed. If $G_D$ is large enough, $\hat{D}$ will not exist; if $G_D$ is not too small, multiple $\hat{D}$ will exist. If $G_D = 0$, then the existence of $\hat{D}$ also ensures its uniqueness. If $G_D$ is strictly convex in both arguments, at most two $\hat{D}$ can exist.

\paragraph{2. Stability of steady states:} Since $\mathcal{Y}(D)$ is a reduction of the open-access dynamical system, its fixed points are isomorphic to the fixed points of equations \eqref{eqn:satLoM}, \eqref{eqn:debLoM}, and \eqref{eqn:zeroprofit}. The sign of $\frac{\partial \mathcal{Y}}{\partial D}$ at solutions to $\mathcal{Y}(D) = 0$ matches the sign of the respective eigenvalues of the full system.

Applying the Implicit Function Theorem to equation \eqref{eqn:SSeqmcond} to calculate $S_D$ and then differentiating $\mathcal{Y}$ in the neighborhood of an arbitrary solution $D^*$, we obtain
	\begin{equation}
	\frac{\partial \mathcal{Y}}{\partial D}(D^*) = (G_D(S^*,D^*) -\delta) - \frac{L_D(S^*,D^*)}{L_S(S^*,D^*)}(G_S(S^*,D^*) + m(\frac{\pi}{F} - r)),
	\end{equation}
where $S^* \equiv S(\frac{\pi}{F} - r, D^*)$. Both $G_S(S^*,D^*)$ and $m(\frac{\pi}{F} - r)$ are positive by assumption. So $\frac{\partial \mathcal{Y}}{\partial D}(D^*) < 0$ holds if and only if $\delta$ is small enough, or $\frac{L_D(S^*,D^*)}{L_S(S^*,D^*)}$ is large enough, i.e.

\paragraph{3. Ordering of steady states:} When $G$ is strictly convex in both arguments and $\mathcal{Y}(D) = 0$ has two solutions. Denote the smaller solution by $\ubar{D}$, and the larger solution by $\bar{D}$. The curve $G(\hat{S},D) + m (\frac{\pi}{F} - r) \hat{S}$ is above $\delta D$ when $D = 0$, and again as $D \to \infty$. $\mathcal{Y}(D)$ must therefore approach 0 from above as $D \to \ubar{D}$ from the left, and from below as $D \to \bar{D}$ from the left. This implies that at $\ubar{D}$,
	\begin{equation}
	\frac{\partial \mathcal{Y}}{\partial D}(\ubar{D}) = (G_D(\ubar{S},\ubar{D}) -\delta) - \frac{L_D(\ubar{S},\ubar{D})}{L_S(\ubar{S},\ubar{D})}(G_S(\ubar{S},\ubar{D}) + m(\frac{\pi}{F} - r)) < 0
	\end{equation}
	
and at the second solution, $\bar{D}$,
	\begin{equation}
	\frac{\partial \mathcal{Y}}{\partial D}(\bar{D}) = (G_D(\bar{S},\bar{D}) -\delta) - \frac{L_D(\bar{S},\bar{D})}{L_S(\bar{S},\bar{D})}(G_S(\bar{S},\bar{D}) + m(\frac{\pi}{F} - r)) > 0.
	\end{equation}
where $\ubar{S} = \hat{S}(\ubar{D})$ and $\bar{S} = \hat{S}(\bar{D})$.
\end{proof}

\overshootingOASS*

\begin{proof}
    We first define the following sets and functions, where $S,D \geq 0$ is assumed:
\begin{itemize}
    \item The action region: the set of states with positive open-access launch rates,
    \begin{equation}
        A \equiv \left\{ (S,D) : \frac{\pi}{F} - r - L(S',D') \geq 0 \right\},
    \end{equation}
    where 
    \begin{align*}
        S' &= S(1-L(S,D)) + X \\
        D' &= D(1-\delta) + G(S,D) + mX,
    \end{align*}
    
    \item The equilibrium manifold:
    \begin{equation}
        E \equiv \left\{ (S,D) : \frac{\pi}{F} - r - L(S,D) = 0 \right\}.
    \end{equation}
    \item The stable open-access steady state: following the reduction used in the proof of Proposition \ref{multipleOASS}, we characterize the stable open-access steady state as 
    \begin{align}
        E_s \equiv \bigg\{(\hat{S},D) &: \mathcal{Y}(D) = - \delta D + G(\hat{S},D) + m(\frac{\pi}{F} - r)\hat{S}=0, \nonumber \\
        \hat{S} &: L(\hat{S},D) = \frac{\pi}{F} - r,~~ \mathcal{Y}'(D) < 0 \bigg\}.
    \end{align}
     \item The physical dynamics: the mapping $P_{SD} : \mathbb{R}^2_+ \to \mathbb{R}^2_+$ which describes the effect of orbital mechanics on the satellite and debris stocks in one period,
    \begin{align}
        P_{SD}(S,D) & \equiv \left( S(1-L(S,D)) , ~~ D(1-\delta) + G(S,D) \right).
    \end{align}
    
    \item The one-step set: the set of states from which one period's physical dynamics, followed by launching, will read an open-access steady state,
     \begin{equation}
        A_{P1} \equiv \{ (S,D) : P_{SD}(S,D)+(X, mX) \in E_s,~~ X \in (0,\bar{X}] \}.
    \end{equation}
    
    \item The one-step ray: the set of states from which one period of launching will reach an open-access steady state,
    \begin{equation}
        A_1 \equiv \left\{ (S,D) : (S+X, D+mX) \in E_s,~~ X \in (0,\bar{X}] \right\},
    \end{equation}
    where $m$ is the same as in the debris law of motion. The one-step ray can be viewed as part of a decomposition of the satellite and debris laws of motion: after a period's physical dynamics have been applied, launches to the stable steady state occur from the one-step ray. The one-step set encompasses both of these components.
\end{itemize}

Our proof proceeds in three steps. First, we show that initial conditions in the action region $A$ reaching points on the equilibrium manifold $E \setminus E_{S}$ must overshoot an open-access steady state. Second, we establish the bijectivity of the physical dynamics $P_{SD}$. Third, we show that these results imply that the one-step set $A_{P1}$ has zero Lebesgue measure on $A$. \\

\textbf{1. Initial conditions in the action region $A$ reaching points on the equilibrium manifold $E \setminus E_{S}$ must overshoot an open-access steady state:} Since the launch rate constraint does not bind, any point in $A$ will by definition reach a point in $E$. Since $E_s$ contains at most one element given the strict convexity of $G$ while $E$ is a manifold, $E_s \subset E$. Given that $L$ is increasing in both arguments, points in $E \setminus E_s $ must therefore have either larger $S$ and smaller $D$ than $E_s$, or vice versa. Consequently, reaching points in $E \setminus E_s$ constitutes overshooting $E_s$ in one state variable and undershooting in the other. \\

\textbf{2. Bijectivity of the physical dynamics $P_{SD}$:} To show that $P_{SD}$ is a bijection on $\mathbb{R}^2_{+}$, we separate the physical dynamics into two functions $P_S, P_D : \mathbb{R}^2_+ \to \mathbb{R}_+$,
\begin{align}
    P_S(S,D) &= S(1-L(S,D)), \\
    P_D(S,D) &= D(1-\delta) + G(S,D)  .
\end{align}

$P_D$ is a sum of strictly monotone increasing functions, so is strictly monotone increasing as well. Strictly monotone functions are bijections, so $P_D$ is a bijection. So for two arbitrary pairs $(S_1,D_1)$ and $(S_2,D_2)$ we have
\begin{equation}
    P_{SD}(S_1,D_1) = P_{SD}(S_2,D_2) \iff P_{S}(S_1,D_1) = P_{S}(S_2,D_2) ~\&~ P_{D}(S_1,D_1) = P_{D}(S_2,D_2)
\end{equation}
$P_S$ is a function, so we have $(S_1,D_1) = (S_2,D_2) \implies P_{S}(S_1,D_1) = P_{S}(S_2,D_2)$, but since $S L(S,D)$ may be non-monotone the other direction may not hold. Since $P_D$ is a bijection, $P_{D}(S_1,D_1) = P_{D}(S_2,D_2) \iff (S_1,D_1) = (S_2,D_2)$. Putting this together we have the following:
\begin{itemize}
    \item If $(S_1,D_1) = (S_2,D_2)$, then $P_{SD}(S_1,D_1) = P_{SD}(S_2,D_2)$.
    \item If $P_{SD}(S_1,D_1) = P_{SD}(S_2,D_2)$, then $P_{S}(S_1,D_1) = P_{S}(S_2,D_2)$ and $P_{D}(S_1,D_1) = P_{D}(S_2,D_2)$. While there may exist a pair $(S_1,D_1) \neq (S_2,D_2)$ such that $P_{S}(S_1,D_1) = P_{S}(S_2,D_2)$, the bijectivity of $P_D$ means $P_{D}(S_1,D_1) \neq P_{D}(S_2,D_2)$. 
\end{itemize}

Consequently, $P_{SD}(S_1,D_1) = P_{SD}(S_2,D_2)$ if and only if  $(S_1,D_1) = (S_2,D_2)$, i.e. $P_SD$ is a bijection on $\mathbb{R}^2_{+}$. \\

\textbf{3. The one-step set $A_{P1}$ has zero Lebesgue measure on $A$:} 

By definition, $A_1 \subseteq A$. Since $E_s$ contains at most one element, $A_1$ is a single line segment, so $A_1 \subset A$. The Lebesgue measure on $A$ of $A_1$ is therefore zero.

Since $P_{SD}$ is a bijection, the Lebesgue measure of the pre-image of $A_1$ under $P_{SD}$,
\[P^{-1}_{SD}(A_1) \equiv \left \{ (S,D) : P_{SD}(S,D) \in A_1 \right \},\]
is the same as the Lebesgue measure of $A_1$. Since $P^{-1}_{SD}(A_1) = A_{P1}$, the Lebesgue measure on $A$ of $A_{P1}$ is also zero. Lebesgue measure is isomorphic to any non-atomic probability measure, so the one-step set is measure zero under any non-atomic probability measure. This gives the desired result: initial conditions with positive open-access launch rates will overshoot the stable open-access steady state except on a set of measure zero.

\end{proof}

\subsection{Optimal launch policy and external cost}
\label{appendix:MECderivation}
The infinite-horizon sequence version of the fleet planner's problem is
\begin{align}
    \max_{\{X_t, S_{t+1}, D_{t+1}\}_{t=0}^{\infty}}& S_t Q(S_t,D_t,X_t) + \frac{1}{1+r} \sum_{\tau=t}^{\infty} \frac{1}{1+r}^{\tau - t - 1}X_{\tau}(\frac{1}{1+r} Q(S_{\tau+1},D_{\tau+1},X_{\tau+1}) - F) \label{eqn:plan-sequence-problem}\\
\text{s.t. }    
    Q(S_t,D_t,X_t) &= \pi + \frac{1}{1+r}(1 - L(S_t,D_t)) Q(S_{t+1}, D_{t+1}, X_{t+1}) \\
    S_{t+1} &\leq S_t(1-L(S_t,D_t)) + X_t \\
    D_{t+1} &\geq D_t(1-\delta) + G(S_t,D_t) + m X_t\\
    X_t &\in [0,\bar{X}] ~~ \forall t \\
    S_{t+1} & \geq 0, D_{t+1} \geq 0 \label{constraint}\\
    S_0 &= s_0, D_0 = d_0
\end{align}

For generality, we include an upper bound $\bar{X}$ on the allowable launch rate. If this never binds then the appropriate shadow value will simply be identically zero ($\gamma_{\bar{X}_t} \equiv 0$). The planner's Lagrangian is
\begin{align}
    \mathscr{L}(X,S,D,\lambda,\gamma) &= \sum_{t=0}^{\infty} \left( \frac{1}{1+r} \right)^t\bigg\{\pi S_t - F X_t + \lambda_{S_t} \left( S_t(1 - L(S_t,D_t)) + X_t - S_{t+1} \right)  \nonumber\\
    &+ \lambda_{D_t} \left( D_{t+1} - D_t(1-\delta) - G(S_t,D_t) - mX_t \right) \nonumber\\
    &+ \gamma_{X_t} X_t + \gamma_{\bar{X}_t}(\bar{X} - X_t) + \gamma_{S_t} S_{t+1} + \gamma_{D_t} D_{t+1}   \bigg\} \label{eqn:plan-lagrangian}
\end{align}

The first-order necessary conditions for an optimal launch path are, $\forall t$ up to $T$,
\begin{align}
     \mathscr{L}_{X_t} &= -F + \lambda_{S_t} - m \lambda_{D_t} + \gamma_{X_t} - \gamma_{\bar{X}_t} = 0\label{FOCi}\\
    \mathscr{L}_{S_{t+1}} &=  \frac{1}{1+r}  \{ \pi + \lambda_{S_{t+1}} (1 - L(S_{t+1},D_{t+1}) - S_{t+1} L_S(S_{t+1},D_{t+1})) \nonumber\\ &- \lambda_{D_{t+1}} G_S(S_{t+1},D_{t+1})\} + \gamma_{S_t} - \lambda_{S_t} = 0\label{FOCii}\\
    \mathscr{L}_{D_{t+1}} &=  \frac{1}{1+r}  \{ \lambda_{D_{t+1}} (\delta - 1 - G_D(S_{t+1},D_{t+1})) - \lambda_{S_{t+1}} S_{t+1} L_D(S_{t+1},D_{t+1})\} +\lambda_{D_t}+ \gamma_{D_t} = 0 \label{FOCiii}\\
    \mathscr{L}_{S_{T+1}} &=  \gamma_{S_T} - \lambda_{S_T} = 0 \label{FOCiv}\\
     \mathscr{L}_{D_{T+1}} &= \lambda_{D_T}+ \gamma_{D_T} = 0 \label{FOCv}
\end{align}
with complementary slackness and transversality conditions
\begin{align}
    \lambda_{St} \left( S_t(1 - L_t + X_t - S_{t+1} \right) & = 0  \\
    \lambda_{Dt} \left( D_{t+1} - D_t(1-\delta) - G_t - mX_t \right) & = 0  \\
    \gamma_{Xt} X_t &= 0, \\
    \gamma_{\bar{X}t}(\bar{X} - X_t) &= 0, \\
    \gamma_{St} S_{t+1} &= 0, \\
    \gamma_{Dt} D_{t+1} &= 0  \\
    \lim_{T \to \infty} \left( \frac{1}{1+r} \right)^{T} \lambda_{ST}S_{T+1} &= 0  \\ 
    \lim_{T \to \infty} - \left( \frac{1}{1+r} \right)^{T}\lambda_{DT}D_{T+1} &= 0.
    \label{debrisTVC}
\end{align}

In what follows we drop time subscripts to reduce notational clutter. Period $t$ values are shown with no subscript, period $t+1$ values are marked with a $'$ after the variable, and period $t-1$ values are marked with a $'$ before the variable e.g. $S_{t-1} \equiv~ 'S$, $S_t \equiv S$, $S_{t+1} \equiv S'$. By ($\ref{FOCi}$),
\begin{align}
    \lambda_S &= (1+r)(F + \frac{1}{1+r} m \lambda_D - \gamma_{X} + \gamma_{\bar{X}}) \label{WS_lagrangian}.
\end{align}

In the next period, this becomes
\begin{align}
    \lambda_S' &= (1+r)(F + \left(\frac{1}{1+r}\right) m \lambda_D' - \gamma_{X}' + \gamma_{\bar{X}}') \label{WS2_lagrangian}.
\end{align}

By ($\ref{FOCii}$) and ($\ref{FOCiii}$),
\begin{align}
    \lambda_S &= \pi + (1+r)\gamma_S + \frac{1}{1+r} \{\lambda_S' (1 - L(S',D') - S' L_S(S',D')) - \lambda_D' G_S(S',D')\}  \\ 
    \lambda_D &= \frac{1}{1+r} \{\lambda_D' (1 + G_D(S',D') - \delta) + \lambda_S' S' L_D(S',D') \} - (1+r)\gamma_D.
\end{align}
Using ($\ref{WS2_lagrangian}$),
\begin{align}
   \lambda_S &= \pi + (1+r)\gamma_S - F(L(S',D') + S' L_S(S',D') -1) -\frac{1}{1+r} \lambda_D G_S(S',D') \\ &-\frac{1}{1+r} m \lambda_D(L(S',D') + S' L_S(S',D') -1) + (L(S',D') + S' L_S(S',D') -1)(\gamma_{X}' - \gamma_{\bar{X}}') \\
    \lambda_D &= F S' L_D(S',D') + \frac{1}{1+r} \lambda_D' (1 + G_D(S',D') - \delta) + \frac{1}{1+r} m \lambda_DS' L_D(S',D') \\ & - \bigg ((1+r)\gamma_D + S' L_D(S',D') (\gamma_{X}' - \gamma_{\bar{X}}' \bigg)
\end{align}

Define
\begin{align}
    \alpha_1' &= \pi + (1 - L(S',D') - S' L_S(S',D')) F\\
    \alpha_2' &= S' L_D(S',D') F \\
    \Gamma_1' &= G_S(S',D') - m(1 - L(S',D') - S' L_S(S',D')) \\
    \Gamma_2' &= 1 - \delta + G_D(S',D') + m S' L_D(S',D')\\
    \kappa_1' &= (1+r)\gamma_S - (\gamma_{X'} - \gamma_{\bar{X'}})(1 - L(S',D') - S' L_S(S',D'))\\
    \kappa_2' &= (1+r)\gamma_D + S' L_D(S',D') (\gamma_{X'} - \gamma_{\bar{X'}}),
\end{align}
so that
\begin{align}
   \lambda_S &= \alpha_1' - \frac{1}{1+r} \lambda_D' \Gamma_1' + \kappa_1' \label{WS_simplified_lagrangian}\\
    \lambda_D &= \alpha_2' + \frac{1}{1+r} \lambda_S' \Gamma_2' - \kappa_2'. \label{WD_simplified_lagrangian}
\end{align}

Then,
\begin{equation}
    \lambda_D' = \frac{\lambda_D - \alpha_2' + \kappa_2'}{\frac{1}{1+r} \Gamma_2'}. \label{WD2_lagrangian}
\end{equation}
Substitute ($\ref{WS_lagrangian}$) and ($\ref{WD2_lagrangian}$) in ($\ref{WS_simplified_lagrangian}$) to get the following expression for $W_D(S,D)$
\begin{equation}
     \frac{\frac{1}{1+r}\{\Gamma_1'(\alpha_2' - \kappa_2') +  \Gamma_2'(\alpha_1' + \kappa_1') \} + \Gamma_2'(\gamma_{X}-\gamma_{\bar{X}}-F)}{\frac{1}{1+r} (\Gamma_1' + m \Gamma_2')}. \label{WD_elongated_lagrangian}
\end{equation}

Iterate $\ref{WD_elongated_lagrangian}$ to period $t+1$ and substitute into \ref{WD2_lagrangian} to obtain

\begin{equation}
     \lambda_D' = \frac{\frac{1}{1+r}\{\Gamma_1''(\alpha_2'' - \kappa_2'') +  \Gamma_2''(\alpha_1'' + \kappa_1'') \} + \Gamma_2''(\gamma_{X}'-\gamma_{\bar{X}}'-F)}{\frac{1}{1+r} (\Gamma_1'' + m \Gamma_2'')}.  \label{WD2_elongated_lagrangian}
\end{equation}

Use ($\ref{WD_elongated_lagrangian}$) and ($\ref{WD2_elongated_lagrangian}$) in ($\ref{WD_simplified_lagrangian}$) to get
\begin{equation}
       \alpha_1' = m(\alpha_2' - \kappa_2') - \kappa_1' + \frac{1}{\frac{1}{1+r}} (\gamma_{\bar{X}} - \gamma_X + F) + \frac{\Gamma_1' + m \Gamma_2'}{\Gamma_1'' + m \Gamma_2''} \bigg ( \Gamma_1'' \frac{1}{1+r} (\alpha_2'' -  \kappa_2'') + \Gamma_2''(\frac{1}{1+r} (\alpha_1'' +  \kappa_1'') - F + \gamma_{X}' - \gamma_{\bar{X}}')\bigg). \label{approach2_alpha1'}
\end{equation}
Evaluate ($\ref{approach2_alpha1'}$) in the previous time period as:
\begin{align}
\hspace*{-2cm}
    \alpha_1 = m(\alpha_2 - \kappa_2) - \kappa_1 + \frac{1}{\frac{1}{1+r}} ('\gamma_{\bar{X}} - '\gamma_{X} + F) + \frac{\Gamma_1 + m \Gamma_2}{\Gamma_1' + m \Gamma_2'} \bigg ( \Gamma_1' \frac{1}{1+r} (\alpha_2' -  \kappa_2') + \Gamma_2'(\frac{1}{1+r} (\alpha_1' +  \kappa_1') - F + \gamma_{X} - \gamma_{\bar{X}})\bigg) .
\end{align}
Subtract $F (\frac{1}{\frac{1}{1+r}} + L')$ from both sides and add $F (L + S L_S)$ to both sides to obtain
\begin{align}
\hspace*{-2cm}
    \pi - r F - F L(S',D') &= F (L(S,D) + S L_S(S,D) - L(S',D')) + m(\alpha_2 - \kappa_2) - \kappa_1 + \frac{1}{\frac{1}{1+r}} ('\gamma_{\bar{X}} - '\gamma_{X})  + \frac{\Gamma_1 + m \Gamma_2}{\Gamma_1' + m \Gamma_2'} \nonumber\\ 
\hspace*{-2cm}
    &\bigg( \Gamma_1' \frac{1}{1+r} (\alpha_2' -  \kappa_2') + \Gamma_2'(\frac{1}{1+r} (\alpha_1' +  \kappa_1') - F + \gamma_{X} - \gamma_{\bar{X}})\bigg)\\
\hspace*{-2cm}
    \implies \xi(S',D') &= \underbrace{L_S(S,D) S F + \left( L(S,D) -L(S',D')\right)F}_{\text{\shortstack{Congestion channel}}} +\underbrace{\frac{\Gamma_1 + m \Gamma_2}{\Gamma_1' + m \Gamma_2'}\Gamma_2' (\frac{1}{1+r} \alpha_1' - F)}_{\text{\shortstack{Pollution persistence channel}}} + \underbrace{\frac{1}{1+r} \frac{\Gamma_1 + m \Gamma_2}{\Gamma_1' + m \Gamma_2'}\Gamma_1' \alpha_2'}_{\text{\shortstack{Pollution hazard channel}}} \nonumber \\
\hspace*{-2cm}
    &\underbrace{+ m\alpha_2}_{\text{\shortstack{Pollution\\hazard\\channel}}}+ \underbrace{\frac{\Gamma_1 + m \Gamma_2}{\Gamma_1' + m \Gamma_2'}\bigg(\Gamma_2'(\frac{1}{1+r}\kappa_1' + \gamma_{X} - \gamma_{\bar{X}}) - \frac{1}{1+r} \Gamma_1' \kappa_2'\bigg) - (m\kappa_2 + \kappa_1) + \frac{1}{\frac{1}{1+r}}(\gamma_{\bar{'X}} - \gamma_{'X})}_{\text{\shortstack{Adjustments for prior or upcoming corner solutions}}}. \label{eqn:general-mec}
\end{align}

Along an interior launch path, the MEC $\xi(S',D')$ reduces to
\begin{align}
\hspace*{-1.5cm}
    \xi(S',D') &= L_S(S,D) S F + \left( L(S,D) -L(S',D')\right)F +\frac{\Gamma_1 + m \Gamma_2}{\Gamma_1' + m \Gamma_2'}\Gamma_2' (\frac{1}{1+r} \alpha_1' - F) + \frac{1}{1+r} \frac{\Gamma_1 + m \Gamma_2}{\Gamma_1' + m \Gamma_2'}\Gamma_1' \alpha_2' + m\alpha_2,
\end{align}

and in an interior steady state the MEC further reduces to
\begin{align}
    \xi(S,D) &= L_S(S,D) S F + \Gamma_2 (\frac{1}{1+r} \alpha_1 - F) + \frac{1}{1+r} (\Gamma_1 + m) \alpha_2.
\end{align}

\subsection{The collision probability and new fragment formation functions}
\label{appendix:kinetic_gas_model}

In this section we derive the functional forms of the collision probability and new fragment functions, discuss the physical assumptions they encode, and describe our process for calibrating the physical model in more detail.

For numerical simulations, we model the probability that objects of type $j$ are struck by objects of type $k$ as
\begin{align}
\label{lossprobgeneric}
p_{jk}(k_t) = 1 - e^{-\alpha_{jk} k_t},
\end{align}

where $\alpha_{jk} > 0$ is a physical parameter (``intrinsic collision probability'') reflecting the relative mean sizes, speeds, and inclinations of the object types (see \cite{letizia2016space} for a derivation of the physical content of $\alpha_{jk}$). The probability a satellite is destroyed is the sum of the probabilities it is struck by debris and by other satellites, adjusted for the probability it is struck by both. For satellite-satellite and satellite-debris collisions, equation \ref{lossprobgeneric} gives us
\begin{align}
\label{lossformgeneric}
L(S,D) &= p_{SS}(S) + p_{SD}(D) - p_{SS}(S)p_{SD}(D) \\
&= (1 - e^{-\alpha_{SS} S}) + (1 - e^{-\alpha_{SD} D}) - (1 - e^{-\alpha_{SS} S})(1 - e^{-\alpha_{SD} D}) \nonumber \\
\label{lossform}
\implies L(S,D) &= 1 - e^{-\alpha_{SS}S -\alpha_{SD}D} .
\end{align}

We write the new fragment formation function as
\begin{align}
G(S,D) = F_{SD}p_{SD}(D) + F_{SS}p_{SS}(S) + F_{DD}p_{DD}(D),
\end{align}

where $F_{jk}$ is the number of fragments produced in a collision between objects of type $j$ and $k$. Letting $F_{SS} = \beta_{SS}S$, $F_{SD} = \beta_{SD} S$, and $F_{DD} = \beta_{DD} D$ where $\beta_{jk} > 0$ is a physical parameter reflecting the physical compositions and masses of the colliding objects, and using the forms in equation \ref{lossprobgeneric}, we obtain
\begin{align}
\label{newfragform}
G(S,D) = \beta_SS (1 - e^{-\alpha_{SS} S}) S + \beta_{SD} (1 - e^{-\alpha_{SD} D}) S + \beta_{DD} (1 - e^{-\alpha_{DD} D}) D .
\end{align}

The form in equation \ref{lossform} is convenient as it allows us to solve explicitly for the open access launch rate and is easy to manipulate. Similar forms have been used in engineering studies of the orbital debris environment, and are currently used by the European Space agency in developing indices to study the long-term evolution of the orbital environment \citep{letizia2016space, letiziaetal2017, letiziaetal2018}.

To derive equation \ref{lossform}, we consider balls (satellites and debris) being placed into bins (the set of all possible orbital paths within the shell of interest). The probability of a specific satellite being struck by another object is then equivalent to the probability that a randomly-placed ball ends up in a bin containing the specific ball we are focusing on. This is a version of the ``pigeonhole principle'', used in \cite{bealetal2020} to derive a similar form for satellite-satellite collisions. 

Suppose we have $b$ equally-sized bins and $n+1$ balls in total, where $b \geq n+1$. Without loss of generality, we label the ball we are interested in as $i$. We will first place $i$ into an arbitrary bin, and then drop the remaining $N$ balls into the $b$ bins with equal probability over bins. The probability a ball is dropped into a given bin is $\frac{1}{b}$, and the probability a ball is not dropped into a given bin is then $\frac{b-1}{b} = 1 - \frac{1}{b}$. As we drop the remaining $n$ balls, the probability that none of the balls is dropped in the same bin containing $j$ is 
\begin{equation}
Pr(\text{no collision with $i$}) = \left( 1 - \frac{1}{b} \right)^n
\end{equation}

Consequently, the probability that any of the $n$ balls are dropped into $i$'s bin is
\begin{equation}
Pr(\text{collision with $i$}) = 1 - \left( 1 - \frac{1}{b} \right)^n .
\end{equation}

Now suppose we are interested in the probability that members of a collection of $j$ balls, $1 \leq j < b$, end up in a bin with one of the remaining $n+1-j$ balls. The probability that any of the remaining balls end up in a bin with any of the $j$ balls we are interested in is then
\begin{equation}
Pr(\text{collision with $i$}) = 1 - \left( 1 - \frac{j}{b} \right)^{n+1-j} .
\end{equation}

As the number of bins and balls grow large ($\lim_{b,n \to \infty}$), we obtain
\begin{equation}
Pr(\text{collision with $i$}) = 1 - e^{-j} .
\end{equation}

Though neither the number of objects in orbits nor the possible positions they could occupy is infinite, the negative natural exponential form is likely a reasonable approximation. If we suppose that we have two types of balls $j$ and $k$ of different sizes and bins the size of the smallest type of ball, we get that the probability a ball of type $k$ is dropped into in a bin with a ball of type $j$ as
\begin{align}
Pr(\text{$k$--$j$ collision}) &= 1 - \left( 1 - \frac{\alpha_{jk}k}{b} \right)^{n+1-k} \\
\implies \lim_{b,n \to \infty } Pr(\text{$k$--$j$ collision}) &= 1 - e^{-\alpha_{jk} k},
\end{align}

which is the form in equation \ref{lossprobgeneric}, where $\alpha_{jk}$ is a nonnegative parameter indexing the relative sizes of objects $j$ and $k$. In the orbital context, $\alpha_{jk}$ reflects not only the sizes of the objects but also their relative speeds and inclinations. From here we obtain the form of $L$ by applying standard rules of probability to satellite-satellite and satellite-debris collisions. Equation \ref{newfragform} follows from the form of $L$.

This ``kinetic gas-like'' approximation is used extensively in the space debris modeling literature as a tractable approximation of results from more complex and computationally-intensive orbital mechanics simulators. It is most suitable for long-term modeling studies with ``large'' (relative to the timescale of orbital interactions) time steps. As described in \cite{letizia2016space}, this approximation is equivalent to modeling collisions as a Poisson process. The Poisson assumption that the number of events occurring in non-overlapping time intervals are independent is equivalent to assuming that objects move randomly throughout the shell volume. This assumption is clearly not true, leading to our regularization approach described below. The assumption that the probability of an event is proportional to the length of the interval implies that fragment clouds are dispersed enough, and contain enough fragments, to be considered a continuum. Since our model is solved at annual timesteps while debris clouds evolve at much smaller timescales, this assumption is reasonable for our purposes.

\subsection{Modeling debris growth over the next century}
\label{appendix:debris_unbounded}

As we note in the main text, truly ``unbounded'' growth is unphysical, as collisional activity will reduce the fragments to smaller sizes and objects in LEO will eventually decay due to drag, solar radiation pressure, and other orbital perturbations. However, we follow the existing engineering literature on source-sink evolutionary models of the debris environment in allowing unbounded growth over the next century \citep{talent1992analytic, lewis2009fast, lifson2022many}. An example using empirical data from a collision and the size-energy scaling law may help illustrate the underlying reasoning for this modeling choice.

Consider a collision between two large intact bodies, e.g. an event like the Iridium-Cosmos collision on February 10, 2009. Iridium 33 was an operational US communications satellite ($S$) while Cosmos 2251 was defunct Russian communications satellite ($D$). The table below from \citet{kelso2009analysis} shows the relevant size and mass characteristics of the initial objects and resulting fragments. 

\begin{table}[htbp]
  \centering
  \caption{``Table 1. Pre-Collision Satellite Characteristics.'' from \citet{kelso2009analysis}}
  \begin{tabular}{ccccc}
    \hline
    Satellite & Number of Pieces & Total Volume (m$^3$) & Dry Mass (kg) & Inclination (deg) \\
    \hline
    Iridium 33 & 386 & 3.388 & 556 & 86 \\
    Cosmos 2251 & 927 & 7.841 & 900 & 74 \\
    \hline
  \end{tabular}
\end{table}

The average radii for fragments from Iridium and Cosmos were around 12.8 cm and 12.6 cm, with average masses around 1.44 kg and 0.971 kg. These figures imply that the tracked fragments larger than 10 cm radius account for most of the initial body masses.\footnote{10 cm is also the lower detection limit for sensor systems, raising concerns about censoring. The mass accounting suggests censoring may not be quantitatively large.}

The relation between a uniform sphere's kinetic energy and mass, given density $\rho$ and velocity $v$, is
\begin{equation}
    KE(r) = \frac{1}{2} \underbrace{ \rho \left( \frac{4}{3} \pi r^3 \right)}_{\text{mass = density$\times$volume}} v^2.
\end{equation}
Suppose a fragment of around 10 cm radius is a uniform aluminum sphere---a common assumption in debris modeling given the prevalence of aluminum in satellite construction, e.g. \citet{letizia2016space}. Aluminum has a mass of around 2.7 g/cm$^3$, giving a volume of 4188 cm$^3$ and mass of around 11 kg. Typical objects in low-Earth orbit have velocities on the order of 10 km/s \citep{lifson2022many, d2023novel}.\footnote{Velocity in orbit is linked with altitude---accelerating or decelerating along its forward direction raises and lowers altitude, respectively.} Such a fragment will therefore have a kinetic energy of roughly 550 megajoules , or approximately 131 kg of TNT (energy equivalent of 1 kg of TNT is 4.184 megajoules). This is in the category of ``hypervelocity'' impacts that can shatter the intact object \citep{esa_hypervelocity1}. If the object is like Iridium or Cosmos---not-atypical LEO satellites---it may produce hundreds of fragments.

Since mass scales cubically with object radius, a reduction in average fragment size to 1 cm radius reduces the mass to 0.011 kg, producing an impact energy of 0.55 megajoules---comparable to the force of a hand grenade \citep{esa_hypervelocity2}. Even if it takes tens of collisions with fragments of 1-10 cm radius to overcome shielding on a large intact object, the resulting tens or hundreds of fragments will ensure net growth. To the extent that these objects move in debris ``fields''---which may occur systematically due to orbital mechanics factors, particularly when a larger body is struck by a smaller one, e.g. \citet{oltrogge2022iaa, oltrogge2022comparison, pardini2023short}--- their lethal effects at these and even smaller sizes may be amplified.

Suppose we take 1 cm to be a conservative ``lethal size limit''. How long will it take for collisional activity to reduce a fragment below this limit? Suppose the average cumulative annual collision probability for an arbitrary debris fragment is 25\%---perhaps a high estimate, but again erring on the side of caution. That fragment will go roughly 4 years between collisions. If fragments are reduced to roughly 1 cm radius after only two collisions, it would take about 8 years for that debris fragment and its children to be rendered nonlethal. At 1\% collision probability, the fragment's lethal lifetime is around 200 years.

At 575 km altitude, a large intact object has a residence time (i.e. time before it falls back to Earth due to drag) on the order of 10 years, and a 10 cm fragment has a residence time on the order of a year, for an upper bound on lethal lifetime of around 11 years. At 775 km altitude, the residence times are around 190 years for an intact object and 10 years for a fragment, for an upper bound on lethal lifetime of around 200 years. During their residence times the objects slowly drift downwards, entering lower shells. Most satellites are currently near or above 575 km altitude. Since plausible lethal lifetimes are on the order of relevant residence times, debris are likely to spend most of their lives at lethal sizes. Given a sufficiently large amount of mass at currently-popular altitudes (e.g. 100,000 satellites at 250 kg each spread over 550-800 km altitude), it seems reasonable to consider potential growth to ``unbounded'' levels over the next century.

\subsection{Open access with a finite horizon}
\label{appendix:finite-horizon-orbital-capacity}

We employ an infinite-horizon modeling approach in the general model. However, one may reasonably wonder whether our conclusions regarding the open-access equilibrium are sensitive to this point. In this section we show that a finite-horizon problem with terminal period $T$ (where it either becomes prohibitively costly to use the volume or Kessler Syndrome occurs or both) produces the same equilibrium condition. 

Suppose there exists a final period, $T$, such that the potential launchers will all exit the market. We are agnostic as to why this may be the case, except to note that if such a period exists, it must be that there are no profits to be gained from launching after that period. In the final period, the launcher's value becomes
\begin{equation}
    V_{iT}(S_T,D_T,X_T) = \max_{x_{iT} \in \{0,1\}} \{ (1-x_{iT}) \frac{1}{1+r} V_{iT+1}(S_{T+1},D_{T+1},0) + x_{iT} \left[ \frac{1}{1+r} Q(S_{T+1},D_{T+1}) - F \right]  \}.
\end{equation}

There are two possible cases here for the value of launching in the final period, $\frac{1}{1+r} Q(S_{T+1},D_{T+1}) - F$:
\begin{enumerate}
    \item $\frac{1}{1+r} Q(S_{T+1},D_{T+1}) - F = 0$. In this case the potential launchers are indifferent between launching in the final period or not launching. By backwards induction the equilibrium path up to period $T$ will match the one derived in the general model in the main text, with equation \eqref{eqn:zeroprofit} being the equilibrium condition.
    \item $\frac{1}{1+r} Q(S_{T+1},D_{T+1}) - F < 0$. In this case, firms would prefer not to launch. Optimization by individual launchers therefore implies $V_{iT}(S_T,D_T,X_T) = 0$. This matches equation \eqref{eqn:launcher_zero_profit}, which yields \eqref{eqn:zeroprofit} after some algebra. So again by backwards induction the equilibrium path up to period $T$ will match the one derived in the general model in the main text.
\end{enumerate}

Indeed, it is possible to go one step further: the existence of such a terminal period (where $X_{t} = 0 ~ \forall t \geq T$) is possible if and only if Kessler Syndrome occurs along the equilibrium path.

\begin{prop}
\label{prop:finite-horizon-zero-launch}
    A terminal period $T$ where $\hat{X}_{t} = 0 ~ \forall t \geq T$ can exist for an open-access equilibrium path $\{\hat{X}_{t}\}_t$ if and only if Kessler Syndrome occurs (i.e. $\lim_{t \to \infty} D_t = \infty$) along the open-access equilibrium path.
\end{prop}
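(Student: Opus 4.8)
The plan is to reduce the statement to a comparison between the zero-launch physical dynamics and the critical collision risk $\pi/F - r$. First I would record the corner-solution characterization of the open-access launch rate, using the same zero-profit logic employed elsewhere in this appendix together with the equilibrium condition \eqref{eqn:zeroprofit-rate}: a marginal entrant launches if and only if the collision risk it would face next period lies below $\pi/F - r$, so that $\hat{X}_t = 0$ exactly when $L(S_{t+1}^0, D_{t+1}^0) \geq \pi/F - r$, where $(S_{t+1}^0, D_{t+1}^0) = P_{SD}(S_t, D_t)$ is the zero-launch image under the physical map $P_{SD}$ introduced in the proof of Proposition \ref{prop:OAEovershoot}. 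I would also recall the debris level $D^{S}$ from the proof of Proposition \ref{multipleOASS}, defined by $L(0, D^{S}) = \pi/F - r$, which is the smallest debris stock that chokes off launches once satellites are gone.

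For the ``if'' direction, suppose Kessler Syndrome occurs, i.e. $\lim_{t\to\infty} D_t = \infty$. Since $G$ is increasing in $S$ and $X_t, m \geq 0$, we have $D_{t+1} \geq (1-\delta)D_t + G(0,D_t)$ regardless of the launch rate, and $G(0,D_t) \to \infty$ by unboundedness of $G$. Hence for all large $t$ the zero-launch next-period debris already exceeds $D^{S}$, so $L(S_{t+1}^0, D_{t+1}^0) \geq L(0, D_{t+1}^0) > \pi/F - r$ and by the characterization above $\hat{X}_t = 0$. Because $D_t$ stays arbitrarily large, this holds for all $t$ beyond some $T$, producing the terminal period.

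The converse is where the real work lies. Suppose $\hat{X}_t = 0$ for all $t \geq T$, so the path evolves under $P_{SD}$ alone. Satellites obey $S_{t+1} = S_t(1 - L(S_t,D_t)) \leq S_t$ and are never replenished, so $S_t$ decreases to some $S_\infty$; I would rule out $S_\infty > 0$ by noting that it forces $L(S_t,D_t)\to 0$, which requires $(S_t,D_t)\to(0,0)$ and contradicts $S_\infty>0$ (and at $(0,0)$ risk is below $\pi/F-r$, so launches would resume). Thus $S_t \to 0$ and the debris recursion converges to the one-dimensional map $\phi(D) = (1-\delta)D + G(0,D)$. Permanent inaction forces $\liminf_t D_t \geq D^{S} > 0$, since otherwise the risk would drop below $\pi/F - r$ and launches would restart. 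Using convexity and unboundedness of $G$ with $G_D(0,0) = 0 < \delta$, the map $\phi$ has $0$ as a stable fixed point and at most one positive fixed point $D^{\dagger}$, which is unstable, with $\phi(D) < D$ on $(0, D^{\dagger})$ and $\phi(D) > D$ for $D > D^{\dagger}$. Debris that remains bounded away from $0$ forever cannot reside in the decaying region $(0,D^{\dagger})$, where it would be driven monotonically below $D^{S}$; it must therefore enter the explosive region $D > D^{\dagger}$, whence $D_t \to \infty$, which is exactly membership in the Kessler region $\kappa_X$ of Definition \ref{def:kessler-syndrome}.

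The hard part will be this converse, specifically ruling out a ``stuck at finite-but-high debris with vanishing satellites'' configuration. The two delicate points are (i) passing rigorously to the limiting map $\phi$ as $S_t \to 0$ while controlling the neglected term $G(S_t,D_t) - G(0,D_t)$, and (ii) extracting the threshold structure of $\phi$ (no stable interior rest point) from convexity of $G$, including the non-generic knife-edge in which $D_t$ limits exactly to the unstable fixed point $D^{\dagger}$. I expect to dispatch that knife-edge as a measure-zero, dynamically unstable case, exactly parallel to the treatment of the higher-debris unstable steady state in Proposition \ref{multipleOASS}.
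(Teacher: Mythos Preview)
Your proposal is correct in outline and takes a genuinely different route from the paper on the converse (terminal period $\implies$ Kessler). The paper's argument there is shorter: it asserts directly that along the zero-launch path, monotonicity of $L$ together with $S_{t+1}\leq S_t$ forces $D_{t+1}\geq D_t$, and then uses strict convexity of $G$ to conclude that the increments $D_{t+1}-D_t$ are themselves increasing, so $D_t$ must eventually pass any threshold $D^K$ and diverge. Your route instead drives $S_t\to 0$, reduces the debris dynamics to the autonomous one-dimensional map $\phi(D)=(1-\delta)D+G(0,D)$, extracts its fixed-point structure from convexity and $G_D(0,0)=0$, and then uses the ``permanent inaction'' lower bound $\liminf D_t\geq D^S$ to rule out convergence to the stable rest point $0$. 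What you gain is a cleaner handling of exactly the step that is loose in the paper: the paper's implication ``$L(S_{t+1},D_{t+1})>\pi/F-r$ and $S_{t+1}\leq S_t$ $\Rightarrow$ $D_{t+1}\geq D_t$'' does not follow from monotonicity of $L$ alone (both arguments could drop while $L$ stays above the threshold), so the paper is implicitly leaning on something it does not state. Your fixed-point analysis of $\phi$ sidesteps that issue entirely, at the cost of the two technical points you flag---controlling $G(S_t,D_t)-G(0,D_t)$ as $S_t\to 0$ and dispatching the knife-edge at $D^{\dagger}$---both of which are tractable under the paper's smoothness and convexity assumptions on $G$. For the other direction (Kessler $\implies$ terminal period) your argument and the paper's are essentially the same.
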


\begin{proof}
    The proposition asserts that 
    \begin{equation}
        X_{t} = 0 ~ \forall t \geq T \iff \lim_{t \to \infty} D_t = \infty
    \end{equation}
    
    We first show the $\impliedby$ direction, then the $\implies$ direction. \\

    \textbf{The ``only if'' direction, $X_{t} = 0 ~ \forall t \geq T \impliedby \lim_{t \to \infty} D_t = \infty $:} If $\lim_{t \to \infty} D_t = \infty$, then there is some period $\bar{t}$ such that $D_{t} > D_{\bar{t}}$ for all $t > \bar{t}$. From the law of motion for $D$ and our assumption that $\lim_{t \to \infty} D_t = \infty$, we can see that $D_t$ must be monotonically increasing after $\bar{t}$. So there must exist a period $T \geq \bar{t}$ such that $L(S_t, D_t)F > \pi - rF $ for all $t \geq T$, i.e. where it becomes unprofitable to launch one more satellite at that or any future period. Thus, $X_{t} = 0 ~ \forall t \geq T$. This completes the $\impliedby$ direction. \\

    \textbf{The ``if'' direction, $X_{t} = 0 ~ \forall t \geq T \implies \lim_{t \to \infty} D_t = \infty$:} If $X_t = 0$ for all $t \geq T$ then it must be the case that $\frac{1}{1+r} Q(S_{t+1},D_{t+1}) - F < 0$ for all $t \geq T$, else some firm would find it profitable to launch.  Note that it must be unprofitable to launch at $t$ given that there are \textit{no} launches occurring at $t$. \\
    
    To be explicit in the next steps, we write the satellite and debris stocks with the previous-period aggregate launch rate $X_t$ shown explicitly as an argument, i.e. writing $S_{t+1}(X_t)$ and $D_{t+1}(X_t)$. Along a path $\{ S_{t}(0) \}_{t>T}^{\infty}$, clearly $S_{t+1}(0) \leq S_{t}(0)$. Now, $\frac{1}{1+r} Q(S_{t+1}(0),D_{t+1}(0)) - F < 0$ for all $t \geq T$ implies that $L(S_{t+1}(0),D_{t+1}(0))F > \pi - rF$ for all $t \geq T$. Monotonicity of $L$ and $S_{t+1}(0) \leq S_{t}(0)$ then imply that $D_{t+1}(0) \geq D_t(0)$. \\

    If there exists a threshold $D^K$ such that $\lim_{t \to \infty} D_t(X_t) = \infty$ when $D > D^K$ for any $X_t$, then there are only two possible cases: either $\lim_{t \to \infty} D_t (0) < D^K$, or $\lim_{t \to \infty} D_t (0) \geq D^K$. The first case is a contradiction when $G$ is strictly convex increasing, as each increase in $D_{t+1} - D_t$ must be larger than $D_{t} - D_{t-1}$ so eventually $D_t$ must exceed $D^K$. Only the second case is consistent with the general physical model. This completes the $\implies$ direction.
\end{proof}

Finally, how large is the volume available to be filled? Recent analyses estimate the maximum capacity consistent with stable orbital populations (i.e. no Kessler Syndrome) between 200-900km altitude to be on the order of 1.8 million active satellites, assuming no debris \citep{lifson2022many}. Over the next few decades, the total number of objects slated for launch is expected to be on the order of 80,000 satellites \citep{aerospace_futuresats}. It is unclear whether there is sufficient demand to support hundreds of thousands of satellites, let alone over a million. While we do not think the maximum capacity described in the engineering literature will be realized due to both the externalities described here and in the economic literature and the aforementioned demand limitations, the large capacity available makes the issue seem less one of filling the volume with satellites or debris than one of operating in the volume becoming too costly due to risk.

\section{Calibration details}
\label{appendix:calibration}

\subsection{Data}
We calibrate the economic parameters of our model using data collected by The Space Report \citep{tsr2021} on the annual revenues accruing to each sector of the space economy from 2006-2019. These data have been used in other economic analyses of space and orbit use \citep{wienzierl2018, raoetal2020, crane2020measuring, raoletizia2021}. The data are not ideal for our purpose as they are aggregates covering the entire space sector, but more granular datasets describing specific LEO satellite operators' revenues and costs are not available. To focus on revenues and costs relevant to LEO satellite operators, we use only the variables which are plausibly attributable to LEO satellite activities. We calculate total LEO satellite operator revenues as the sum of the ``Satellite communications'' and ``Earth observation'' variables, and total LEO satellite operator costs as the sum of the ``Ground stations and equipment'', ``Space Situational Awareness'' (SSA),  ``Insurance premiums'', ``Commercial satellite launch'', and ``Commercial satellite manufacturing'' variables. We discard variables representing revenues to the direct-to-home television, GNT (Geolocation, Navigation, and Timing), and satellite radio sectors, as these are provided by satellites in higher orbits beyond LEO. We also exclude suborbital commercial human spaceflight deposits as they are by definition for transit to regions below orbital altitudes (e.g. 50-80 km above mean sea level). Since our data is recorded annually, we set the period length to 1 year.  We display the calculated variables in table \ref{tab:econdata}. Note that these are \emph{not} the revenues and costs accruing specifically to LEO operators---a distinction not possible given our data. Rather, these variables represent a superset of LEO operator revenues and costs, as they necessarily include some geostationary satellites. We describe our strategy to account for this issue during calibration in Appendix \ref{appendix:econ-calibration-strategy}. 
\begin{table}[!htbp]
  \centering
  \caption{Economic data. Figures are in nominal billion USD. Data from \citet{tsr2021} and authors' calculations.}
  \label{tab:econdata}
    \begin{tabular}{ccc}
      \hline \hline
      Year & Maximum total revenues & Maximum total costs \\
           & attributable to all operators & attributable to all operators \\
           & potentially using LEO & potentially using LEO \\
      \hline
      $2006$ & $13.800$ & $80.840$ \\
      $2007$ & $16.368$ & $92.956$ \\
      $2008$ & $18.104$ & $85.371$ \\
      $2009$ & $18.695$ & $69.270$ \\
      $2010$ & $19.570$ & $68.460$ \\
      $2011$ & $21.424$ & $83.853$ \\
      $2012$ & $22.747$ & $93.779$ \\
      $2013$ & $23.683$ & $108.199$ \\
      $2014$ & $24.002$ & $127.567$ \\
      $2015$ & $25.884$ & $87.222$ \\
      $2016$ & $26.087$ & $89.201$ \\
      $2017$ & $26.545$ & $95.857$ \\
      $2018$ & $28.420$ & $99.930$ \\
      $2019$ & $27.320$ & $119.160$ \\
      \hline
    \end{tabular}
\end{table}

We calibrate physical parameters of our model using a kinetic gas approximation of orbital mechanics and data from DISCOS \citep{letiziaetal2017, DISCOS}. These data describe the launch traffic, active satellites, and tracked debris objects (i.e larger than 10 cm diameter) in the 600-650 km shell over the 2006-2020 period. These data aggregate over different types of operators (e.g. commercial operators, civil government operators, defense operators). We display these data in table \ref{tab:phys_data}, along with the collision probability calculated from the kinetic gas approximation assuming satellite operators avoid 99\% of all collisions between satellites and 95\% of all collisions between satellites and tracked debris. Letting the avoidance success rates be $\kappa_{SS}$ and $\kappa_{SD}$, the probability of an unavoidable collision becomes 
\begin{equation}
    L(S,D) = (1-\kappa_{SS})(1 - e^{-\alpha_{SS}S}) + (1 - \kappa_{SD})(1 - e^{-\alpha_{SD}D}) - (1-\kappa_{SS})(1 - \kappa_{SD})(1 - e^{-\alpha_{SS}S})(1 - e^{-\alpha_{SD}D}).
\end{equation}

Many ostensibly-non-commercial satellites are operated as joint ventures with commercial enterprises and many commercial satellite operators serve primarily civil government or defense customers, so we do not separate the satellite data by operator type. Further, since all satellites contribute to debris and collision probability regardless of their operator type, non-commercial operators' satellites ought to be included in the state vector. Non-commercial operators may also contribute to the observed ``occupancy elasticity'' (described precisely in the following section), further complicating efforts to properly disentangle payoffs to different operator types from the available data.

The DISCOS physical data also provide object characteristics such as mass and cross-sectional area, which are necessary for the kinetic gas approximation. We describe the details of the kinetic gas approximation of orbital mechanics in Appendix \ref{appendix:physical_model_calibration}.

\begin{table}[!htbp]
  \centering
  \caption{Orbital traffic in the 600-650 km shell. Collision probability is rounded. Data from \citet{DISCOS} and authors' calculations.}
  \begin{tabular}{ccccc}
    \hline \hline
    Year & Satellites launched & Active satellites & Tracked debris & Collision probability \\
    & & satellites & & \\
    \hline
    $2006$ & $15$ & $25$ & $211$ & $2.95 \times 10^{-6}$ \\
    $2007$ & $84$ & $31$ & $275$ & $3.84 \times 10^{-6}$ \\
    $2008$ & $168$ & $47$ & $273$ & $3.85 \times 10^{-6}$ \\
    $2009$ & $72$ & $43$ & $393$ & $5.48 \times 10^{-6}$ \\
    $2010$ & $156$ & $53$ & $444$ & $6.20 \times 10^{-6}$ \\
    $2011$ & $30$ & $56$ & $411$ & $5.76 \times 10^{-6}$ \\
    $2012$ & $73$ & $53$ & $429$ & $6.00 \times 10^{-6}$ \\
    $2013$ & $213$ & $64$ & $454$ & $6.37 \times 10^{-6}$ \\
    $2014$ & $261$ & $97$ & $484$ & $6.87 \times 10^{-6}$ \\
    $2015$ & $175$ & $122$ & $495$ & $7.09 \times 10^{-6}$ \\
    $2016$ & $15$ & $114$ & $494$ & $7.05 \times 10^{-6}$ \\
    $2017$ & $26$ & $122$ & $525$ & $7.49 \times 10^{-6}$ \\
    $2018$ & $36$ & $139$ & $506$ & $7.28 \times 10^{-6}$ \\
    $2019$ & $33$ & $155$ & $543$ & $7.83 \times 10^{-6}$ \\
    $2020$ & $9$ & $158$ & $626$ & $8.97 \times 10^{-6}$ \\
    \hline
  \end{tabular}
  \label{tab:phys_data}
\end{table}

\subsection{Economic calibration}
\label{appendix:econ-calibration-strategy}

To calibrate our economic model, we make three modifications to the open-access equilibrium condition in equation \eqref{eqn:zeroprofit-rate}. First, we allow the per-period satellite payoff and cost to vary over time, i.e $\pi \to \pi_t$ and $F \to F_t$. This changes the equilibrium condition to 
\begin{align}
\label{OAeqm-tv}
\pi_{t+1} &= (1+r)F_t - (1 - L(S_{t+1},D_{t+1}))F_{t+1} \\
\implies L(S_{t+1},D_{t+1})  &= 1 + \frac{\pi_{t+1}}{F_{t+1}} - (1+r)\frac{F_t}{F_{t+1}}.
\label{OAeqm-tv-ratio}
\end{align}

This form is similar to the one described in equation \eqref{eqn:zeroprofit-rate} but for the time subscripts and term $1 - (1+r)\frac{F_t}{F_{t+1}}$. This term represents capital gains accruing to a period $t$ launcher from increases in the cost of building and launching a satellite in period $t+1$. We abstract from operators' expectations over economic variables and assume they perfectly forecast all $t+1$ objects.

Second, we allow the per-period satellite payoff to depend on the current stock of satellites in orbit, i.e $\pi_t \to p_t(S_t)$. We use a constant elasticity form with exponential factor productivity growth, $p_t(S_t) = \pi e^{at} (1 + \eta) S_t^{\eta}$, where $\eta$ is the ``orbital occupancy elasticity of per-period satellite payoffs''. We assume that the downstream market for satellite outputs is competitive such that operators do not internalize $\pdv{p_t}{S_t}$. The equilibrium condition becomes 
\begin{align}
L(S_{t+1},D_{t+1}) &= 1 + \frac{p_{t+1}(S_{t+1})}{F_{t+1}} - (1+r)\frac{F_t}{F_{t+1}}.
\end{align}

Third, we incorporate exogenous limited satellite lifespans to allow for natural depreciation and replacement of satellites. Specifically, we assume each satellite is replaced with probability $\mu$ each period. We calibrate this value explicitly to simulate object stocks (described in the Appendix \ref{appendix:physical_model_calibration}); for now, we leave this to be adjusted in the regression-based calibration approach described below. The final equilibrium condition for our simulations is
\begin{align}
L(S_{t+1},D_{t+1}) &= 1 + \frac{1}{1-\mu}\frac{p_{t+1}(S_{t+1})}{F_{t+1}} - \frac{1+r}{1-\mu}\frac{F_t}{F_{t+1}}.
\end{align}

To simulate future periods under different returns growth rate and occupancy elasticity assumptions, we estimate the growth rate of total LEO satellite operator costs. We estimate
\begin{align}
    \log(F_t) &= \eta_0^{F} + \eta_1^{F} t + \nu^{F}_t,
    \label{reg:growth_costs}
\end{align}

where $\log(F_t)$ is the natural log of total LEO satellite operator costs, $t$ is the year, the growth rate (the object of interest) is $g = \exp(\eta_1^{F}) - 1$, and the regression error is $\nu^{F}_t$. The estimated growth rate is roughly 2.5\%, which is consistent with \cite{crane2020measuring}.

There are two final steps to our procedure: ensuring consistency between the occupancy elasticity and factor productivity parameters, and accounting for unobserved variables. To ensure consistency between the assumed elasticity and implied orbital slot factor productivity and match the final observed value of LEO-using sector revenues (\$27.32b in 2019, see table \ref{tab:econdata}), we calibrate the factor productivity term $\pi$ in equation \ref{eqn:time-varying-payoff}. Specifically, letting $K$ be the observed value to match for each assumed elasticity value $\eta_j$, setting $t = 0$ and $S_0$ to the shell-specific initial condition ($S_0 = 158$), the factor productivity term $\pi_j$ satisfies
\begin{equation}
    \pi_j = \exp(\log(K) - \log(1 + \eta) + \eta \log(S_0)).
\end{equation}
 
Finally, as mentioned in the previous section, using maximum total sector revenues and costs directly from the data in table \ref{tab:econdata} as though the data reflects only operators in the 600-650 km shell is challenging for two reasons. First, the data in table \ref{tab:econdata} cover all satellite operators---our variable selection step is the only thing restricting the set of operators included in the data. Even if we were successful in removing all operators outside of LEO through variable selection when calculating total LEO operator revenues and costs, the revenue and cost variables will still include operators outside the 600-650 km shell. The data aggregation implies an unobservable ``shell-share'' coefficient, $s \in [0,1)$, scaling observed aggregate revenues and costs to reflect only the portion attributable to satellites in the 600-650 km shell. Second, theory predicts that the discount rate used by operators is a critical parameter in determining LEO use, but this parameter is unobserved.

Fortunately, equation \ref{OAeqm-tv-ratio} offers a way to address both challenges. Letting $\pi_t$ be the total LEO satellite operator revenues and $F_t$ be the total LEO satellite operator costs from table \ref{tab:econdata}, and $L_t$ be the collision probability shown in table \ref{tab:phys_data}, we estimate the following regression on data from 2006-2019:
\begin{align}
    L_t &= \gamma_0 + \gamma_1 \frac{\pi_{t}}{F_{t}} + \gamma_2 \frac{F_{t-1}}{F_{t}} + e_t.
    \label{eqn:eqm_cond_calibration}
\end{align}

The estimated ``adjustment coefficients'' $(\gamma_0, \gamma_1, \gamma_2)$ reflect the shell-share $s$, the discount rate $r$, as well as the satellite turnover $\mu$ (though they are not separately identified). We use the adjustment coefficients to simulate the model in future periods given projected growth in $\pi_t$ and $F_t$. If the shell-share coefficients (labeled $s$ in the preceding discussion) are common to revenues and costs and time-invariant (or ``close'' and ``slowly-varying''), they will (almost) cancel out of the ratios we use in equation \eqref{eqn:eqm_cond_calibration} and our estimated adjustment coefficients would only reflect satellite turnover and discounting.\footnote{This is not the only interpretation of our estimates---as described in \cite{raoetal2020}, the adjustment coefficients may also reflect unmodeled frictions in satellite launching and operation.}

\subsection{Physical calibration}
\label{appendix:physical_model_calibration}

Here we describe key equations and the ridge regression approach to correcting for non-random object paths. Readers interested in detailed explanations of the physics-based elements of our calibration approach, including derivations and validation, are referred to \cite{letizia2016space}.

We require physically-appropriate values for the following parameters: $\delta$, $\mu$, $\alpha_{SS}$, $\alpha_{SD}$, $\alpha_{DD}$, $\beta_{SS}$, $\beta_{SD}$, $\beta_{DD}$. Calibrating $\delta$ and $\mu$ (the mean debris decay rate and mean satellite active lifetime time) are the most straightforward. We take data from ESA regarding the residence time $\delta_{r}$ of debris objects and lifetime of active satellites $\mu_r$ at different altitudes \citep{DISCOS}. We set the decay rate for debris objects as $\delta = \min\{ 1 - \delta_r^{-1}, 1\}$ and the natural turnover rate for satellites as as $\mu = \min\{ 1 - \mu_r^{-1}, 1\}$. For both parameters we calculate share-weighted averages across object types within the category to reflect the effects of heterogeneous object dimensions, e.g. $\delta$ reflects the weighted average of decay times for rocket bodies, fragments, and intact derelict objects.\footnote{Cross-sectional area and mass are key determinants of orbital residence times. Both can vary significantly within object classes.} We calculate the share-weighted decay rate in the 600-650 km shell is roughly 7\% every year.\footnote{At these altitudes, the decay rates from higher shells rapidly approach zero. For the 650-700 km shell, the share-weighted average decay rate is roughly 5\%, and for the 700-750 km shell the decay rate is 3\%. We therefore neglect objects entering the 600-650 km shell from higher altitudes as they are unlikely to significantly change our results.} The share-weighted average active LEO satellite lifetime is roughly $6.71$ years. This implies roughly $15\%$ of active satellites in LEO turn over every year on average, i.e the fraction remaining is $1 - \mu = 0.85$.

We calibrate the parameters of $L$ and $G$ in two steps. First, we compute the collision probability and new fragment formation parameters using a kinetic gas approximation similar to the one used in \cite{letiziaetal2017} and \cite{letiziaetal2018} as well as analytical fragmentation formulas from \cite{krisko2011proper} and \cite{letizia2016space} calibrated to the NASA standard breakup model. These formulas require data on object mass and cross-sectional area, which we obtain from DISCOS. The DISCOS parameters describe average values across different types of active satellites and debris objects, so we compute share-weighted averages for active satellites and debris objects. The kinetic gas approximation implies that objects within the shell are moving randomly, leading to our next step. Second, to adjust for the non-random motion of objects in the shell, we regularize the expected fragmentation components of $G$ by estimating a ridge regression on the debris law of motion using data in table \ref{tab:phys_data} and the analytically-computed parameter values. We also use this second step to jointly estimate the launch debris parameter $m$ from the ridge regression. We describe our procedure for calibrating the physical model parameters in more detail in Appendix section \ref{appendix:physical_model_calibration}. Table \ref{tab:calibration_summary} summarizes the calibrated parameter values.

\begin{table}[!htbp]
\caption{Summary of calibrated parameter values for the 600-650 km shell. Values are rounded to the nearest integer or second non-zero decimal place.}
\label{tab:calibration_summary}
\begin{tabular}{lll}
\\[-1.8ex]\hline 
\hline \\[-1.8ex] 
Parameter                  & Value           & Notes    \\  
\hline \\[-1.8ex] 
$\eta_1^{F}$    & 0.025           & Total costs growth parameter. Standard error is 0.009.                       \\
$\gamma_0$                 & 3.35e-06        & Equilibrium adjustment coefficient 1 (open-access capital gains).      \\
$\gamma_1$                 & 2.22e-05        & Equilibrium adjustment coefficient 2 (gross satellite rate of return). \\
$\gamma_2$                 & -2.67e-06       & Equilibrium adjustment coefficient 3 (open-access capital gains).      \\
$\delta$                   & 0.074           & Annual fraction of debris decaying to lower shell.                                                        \\
$\mu$                      & 0.15            & Annual active satellite turnover rate  \\
$\alpha_{SS}$              & 2.73e-07        & Satellite-satellite collision rate parameter.                                    \\
$\alpha_{SD}$              & 2.73e-07        & Satellite-debris collision rate parameter.                                       \\
$\alpha_{DD}$              & 2.78e-07        & Debris-debris collision rate parameter.                                          \\
$\kappa_{SS}$              & 0.99            & Fraction of satellite-satellite collisions successfully avoided.                 \\
$\kappa_{SD}$              & 0.95            & Fraction of satellite-debris collisions successfully avoided.                    \\
$\tilde{\beta}_{SS}$       & 1,800          & Expected number of fragments from satellite-satellite collision. (regularized).   \\
$\tilde{\beta}_{SD}$       & 333           & Expected number of fragments from satellite-debris collision. (regularized).      \\
$\tilde{\beta}_{DD}$       & 327             & Expected number of fragments from debris-debris collision (regularized).         \\
$m$                        & 0.013           & Expected number of launch debris remaining in shell after 1 year (regularized). \\
\hline \\[-1.8ex] 
\end{tabular}
\end{table}

To calculate the intrinsic collision probabilities $\alpha_{SS}, \alpha_{SD}, \alpha_{DD}$, we start with data regarding object cross-sectional areas for active satellites (commercial, military, civil government, and other) and intact debris objects. We assume debris fragments are uniform aluminium spheres of diameter 10 cm, and treat all other objects as uniform spheres as well. We compute the cross-sectional areas of active satellites and debris within each shell as share-weighted averages over 2006 2019 across the types of objects within each class, e.g. if 20\% of the debris objects are intact and 80\% are fragments we calculate the area as $0.2 * (intact~area) + 0.8 * (fragment~area)$. Under these assumptions the rate at which a reference object moving randomly at speed $s$ in a closed space of volume $V$ is struck by an object of cross-sectional area $a$ is
\begin{equation}
    \frac{sa}{V},
\end{equation}

where the volume is determined by the altitude and our assumption of that the space is a spherical shell, and the speed is determined by the altitude, the Earth's gravitational constant, and our assumption that the objects are uniform spheres.

To calculate the unadjusted fragmentation rates, we use data on average object masses from ESA along with a formula found to fit the high-fidelity NASA standard breakup model described in \cite{krisko2011proper}. Letting the mass of the object struck be $M$, and assuming the object is shattered into uniform 10 cm spheres, the number of fragments from a catastrophic collision $n$ is

\begin{equation}
    n = 0.1 M^{0.75} 0.1^{-1.71}.
\end{equation}

The only steps remaining are to adjust our estimate of the expected number of fragments from collisions for the non-random motion of objects in the shell, and to set the value of the launch debris parameter $m$. ODE-based engineering models of the debris environment use such adjustment coefficients based fitting the ODE model to results from many computationally-costly runs of high-fidelity orbital environment models, e.g. as in \cite{somma2017statistical, somma2019adaptive}. This approach would be even costlier for our model, as the launch rate is endogenous, and would not provide a useful estimate of the launch debris parameter $m$. We instead perform the adjustment and estimate $m$ jointly using historical data and ridge regression, a regularization technique used to improve out-of-sample predictive performance at the expense of in-sample fit. Ridge regression achieves this goal by exploiting the bias-variance trade-off, shrinking parameter values toward zero in exchange for reduced prediction variance \citep{hoerl1985practical, zou2005regularization}.

Since satellites are specifically coordinated to reduce collisions, the adjustment for non-random motion should involve shrinking the expected number of fragments from a collision (with the expectation taken over the probability of a collision) toward zero. Ridge regression achieves this goal. Additionally, ridge regression is often used when the number of variables is ``large'' relative to the number of observations or when parameter estimates are known to be noisy due to (for example) high degrees of collinearity. Our model and data satisfy the former condition (with 4 parameters to estimate from 14 observations), and our physical calibration approach (specifically the assumption that all objects are uniform spheres) causes collinearity in our collision probability values.  Since our collision model prescribes the functional form of the collision probability as $(1 - \exp(-{\alpha}_{jk} k))$, the effect of non-random motion on new debris growth cannot be separately identified from $\alpha_{jk}$ and $\beta_{jk}$. This is convenient for our regression-based adjustment, since it allows us to pose the ridge regression as a linear model. Specifically, letting $\bar{x}$ denote a physically-calibrated parameter value, we estimate the following regression:
\begin{align}
\hspace{-1.75cm}
    D_{t+1} - (1 - {\delta})D_t = \rho_{SS} {\beta}_{SS} (1 - \exp(-{\alpha}_{SS} S_t)) + \rho_{SD} {\beta}_{SD} (1 - \exp(-{\alpha}_{SD} S_t)) + \rho_{DD} {\beta}_{DD} (1 - \exp(-{\alpha}_{DD} D_t)) + m + \nu^D_t,
\end{align}

where $\rho_{SS}, \rho_{SD}, \rho_{DD}, m$ are parameters to be estimated and $\nu^D_t$ is the error term. The final regularized estimates of the fragmentation and launch debris parameters are shown in table \ref{tab:calibration_summary} as $\tilde{\beta}_{SS}$, $\tilde{\beta}_{SD}$, $\tilde{\beta}_{DD}$, $m$.

\section{Algorithms for equilibrium and optimum}

To describe how we generate initial guesses for the social planner's problem, it is useful to formally state a finite-horizon sequence version of the planner's problem. Letting $T$ be the final period, the planner's finite-horizon sequence problem is
\begin{align}
    \max_{\{X_t, S_{t+1}, D_{t+1}\}_{t=0}^{T}}& S_t Q(S_t,D_t,X_t) + \frac{1}{1+r} \sum_{\tau=t}^{T-1} \frac{1}{1+r}^{\tau - t - 1}X_{\tau} \left(\frac{1}{1+r} Q(S_{\tau+1},D_{\tau+1},X_{\tau+1}) - F \right) \label{prog:planner-sequence} \\
\text{s.t. }    
    Q(S_t,D_t,X_t) &= \pi + \frac{1}{1+r}(1 - L(S_t,D_t)) Q(S_{t+1}, D_{t+1}, X_{t+1}) \text{ if } t<T \nonumber\\
    Q(S_T,D_T,X_T) &= \pi \nonumber\\
    S_{t+1} &\leq S_t(1-L(S_t,D_t)) + X_t \nonumber\\
    D_{t+1} &\geq D_t(1-\delta) + G(S_t,D_t) + m X_t\nonumber\\
    X_t &\in [0,\bar{X}] ~~ \forall t \nonumber\\
    S_{t+1} & \geq 0, D_{t+1} \geq 0 \nonumber\\
    S_0 &= s_0, D_0 = d_0. \nonumber
\end{align}

The guess generation in algorithm \ref{algo:planners-problem} uses a result from \citet{easley1981stochastic}, that optimal plans generated from solving a finite horizon problem with sufficiently-large $T$ closely approximate infinite-horizon optimal plans. Algorithm \ref{algo:planners-problem} describes our solution procedure more precisely.

\begin{algorithm}
	Generate a sparse initial grid, $\mathcal{G}_0$, over $(S,D) \in \mathbb{R}_{[0,a]}\times\mathbb{R}_{[0,b]}, ~~ a,b>0$.
	
	At each point on $\mathcal{G}_0$, solve program \ref{prog:planner-sequence} with $T$ equal to a large number. Larger is better; we use $T=150$, which balances compute time with guess quality. This produces an initial guess on a sparse grid, $\tilde{v}_0$.
	
	Using linear interpolation, ``infill'' $\tilde{v}_0$ (defined on $\mathcal{G}_0$) to $v_0$ (defined on $\mathcal{G}_1$). $\mathcal{G}_1$ has the same boundaries as $\mathcal{G}_0$ ($(S,D) \in \mathbb{R}_{[0,a]}\times\mathbb{R}_{[0,b]}$) but contains more points. This gives an initial guess defined on a denser grid.
	
	Set $\delta$ to some large number (we use 10) and $\epsilon$ to some small number (we use 1\% of the mean value of $v_0$). Set $i=0$ and $W_0(S,D) = v_0$.
	
	\While{$ \delta > \epsilon$}{
		At each node in $\mathcal{G}_1$, solve program \ref{prog:planner-dp} with $W(S_{t+1},D_{t+1}) = W_i(S_{t+1},D_{t+1})$. Label the value function obtained as $W_{i+1}(S,D)$, defined over $\mathcal{G}_1$. We use linear interpolation to compute $W_i(S_{t+1},D_{t+1})$ when $(S_{t+1},D_{t+1})$ is between nodes of $\mathcal{G}_1$.
		
		$\delta \leftarrow || W_i(S,D) - W_{i+1}(S,D)||_{\infty}$.
		
		i $\leftarrow$ i+1	
	}
	\caption{Solve the planner's problem}\label{algo:planners-problem}
\end{algorithm}

Generating the open-access policy function is much simpler. At each node on a grid over $S$ and $D$ values (e.g. $\mathcal{G}_1$ as in Algorithm \ref{algo:planners-problem}), we solve the open-access condition
\begin{align}
    \pi - rF - L(S'(X,S,D),D'(X,S,D))F = 0
\end{align}
for the open-access launch rate $X$.

To generate the phase diagrams, we use solved policy functions $X$ to compute the evolution of the satellite and debris stocks at each grid node. More precisely, we compute $dS = (S'(X,S,D) - S)/h$ and $dD = (D'(X,S,D) - S)/h$ for a fixed positive value $h$. The value of $h$ is chosen to make the plotting more stable; we use $h=10$, but other values yield similar results. The nullclines are plotted as the zero-isoclines of $dS$ and $dD$.

\end{appendices}

\end{document}